\documentclass[aps,prx,10pt,longbibliography,superscriptaddress,twocolumn,showpacs,oneside]{revtex4-2}

\usepackage{xcolor}
\usepackage{wrapfig}
\usepackage{graphicx}
\usepackage[caption=false]{subfig}
\usepackage{float}
\usepackage{amsmath,amssymb,amsthm,mathrsfs,mathtools,amsfonts,physics,bm,bbm,booktabs}
\usepackage{mathdots}
\usepackage{dsfont}
\usepackage[version=4]{mhchem}

\usepackage[colorlinks=true, linktocpage=true]{hyperref}
\definecolor{linkblue}{HTML}{0057B8} 
\hypersetup{allcolors = linkblue}

\usepackage{tikz}
\usetikzlibrary{shapes.geometric, arrows}

\usepackage{algorithm}
\usepackage{algpseudocode}
\usepackage{setspace}

\tikzstyle{Atensor} = [rectangle, rounded corners, minimum width=1cm, minimum height=1cm, text centered, draw=black]
\tikzstyle{Gtensor} = [circle, minimum width=1cm, minimum height=1cm, text centered, draw=black]
\tikzstyle{arrow} = [thick,->]

\newtheorem{theorem}{Theorem}

\newtheorem{lemma}{Lemma}[section]

\newtheorem{conjecture}{Conjecture}
\usepackage{mleftright}
\newtheorem{itheorem}{Theorem}
\newtheorem{iconjecture}{Conjecture} 
\newtheorem{atheorem}{Theorem}

\usepackage{soul}

\DeclareRobustCommand{\circlednum}[1]{%
  \raisebox{0.15ex}{\textcircled{\raisebox{-0.25ex}
  {#1}}}
}

\usepackage{bbm}
\usepackage{braket}
\definecolor{THc}{rgb}{0.9,0.3,0.2}

\usepackage{enumitem}

\begin{document}
\title{Emergent universality in Kraus maps of quantum chaotic many-body dynamics}

\author{Qi Camm Huang}
\email{qi.huang@u.nus.edu}
\affiliation{Department of Physics, National University of Singapore, Singapore 117551}
\affiliation{Centre for Quantum Technologies, National University of Singapore, Singapore 117543}

\author{Wai-Keong Mok}
\affiliation{Institute for Quantum Information and Matter,
California Institute of Technology, Pasadena, CA 91125, USA}

\author{Tobias Haug}
\affiliation{Quantum Research Center, Technology Innovation Institute, Abu Dhabi, UAE}

\author{Wen Wei Ho}
\email{wenweiho@nus.edu.sg}
\affiliation{Department of Physics, National University of Singapore, Singapore 117551}
\affiliation{Centre for Quantum Technologies, National University of Singapore, Singapore 117543}

\date{\today}

\begin{abstract}
    Recent studies of ``deep thermalization'' have revealed universal physics in quantum many-body dynamics beyond 
    equilibration towards Gibbs states:  maximally random quantum state ensembles can emerge on local subsystems, generated by measurements on their complement.
    In this work, we further identify a new form of 
    universality exhibited in the ``finer fingerprints'' of quantum dynamics for local subsystems, induced by global unitary time-evolution.
    Specifically, we consider the \textit{projected Kraus ensemble}, an ensemble of Kraus operators obtained by unraveling the quantum channel on a small subsystem with respect to knowledge of the classical configurations of its complement (the ``bath'').
    Our central result is a one-parameter random matrix Ansatz that captures the ensemble's emergent statistical behavior along particular scalings of space and time, valid for generic 1D circuit dynamics without conservation laws: the ensemble is described by the product of a complex Ginibre random matrix and an independent log-normal random real scalar.
    The former encodes information scrambling within the subsystem, captured by rotational invariance of the Ginibre measure, whereas the latter encodes fluctuations in the probabilities that each bath configuration is seen, arising from locality of the underlying dynamics. 
    Our Ansatz can be established in the special cases of dynamics generated by global Haar random unitaries and dual-unitary circuits, while we motivate it for generic circuits using arguments of spacetime duality and the multiplicative ergodic theorem on long products of spatial transfer matrices.
    Extensive numerical simulations for random and Floquet circuit models further verify these predictions. 
    This universality in Kraus operators also provides a  microscopic, state-agnostic mechanism for deep thermalization in generic 1D quantum circuit dynamics, 
    and has implications for local quantum information recoverability involving classical side information tied to knowledge of the bath.
\end{abstract}
\maketitle

\begingroup
\small
\linespread{0.972}\selectfont
\tableofcontents
\endgroup

\section{Introduction}
\label{sec:intro}

    \begin{figure*}[!t]
    \centering
    \includegraphics[width=0.95\linewidth]{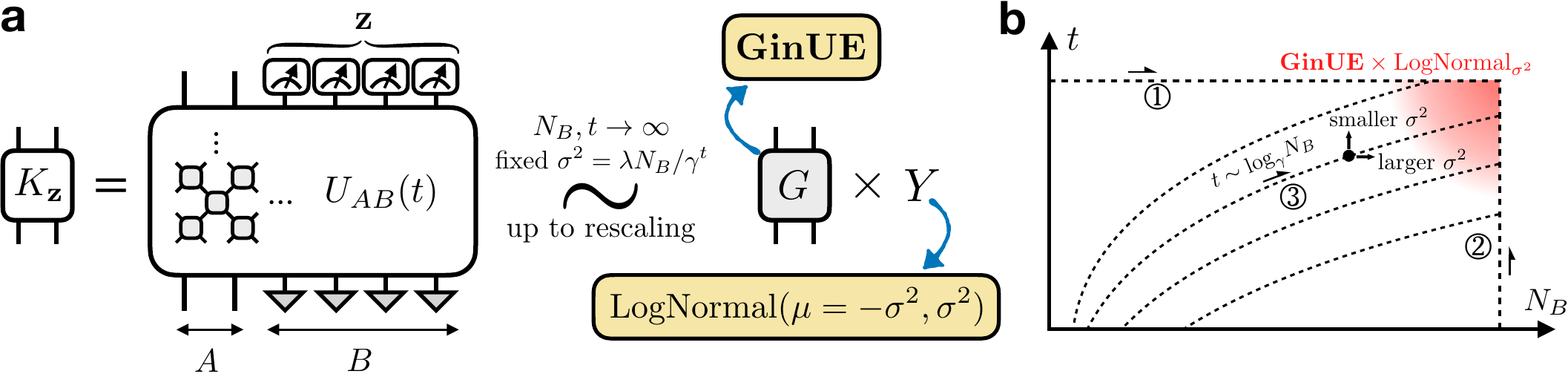}
    \caption{(a)~Emergent universality of Kraus maps. In this work, we focus on 1D circuit dynamics, fixing initial state on $B$ to be a simple product state, and the final measurement basis on $B$ to be the computational basis. A Kraus operator $K_\mathbf{z}$ is obtained by partially projecting the global unitary $U_{AB}(t)$ as in Eq.~\eqref{eq:q-channel-kraus-decomposition}. Our central result (Conjecture~\ref{conj:1-informal}) is that, up to a rescaling factor $\sqrt{d}=\sqrt{d_Ad_B}$ where $d_A$ ($d_B$) is the Hilbert space dimension of $A$ ($B$), the Kraus operators can be statistically described by the product of a standard complex random Ginibre matrix ($\mathbf{GinUE}$) and an independent log-normal real random scalar ($\mathrm{LogNormal}$) in certain spacetime joint scaling limits.
    The log-normal part is parametrized by $\sigma^2 = {\lambda N_B}/{\gamma^t}$, where $\lambda\ge0$ and $\gamma > 1$ are circuit-dependent non-universal constants.
    (b)~Spacetime diagram of bath size~($N_B$) and evolution time~($t$) depicting various scaling limits. We highlight three scaling limits, each corresponding to a theorem/conjecture stated in the main results. \textbf{\circlednum{1}}~{\!\!Long-time} limit ($t\rightarrow\infty$) first, corresponding to Theorem~\ref{thm:1-informal}, where we study the projected Kraus ensemble for global Haar random unitaries, effectively taking the long-time limit first before scaling up the bath size $N_B$, thus $\sigma^2=0$. \textbf{\circlednum{2}}~{\!\!Thermodynamic} limit ($N_B\rightarrow\infty$) first, corresponding to Theorem~\ref{thm:2-informal}, where we consider exactly solvable dual-unitary (DU) circuits and take the thermodynamic limit (TDL) first before taking the long-time limit. In this special case, $\lambda=0$ and hence $\sigma^2=0$. \textbf{\circlednum{3}}~{\!\!Joint} limit $N_B, t\rightarrow\infty$ with $t=\log_\gamma(N_B) + c$ for arbitrary constant $c$, corresponding to Conjecture~\ref{conj:1-informal}, where we consider a generic 1D circuit following joint scaling limit $N_B, t\rightarrow\infty$ with $\sigma^2 = \lambda N_B/\gamma^t$ held fixed.
    We conjecture and provide evidence that the projected Kraus ensemble follows the Ansatz Eq.~\eqref{eq:1}, a family of universal distributions parametrized by $\sigma^2$. {The top-right region schematically indicates the parametric regime where the universal Ansatz holds.}
    }
    \label{fig:concept}
    \end{figure*}

Thermalization is a ubiquitous fate of quantum chaotic many-body systems: quantum information initially localized on a small subsystem is typically scrambled into nonlocal degrees of freedom, such that the subsystem relaxes to an equilibrium Gibbs state determined only by globally conserved quantities such as energy or charge. This universal behavior underpins quantum statistical mechanics, and provides a mechanism for how irreversible equilibration arises despite globally reversible evolution in isolated quantum systems~\cite{deutsch_quantum_1991, srednicki_chaos_1994, Rigol_thermalization_2008, Dalessio_from_2016}.

Recently, there has been a surge of interest in studying novel physics that arises in the equilibration dynamics of quantum chaotic many-body systems beyond ``regular'' thermalization~\cite{Ares_mpemba_2025, Turkeshi_mpemba_2025, Mueller_mpemba_2026, pappalardi_eigenstate_2022, Pappalardi_full_2025, Fava_designs_2025, jonay_two-stage_2024, Jonay_two-stage_2025}. 
\textit{Deep thermalization} constitutes a prominent example of such physics: 
it describes the universal emergence of maximally random quantum state ensembles on a local subsystem upon measuring the complementary subsystem, which can be viewed as the ``bath''~\cite{cotler_emergent_2023, mark_maximum_2024, lucas_fermions_2023, liu_cv_2024, bejan_matchgate_2025}. Specifically, each state in the ensemble is a post-measurement state on the local subsystem $A$ upon projectively measuring the bath $B$ in some simple local basis, and deep thermalization posits that these states are distributed 
in a maximally entropic fashion shaped only by global physical constraints~\cite{mark_maximum_2024, yu_mixed_2025, Sherry_deep_2026, lucas_fermions_2023, liu_cv_2024, bejan_matchgate_2025, chang_deep_2025, liu_coherence-induced_2025, mcginley_scrooge_2025, mok_nature_2026}.
Deep thermalization represents a novel concept of equilibration going beyond regular thermalization of local observables as it captures the local equilibration {\it conditioned} on certain information of the bath~\footnote{Indeed, the ensemble of post-measurement states can be understood as a physically motivated unraveling of the local reduced density matrix where each state is tied to knowledge of a particular classical configuration of the bath; the reduced density matrix is simply the average state of the ensemble where such knowledge is erased.},
and the emergent universal randomness has found myriad applications in quantum information science, including
ancilla-assisted shadow tomography~\cite{tran_measuring_2023, McGinley_shadow_2023}, benchmarking of quantum simulators using chaos~\cite{mark_benchmarking_2023, shaw_benchmarking_2024}, as well as generation, characterization, and certification of quantum resources~\cite{mok_optimal_2025, Varikuti_resources_2025, Du_certifying_2026, Feng_resource_2026}.

In this work, we identify a new form of universality exhibited in the \textit{dynamical maps} of quantum chaotic many-body systems themselves, focusing on one-dimensional (1D) local quantum circuits with no conserved quantities. Specifically, we consider the quantum channel $\mathcal{N}$ acting on the local subsystem $A$ generating its dynamics, which is induced by a global unitary time-evolution $U_{AB}$ on $A\cup B$. We study the ensemble $\mathcal{K}$ of Kraus operators $K_\mathbf{z}$ underlying $\mathcal{N}$, which we refer to as the \textit{projected Kraus ensemble}~\footnote{By analogy with the notion of \textit{projected ensemble} of quantum states, here we collect the Kraus operators (or, by a slight abuse of terminology, dynamical maps/Kraus maps, which we use interchangeably in this work).},
where each Kraus map is associated with knowledge of a particular classical configuration $\mathbf{z}$ of the bath $B$, see Fig.~\ref{fig:concept}(a).
This is a setting closely related to and motivated by deep thermalization, which concerns universality within local regions of quantum many-body {\it states} conditioned on knowledge of the bath; here we are instead interested in whether the {\it dynamical processes} on local regions themselves may also exhibit universal structure, upon conditioning too on the additional knowledge of the bath~\footnote{We note that there have been many studies on the properties of the dynamical map on a local subsystem induced by global unitary evolution, termed the ``influence matrix'' or ``process tensor''; see~\cite{ODonovan_diagnosing_2026} for a recent work on projected ensembles of process tensors as fine-grained probes of quantum chaos.
While the conceptual parallel is interesting, we do not explore this connection in this paper.}.
Physically, the projected Kraus ensemble $\mathcal{K}$ dictates how local quantum information is scrambled and transmitted over time, given certain classical side information~\footnote{Here, the classical side information refers to the bath measurement outcome that is made available to a future observer, allowing them to act on the local subsystem conditioned on that knowledge.} of its many-body environment. 

Our central result is a conjecture that the (appropriately rescaled) projected Kraus ensemble $\mathcal{K}$ of a local subsystem $A$, under certain joint scaling limits $N_B, t\rightarrow\infty$, is statistically described by a universal one-parameter Ansatz (Conjecture~\ref{conj:1-informal}, informal, Sec.~\ref{sec:overview-main-results}):
\begin{equation}
\label{eq:1} \mathbf{GinUE}\times\mathrm{LogNormal}_{\sigma^2}.
\end{equation}
Here, the Ansatz is composed of a random matrix multiplied by an independent random scalar whose distribution is set by a single parameter $\sigma^2$: $\mathbf{GinUE}$ is the complex Ginibre random matrix of size $d_A \times d_A$ ($d_A$: Hilbert space dimension on $A$), 
and 
$\text{LogNormal}_{\sigma^2}$ denotes the log-normal distribution of a real random scalar $Y$, where $\log(Y)\sim\mathcal{N}(\mu=-\sigma^2,\sigma^2)$ is normally distributed with variance $\sigma^2$ and mean $\mu = -\sigma^2$.
For the 1D local circuits considered here, we predict a family of joint scaling limits $N_B, t\rightarrow\infty$ along contours $\lambda N_B / \gamma^t=\sigma^2$ for fixed $\sigma^2 > 0$, where $\lambda\ge0$ and $\gamma > 1$ are model-specific constants, under which the projected Kraus ensemble $\mathcal{K}$ converges to the corresponding stable limiting form Eq.~\eqref{eq:1}.
Equivalently, the family of ``critical'' joint scaling limits $N_B, t\rightarrow\infty$ (see Fig.~\ref{fig:concept}(b)) follows $t = \log_\gamma(N_B) + \log_\gamma(\lambda / \sigma^2)$. For a more complete statement and conditions under which our conjecture is expected to hold, see Conjecture~\ref{conj:1-precise} (precise) in Sec.~\ref{sec:ruc}.

One implication of our conjecture is that we may consider other scaling limits on either side of the family of critical scalings. For $t$ scaling faster than critical, $\sigma^2\rightarrow 0 $ and $\mathcal{K}$ will be described solely by the Ginibre ensemble; for slower scalings, $\sigma^2\rightarrow\infty$ and $\mathcal{K}$ will not converge to a nontrivial limiting stable distribution: there are substantial fluctuations in the probabilities of measurement outcomes spanning increasingly many orders of magnitude, with their relative spread becoming unbounded in the joint limit. Our result is consistent with the emergence of a crossover from log-normal to Porter--Thomas distributions in global bitstring probabilities as 1D chaotic circuits evolve from shallow to deep~\cite{christopoulos_universal_2025, sauliere_universality_2025, sauliere_universality_noisy_2025}, though distinct from a similar phenomenology found in weakly monitored dynamics~\cite{bulchandani_random-matrix_2024}.

We analytically establish our Ansatz Eq.~\eqref{eq:1} in two special cases, also illustrated in Fig.~\ref{fig:concept}(b): (i) when the global time-evolution is taken to be a global Haar random unitary, corresponding to the limit $t \to \infty$ then $N_B \to \infty$ in that order, so that $\sigma^2=0$, proved in Theorem~\ref{thm:1-informal}.  (ii) Within a special class of exactly solvable models called dual-unitary (DU) circuits~\cite{Bertini_entanglement_2019, Bertini_exact_2019, Piroli_exact_2020, Zhou_maximal_2022, Bertini_exactly_2026} taken in the opposite limit $N_B \to \infty$ then $t \to \infty$. 
Here, $\sigma^2=0$ too because of the absence of {\it dynamical purification}~\cite{Gullans_dynamical_2020, fidkowski_how_2021, ippoliti_dynamical_2023} within the circuit when viewed ``sideways'', i.e., upon performing a spacetime rotation~\cite{ho_exact_2022, Claeys_emergent_2022, ippoliti_dynamical_2023, Stephen_universal_2024}: 
because of dual-unitarity enforcing that sideways evolution is unitary, the model-specific constant 
$\lambda = 0$, hence $\sigma^2=0$ too.
We prove this in  Theorem~\ref{thm:2-informal} for the kicked Ising model (KIM) at the DU point, but we explain why the proof strategy extends to generic DU cases. For more general 1D quantum circuits (e.g., random or Floquet local circuits) which do generically exhibit dynamical purification in their sideways evolution, we argue for the form of the Ansatz based on the multiplicative ergodic theorem~\cite{Oseledets_1968} applied to long products of non-unitary random transfer matrices in the spacetime rotated version of the system, such that generally $\sigma^2 > 0$ (the randomness is provided for by the measurements on the bath, even if the circuit is non-random). We further numerically verify our Ansatz in various microscopic models, in particular finding that the convergence to the postulated universal form Eq.~\eqref{eq:1} occurs exponentially quickly in time. 

    \begin{figure}[t]
        \centering
        \includegraphics[width=0.95\linewidth]{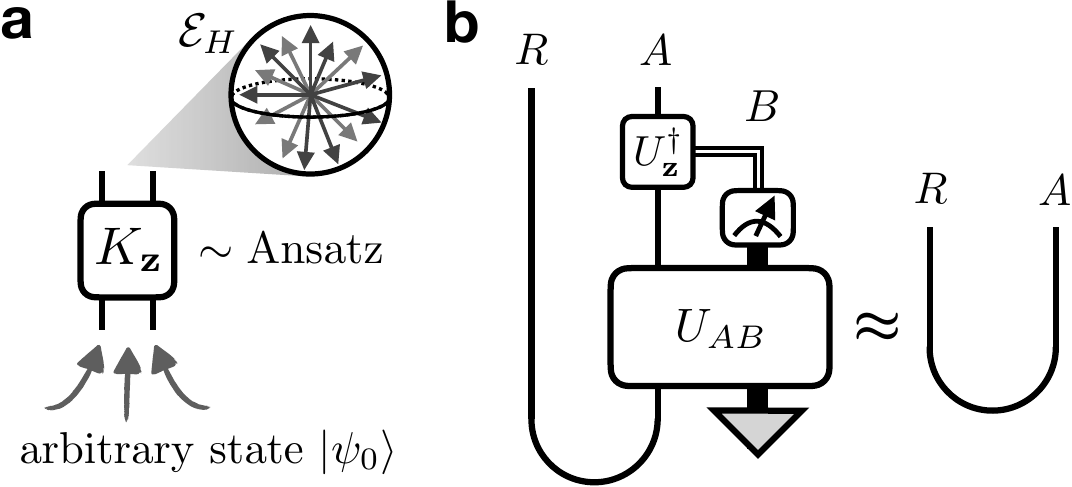}
        \caption{Two physical consequences of the Ansatz. (a)~Deep thermalization for an arbitrary input pure state $|\psi_0\rangle$ on $A$. With the Kraus operators $K_\mathbf{z}$ following the Ansatz Eq.~\eqref{eq:1}, the (unnormalized) projected states $|\widetilde\psi_\mathbf{z}\rangle = K_\mathbf{z}|\psi_0\rangle$
        have no preferred direction and hence the (normalized) projected states $|\psi_\mathbf{z}\rangle$ 
        can be shown to be distributed according to the Haar state ensemble $\mathcal{E}_H$, see Sec.~\ref{sec:consequence-ansatz-deep-thermalization}. (b)~Quantum information recovery assisted with classical side information. Here, $R$ is a reference system maximally entangled with the initial subsystem $A$, and the measurement outcome $\mathbf{z}$ of $B$ is the classical side information fed forward to a decoder. The decoder could then perform a conditional unitary action $U_\mathbf{z}^\dagger$
        to recover quantum information of $R$, shown schematically as the distillation of Bell pairs. See Sec.~\ref{sec:consequence-ansatz-qi-recovery} for details.
        }
        \label{fig:consequences}
    \end{figure}

Besides being of conceptual interest, our results unveil a new kind of universality in quantum chaotic many-body dynamics and have the following physical consequences. 
First, they provide a microscopic mechanism to understand the emergence of deep thermalization (to the Haar ensemble) in generic quantum chaotic many-body dynamics without conservation laws, {\it regardless} of the input state on $A$, see Fig.~\ref{fig:consequences}(a). 
Existing analytical results deriving the emergence of deep thermalization to the Haar ensemble have previously been limited to the special cases of Haar random~\cite{cotler_emergent_2023} and computationally pseudorandom~\cite{chakraborty_fast_2025} global initial states, non-local random Hamiltonian evolved states~\cite{Ghosh_design_2025}, or quantum dynamics involving DU circuits beginning from the so-called ``solvable initial states''~\cite{ho_exact_2022, Claeys_emergent_2022, ippoliti_dynamical_2023}; here we put forth a unifying framework underpinning the ubiquity of deep thermalization involving a general minimal random matrix Ansatz of the measured bath. 
Technically, this arises from the Ginibre component of the Ansatz, which encodes the behavior of quantum information scrambling within subsystem $A$: the rotational invariance of the Ginibre measure means that unnormalized post-measurement states $|\widetilde{\psi}_\mathbf{z}\rangle=K_\mathbf{z}|\psi_0\rangle$ on $A$ have no preferred direction, regardless of the input state $|\psi_0\rangle$ on $A$, leading to the normalized post-measurement states $|{\psi}_\mathbf{z}\rangle$ being distributed isotropically in the Hilbert space. Intriguingly, what Eq.~\eqref{eq:1} additionally unveils is that even when deep thermalization occurs (i.e., when distribution of post-measurement states is maximally random and featureless), there can still be additional information about the global system retained within  the distribution of probabilities of bath outcomes $\mathbf{z}$, 
captured by the  log-normal random scalar component with parameter $\sigma^2$.

Second, within this framework, we study quantum information recovery on local subsystems with a process-level resolution (namely, measurement-resolved branches of $K_\mathbf{z}$) implemented by the classical side information ($\mathbf{z}$). Specifically, we consider quantum information initially localized on subsystem $A$, which is scrambled into the large many-body bath $B$ through the dynamics. By measuring $B$, the classical outcome $\mathbf{z}$ identifies a conditional dynamical map, namely the Kraus operator $K_\mathbf{z}$, that acts on $A$ effectively as a subprocess. We then ask whether the initial quantum information can be retrieved given knowledge of $\mathbf{z}$, as illustrated in Fig.~\ref{fig:consequences}(b). This setting of measurement-conditioned dynamical maps can be seen as a \textit{quantum instrument} on the subsystem---$\mathbf{z}$ serves as the classical side information, and our question constitutes a variant of the seminal Hayden--Preskill thought experiment~\cite{hayden_mirrors_2007}. Using our Ansatz Eq.~\eqref{eq:1}, we demonstrate that such side information is a powerful resource for information recovery, enabling arbitrarily high recovery fidelity with only a small physical overhead. On the contrary, when such side information is absent, the local dynamics $\mathcal{N}$ is effectively irreversible and local quantum information is irretrievably lost into the bath $B$.

The rest of this paper is structured as follows:
in Sec.~\ref{sec:overview}, we introduce the object of interest---the projected Kraus ensemble (Sec.~\ref{sec:overview-setup})---before stating our main results informally (Sec.~\ref{sec:overview-main-results}).
The following sections expand on our main results: in Sec.~\ref{sec:global-haar}, we discuss our first analytical result (Theorem~\ref{thm:1-informal}) of the projected Kraus ensemble of global Haar random unitary dynamics, providing a random-matrix-theoretic baseline for generic physical dynamics. In Sec.~\ref{sec:DU-Floquet}, we discuss our second analytical result (Theorem~\ref{thm:2-informal}) for DU circuits, a special class of 1D circuits enjoying certain analytical properties. In Sec.~\ref{sec:ruc}, we elaborate on our conjecture for generic 1D circuits
(Conjecture~\ref{conj:1-informal}) and provide physical motivation for the family of one-parameter Ansatz (Sec.~\ref{sec:origin-ansatz}). The conjecture is supported by extensive numerics on a brickwork random unitary circuit (Sec.~\ref{sec:numerics}) and Floquet KIM (Appendix~\ref{app:extra-numerics}). In Sec.~\ref{sec:consequence-ansatz}, we discuss the physical consequences of our Ansatz, including a dynamical mechanism for deep thermalization (Sec.~\ref{sec:consequence-ansatz-deep-thermalization}, with a brief recap of deep thermalization) and recovery of quantum information in local conditional dynamics (Sec.~\ref{sec:consequence-ansatz-qi-recovery}). We conclude with our discussion and outlook in Sec.~\ref{sec:discussion}.

\section{Setup and overview of main results}
\label{sec:overview} 

    \subsection{Motivation and setup}
    \label{sec:overview-setup}
    In this work, we are interested in universal aspects of chaotic quantum many-body dynamics under a global unitary evolution $U_{AB}(t)$, generated, for example, by a (time-dependent) many-body Hamiltonian or a spatially extended quantum circuit. There are a variety of probes known to uncover universality of such dynamics, including spectral correlations diagnosed by the spectral form factor~\cite{Chan_spectral_2018, Bertini_random_2021, chan_spectral_2021} and its generalizations~\cite{li_dissipative_2021, shivam_ginibre_2023}, operator spreading characterized by the out-of-time-ordered correlators (OTOCs)~\cite{Nahum_operator_2018, vonKeyserlink_operator_2018, Khemani_lyapunov_2018}, and statistics of bitstring measurement probabilities~\cite{mark_maximum_2024, christopoulos_universal_2025, sauliere_universality_2025, sauliere_universality_noisy_2025, shaw_experimental_2025}. Here, we instead focus on the universal structure that emerges at the level of the \textit{local quantum process} itself. Specifically, we consider a bipartition of the global system $A\cup B$ into a small subregion $A$ and a large complementary region $B$ serving as its ``bath''. Dynamics on $A$ is generated by the \textit{local quantum channel}
    \begin{equation}
        \label{eq:q-channel}
            \mathcal{N}^{A\rightarrow A}_t(\cdot) = \mathrm{Tr}_B\left[
                U_{AB}(t) \left(\,\cdot\,\otimes |B_0\rangle \langle B_0|\right) U_{AB}^{\dagger}(t)
            \right],
    \end{equation}
    where we have assumed $A$ and $B$ are prepared in a factorized state, with the initial state $|B_0\rangle$ on $B$ some simple lowly-entangled state (e.g., a product state). Given an input state $\rho_{A,0}$ on $A$, the quantum channel $\mathcal{N}_t^{A\rightarrow A}$ evolves it to a new state:
    \begin{align}
        \rho_{A,t} = \mathcal{N}^{A\rightarrow A}_t(\rho_{A,0}).
    \end{align}
    For generic quantum chaotic systems, it is expected that the state at late times tends to a universal steady state, described by a thermal Gibbs state, with the effective temperature set by the large bath---this is known as quantum thermalization~\cite{deutsch_quantum_1991, srednicki_chaos_1994, Rigol_thermalization_2008, Dalessio_from_2016}. In particular, if there are no global conservation laws, then independently of the initial state $\rho_{A,0}$,  the resulting state is maximally mixed, leading to the expectation that the channel at late times is itself described by a universal form: the maximally depolarizing channel.

    Crucially, in the formulation above, the channel $\mathcal{N}_t^{A\rightarrow A}$ is obtained by tracing out the bath $B$, thereby discarding any knowledge of it. Inspired by recent works on deep thermalization~\cite{cotler_emergent_2023, mark_maximum_2024, lucas_fermions_2023, liu_cv_2024, bejan_matchgate_2025, yu_mixed_2025, Sherry_deep_2026, chang_deep_2025, liu_coherence-induced_2025, mcginley_scrooge_2025, mok_nature_2026}, which found that local states on subregion $A$ {\it conditioned} on some knowledge of the bath $B$ (e.g., obtained by measuring it in a simple local basis like the computational basis $\{ |\mathbf{z}\rangle_B\}$) are described universally by random wavefunction distributions, here we similarly ask whether the channel $\mathcal{N}^{A\rightarrow A}_t$ may also exhibit universality \textit{conditioned} on some knowledge of the bath.
    Specifically, we probe this ``fine-grained'' structure by writing the local channel on $A$ (dropping $t$ and superscript for brevity) in Kraus decomposition as
    \begin{equation}
    \label{eq:q-channel-kraus-decomposition}
    \begin{cases}
        \mathcal{N}(\cdot) = \sum_{\mathbf{z}}K_\mathbf{z}(\cdot)K_\mathbf{z}^\dagger \vspace{2pt}
        \\
        K_\mathbf{z} :=(\mathbbm{1}_A\otimes \langle\mathbf{z}|_B)U_{AB}(\mathbbm{1}_A\otimes |B_0\rangle_B),
    \end{cases}
    \end{equation}
     and investigate the statistical properties of the ensemble 
     \begin{align}
     \label{eqn:PKE}
        \mathcal{K} := \{ K_{\mathbf{z}}\}_{ \mathbf{z}=1}^{d_B},
     \end{align}
    of Kraus maps $K_\mathbf{z}$, illustrated by Fig.~\ref{fig:concept}(a), where $\{|\mathbf{z}\rangle_B\}$ denotes the computational basis of $B$ and $d_B$ ($d_A$) is the Hilbert space dimension of $B$ ($A$). We dub this the {\it projected Kraus ensemble},  in analogy to the \textit{projected ensemble} in deep thermalization (a high-level review of which can be found at Sec.~\ref{sec:consequence-ansatz-deep-thermalization}). 
    We will also refer to $\mathcal{K}$ as simply the \textit{Kraus ensemble} for brevity.
    Physically, each Kraus map $K_\mathbf{z}$ is a dynamical process acting on $A$ associated with a measurement outcome $\mathbf{z}$, i.e., a particular \textit{subprocess} obtained by unraveling the channel $\mathcal{N}$ using information of the classical configuration $|\mathbf{z}\rangle_B$. As $K_\mathbf{z}$ is a $d_A\times d_A$ submatrix of the global unitary $U_{AB}$,  it  is generally neither Hermitian nor proportional to a unitary.

    Our goal is to identify whether there are particular scalings of space and time where the Kraus ensemble $\mathcal{K}$ converges to universal forms. Below, we focus on 1D quantum circuits of qubits without explicit conservation laws (thus, if a subsystem thermalizes, it does so to infinite temperature) and fix subregion $A$ to be a constant number of qubits $N_A$. Concretely, a joint scaling of space and time is  specified by $f(N_B, t) = \mathrm{constant}$, relating the number of qubits $N_B$ in $B$ (the bath size) to the time $t$ (or circuit depth) as $N_B, t\rightarrow\infty$. Given a particular scaling form, we aim to identify a theoretical target random matrix ensemble $\mathcal{G}_\mathrm{th}:=\{\mathrm{d}\mu_\mathrm{th}(V), V\}$, characterized by an appropriate probability measure $\mu_\mathrm{th}$ on $d_A\times d_A$ matrices, such that the empirical ensemble $\mathcal{K}$ tends towards upon proper rescaling, namely
    \begin{equation}
    \label{eq:rescaled-convergence-schematic}
        \sqrt{d}\mathcal{K} \to \mathcal{G_\mathrm{th}},\quad N_B, t\rightarrow\infty~\text{with}~f(N_B,t)=\mathrm{const.},
    \end{equation}
    where $d=d_A d_B$ is the dimension of the full system $A\cup B$. The factor $\sqrt{d}$ is needed so that each rescaled matrix $\sqrt{d} K_\mathbf{z}$ has entries that are normalized to $\mathcal{O}(1)$
    even in the thermodynamic limit~\footnote{This rescaling is standard in random matrix theory for obtaining a large $d$ limit. For example, a fixed-sized submatrix of a Haar random ($d\times d$) unitary has entries of order $\mathcal{O}(1/\sqrt{d})$, and multiplying the submatrix by $\sqrt{d}$ yields a Ginibre matrix as $d\rightarrow\infty$ with entries defined as i.i.d. standard complex Gaussian variables. See discussion before Theorem~\ref{thm:1-precise} (precise).}.
    To quantify such convergence, we will compare moment-by-moment the rescaled empirical ensemble $\sqrt{d}\mathcal{K}$ with the theoretical ensemble $\mathcal{G}_\mathrm{th}$~\footnote{The theoretical ensemble $\mathcal{G}_\mathrm{th}$ will contain a log-normal component for this work. The log-normal distribution is a long-tailed distribution and is intrinsically moment-indeterminate, i.e., there exists another distribution which shares exactly the same finite-order moments as the log-normal. Nevertheless, we proceed with the physically-motivated measure of evaluating finite-moment distances, as did~\cite{christopoulos_universal_2025, sauliere_universality_2025, sauliere_universality_noisy_2025}.}.
    Defining the $k$-th moment of the former
    \begin{equation}
        d^k M_\mathrm{emp}^{(k)} = \frac{d^k}{d_B}\sum_\mathbf{z} K_\mathbf{z}^{\otimes k}\otimes K_\mathbf{z}^{*\,\otimes k},
    \end{equation}
            and the $k$-th moment of the latter as
    \begin{align}
    M_\mathrm{th}^{(k)}=\int\mathrm{d}\mu_\mathrm{th}(V)\,V^{\otimes k}\otimes V^{*\otimes k},
    \end{align}
    then their $k$-th moment distance is defined to be
    \begin{equation}
    \label{eq:distance-th}
        \Delta_\mathrm{th}^{(k)}:= \frac{\left\| d^k M_\mathrm{emp}^{(k)} - M_\mathrm{th}^{(k)} \right\|_2}{\left\|M_\mathrm{th}^{(k)}\right\|_2},
    \end{equation}
    where the denominator provides normalization to the distance compared with the magnitude of the target itself.
    Here, $\| \cdot \|_2$ is the Hilbert-Schmidt norm (Frobenius norm) chosen for analytical convenience~\footnote{Note that since we mainly consider cases with the size of $A$ fixed, all norms are equivalent.}.

    In practice, we often work with the vectorized version of matrix ensembles. For a Kraus operator $K_\mathbf{z}$, its vectorized form is defined as:
    \begin{equation}
    \label{eq:vectorized-kraus}
        |K_\mathbf{z}) \equiv \sum_{ij} K_{\mathbf{z}, ij}|i\rangle|j\rangle,\quad\text{where}~K_{\mathbf{z}, ij}=\langle i|K_\mathbf{z}|j\rangle.
    \end{equation}
    Here, $\{|i\rangle\}_{i=1}^{d_A}$ is an orthonormal basis of subsystem $A$. Thus, $|K_\mathbf{z}) = \left(K_\mathbf{z}\otimes \mathbbm{1}_A\right)|\widetilde\Phi\rangle$, where $|\widetilde\Phi\rangle\equiv\sum_{i=1}^{d_A}|i\rangle|i\rangle$ is the unnormalized maximally entangled state. Hence, the Kraus ensemble in its vectorized form $\{|K_\mathbf{z})\}$ can be thought of as a so-called \textit{unnormalized quantum state ensemble} defined on a system twice the size of subsystem $A$, where each element is simply a $d_A\times d_A$ matrix \textit{flattened} as a $(d_A^2)$-dimensional column vector. This change of perspective does not alter the definition of distance in Eq.~\eqref{eq:distance-th} under the Hilbert-Schmidt norm. The definition of unnormalized state ensembles is deferred to Sec.~\ref{sec:consequence-ansatz-deep-thermalization} along with the review of deep thermalization, and more details on matrix ensembles and their distance measures can be found in Appendix~\ref{app:review}. In the discussion and outlook Sec.~\ref{sec:discussion}, we also note that the projected Kraus ensemble has a direct physical correspondence to an experimentally preparable quantum state ensemble.

    \subsection{Summary of main results}
    \label{sec:overview-main-results}
        Equipped with the preliminaries, we now state our main results informally. We present these results in increasing order of spacetime structure and thus physical generality, and we organize their scaling regimes in a spacetime diagram in Fig.~\ref{fig:concept}(b). First, following a common philosophy in studies of quantum chaotic dynamics, we consider the prototypical case where $U_{AB}$ is sampled from the \textit{unitary Haar measure}.
        This is a random-matrix-theoretic simplification of the late-time structure of dynamics generated by a physical Hamiltonian or circuit,
        which has been widely used successfully in modeling chaos and fast scrambling~\cite{hayden_mirrors_2007, Sekino_fast_2008, Hosur_chaos_2016, Roberts_chaos_2017}. We show:

        \begin{itheorem}[informal]
        \label{thm:1-informal}
            A large global Haar random unitary dynamics $U_{AB}$ typically generates a Kraus ensemble $\mathcal{K}$ on a small subsystem $A$ whose statistics follows that of the complex Ginibre ensemble, i.e., $\sqrt{d} K\sim\mathbf{GinUE}_{d_A\times d_A}$.
        \end{itheorem}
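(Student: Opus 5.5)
To prove the statement, we make it precise through the moment distance $\Delta^{(k)}_{\mathrm{th}}$ of Eq.~\eqref{eq:distance-th}, with $\mu_{\mathrm{th}}$ the standard complex Ginibre measure on $d_A\times d_A$ matrices. We then show two things: the Haar average $\mathbb{E}_U[d^k M^{(k)}_{\mathrm{emp}}]$ converges to $M^{(k)}_{\mathrm{GinUE}}$ as $d_B\to\infty$ with $d_A$ and $k$ fixed, and the fluctuations of $d^k M^{(k)}_{\mathrm{emp}}$ around this mean vanish, so that the statement holds for typical $U_{AB}$.

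For the mean, we use the invariance $U\mapsto U (\mathbbm{1}_A\otimes W)$ to replace $|B_0\rangle$ by $|0\rangle_B$. Each $K_\mathbf{z}$ is then the $d_A\times d_A$ block of $U$ with rows $(i,\mathbf{z})$ and columns $(j,0)$. Averaging over $\mathbf{z}$ gives $\mathbb{E}[d^kM^{(k)}_{\mathrm{emp}}]=d^k\,\mathbb{E}[K^{\otimes k}\otimes K^{*\otimes k}]$ for a single fixed block. This is a $k$-th moment of a fixed $d_A\times d_A$ submatrix of a $d\times d$ Haar unitary, and Weingarten calculus gives
\begin{equation*}
\mathbb{E}\big[U_{i_1j_1}\cdots U_{i_kj_k}U^*_{i'_1j'_1}\cdots U^*_{i'_kj'_k}\big]=\sum_{\sigma,\tau\in S_k}\delta_{\sigma}(i,i')\,\delta_{\tau}(j,j')\,\mathrm{Wg}(\sigma\tau^{-1},d).
\end{equation*}
Since $\mathrm{Wg}(\pi,d)=d^{-k}\big(\delta_{\pi,e}+O(d^{-1})\big)$, we get $d^k\mathbb{E}[\cdots]=\sum_{\sigma}\delta_\sigma(i,i')\delta_\sigma(j,j')+O(k^2/d)$. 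The leading term is exactly Wick's theorem for i.i.d.\ standard complex Gaussian entries, i.e.\ $M^{(k)}_{\mathrm{GinUE}}$. The error term, summed over the $d_A^{4k}$ entries, is $O(\mathrm{poly}(d_A,k!)/d)$, so the normalized distance of the mean is $O(k^2/d_B)$ for fixed $d_A,k$.

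For the fluctuations, we bound $\mathbb{E}_U\|d^kM^{(k)}_{\mathrm{emp}}-\mathbb{E}[\cdot]\|_2^2$. Expanding the square produces a double sum over $\mathbf{z},\mathbf{z}'$. The diagonal terms $\mathbf{z}=\mathbf{z}'$ number only $d_B$ and carry a prefactor $1/d_B^2$, and each is $O(1)$ after rescaling, so together they contribute $O(1/d_B)$. The off-diagonal terms involve $2k$ entries from each of two disjoint row blocks of $U$. Their joint moment factorizes up to corrections $O(1/d)$, which again follows from the $\mathrm{Wg}$ expansion now at order $2k$. Hence the variance is $O(\mathrm{poly}(d_A,k)/d_B)$. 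Chebyshev's inequality then gives $\Delta^{(k)}_{\mathrm{th}}\to0$ in probability. Alternatively, concentration of measure (Lévy's lemma) applied to the Lipschitz function $U\mapsto d^kM^{(k)}_{\mathrm{emp}}$ on $U(d)$ yields exponentially small failure probability.

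The main obstacle is this off-diagonal covariance step. A crude argument would bound $d_B^2$ terms each of size $O(1/d)$, giving $O(d_B/d)=O(1/d_A)$, which does not vanish. We must therefore show that the leading $O(1/d)$ correction to independence, coming from transpositions exchanging indices between the two blocks, vanishes. Equivalently, the connected correlation between distinct rows must be $O(1/d^2)$ after summing over $\mathbf{z}\neq\mathbf{z}'$. We would attempt this by exploiting the orthogonality $\sum_{\mathbf{z}}K_\mathbf{z}^\dagger K_\mathbf{z}=\mathbbm{1}$ together with explicit $\mathrm{Wg}$ subleading terms. If that proves too delicate, we would instead use Lévy's lemma, controlling the Lipschitz constant via $\|K_\mathbf{z}\|\le1$ and the normalization of the $\mathbf{z}$-average.
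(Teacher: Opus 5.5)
Your core argument is essentially the paper's proof of Theorem~\ref{thm:1-precise}: Weingarten calculus at orders $k$ and $2k$, a split of the double sum over $\mathbf{z},\mathbf{z}'$ into coincident and distinct pairs, and a second-moment bound followed by Chebyshev. The paper computes $\mathbb{E}\|d^kM^{(k)}_{\mathrm{emp}}-M^{(k)}_{\mathrm{Gin}}\|_2^2=\mathbb{E}[\mathcal{F}^{(k)}_{\mathrm{emp}}]-2k!\,\mathbb{E}[\mathcal{P}^{(k)}_{\mathrm{emp}}]+k!\,(d_A^2)_k$ directly rather than as bias plus variance, and bounds the off-diagonal Weingarten entries with Lemma~\ref{lemma:weingarten-large-d}. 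That is only a reorganization of what you do.

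Your last paragraph, however, rests on a counting slip, and the repairs it proposes would not work.

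\textbf{The obstacle is not there.} The $d_B(d_B-1)$ pairs with $\mathbf{z}\neq\mathbf{z}'$ carry the same prefactor $1/d_B^2$ you used for the coincident pairs. Per-pair corrections of size $O(1/d)$ therefore add up to $O(1/d)$, not $O(d_B/d)$. This is exactly how the paper treats its term (ii): it enters with weight $d^{2k}(d_B-1)/d_B$, and its $O\big(k!\,k^2d_A^{2k}(d_A)_k/d\big)$ remainder is never shown to cancel; it simply vanishes as $d_B\to\infty$. The argument in your fluctuation paragraph already closes the proof.

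\textbf{The proposed fix would fail.} Do not try to prove that the leading connected correlation vanishes, because it does not. Take $k=1$ and $\mathbf{z}\neq\mathbf{z}'$, and use $\mathbb{E}[d\,|K_\mathbf{z})(K_\mathbf{z}|]=\mathbbm{1}$. A two-row Weingarten computation gives
\[
d^2\,\mathbb{E}\,|(K_\mathbf{z}|K_{\mathbf{z}'})|^2-d_A^2=\frac{d_A^2-d\,d_A^3}{d^2-1}\approx-\frac{d_A^2}{d_B}.
\]
This is of order $1/d$ for fixed $d_A$, and it is harmless only because of the $1/d_B^2$ weighting.

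\textbf{The L\'evy fallback also fails as stated.} Using the worst-case bound $\|K_\mathbf{z}\|_2^2\le d_A$ gives a Lipschitz constant of order $k\,d_A^{2k-1/2}d_B^{k-1}$. The resulting tail $\exp(-c\,d\,\epsilon^2/L^2)$ is vacuous for $k\ge2$. It only becomes useful if you first restrict to a typical set on which every $\|K_\mathbf{z}\|_2^2$ is within logarithmic factors of $d_A/d_B$.
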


        A formal version of the theorem, which entails a probabilistic statement on the closeness of the moments of the ensembles, is postponed to Sec.~\ref{sec:global-haar} and proved in Appendix~\ref{app:global-haar-proof}. 
        Above, $\mathbf{GinUE}_{d_A\times d_A}$ corresponds to a $d_A\times d_A$ random matrix ensemble with i.i.d.~elements following a standard complex Gaussian distribution with unit variance and zero mean, and $K$ represents samples from the Kraus ensemble $\mathcal{K}$ taken with equal weights, with the factor $\sqrt{d}$ normalizing its entries to $\mathcal{O}(1)$.  
        This result sets a random matrix theory (RMT) baseline for the form of Kraus ensembles under more physical scenarios.

        Next, we reintroduce 1D spatial locality, moving one step towards physically generic quantum many-body dynamics while retaining certain analytical tractability. We consider   a special class of 1D circuit dynamics known as DU circuits~\cite{Bertini_entanglement_2019, Bertini_exact_2019, Piroli_exact_2020, Zhou_maximal_2022, Bertini_exactly_2026}, which are made up of local gates that are unitary in both space and time directions. Such circuits enjoy properties of maximal chaos~\cite{Zhou_maximal_2022} and exact solvability under special initial states~\cite{Bertini_exactly_2026}. We show: 
        
        \begin{itheorem}[informal]
        \label{thm:2-informal}
            A 1D dual-unitary circuit $U_{AB}(t)$ initialized in a solvable state and measured in a solvable basis generates a Kraus ensemble $\mathcal{K}$ whose statistics follows that of the complex Ginibre ensemble, i.e., $\sqrt{d} K\sim\mathbf{GinUE}_{d_A\times d_A}$, in the thermodynamic limit and at late times (in that order). 
        \end{itheorem}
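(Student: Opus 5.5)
The plan is to compute the moments $d^k M^{(k)}_{\mathrm{emp}}$ exactly in the thermodynamic limit using spacetime duality, and then show they converge to the Ginibre moments $M^{(k)}_{\mathrm{GinUE}}=\sum_{\pi\in S_k}\mathbb{P}_\pi$, the sum of permutation operators acting between the $k$ copies of $V$ and the $k$ copies of $V^*$. I will work concretely with the self-dual kicked Ising model, where solvable initial states and the computational-basis measurement are both available, and then indicate the generalization to generic DU gates.

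\textbf{Step 1: rewrite $K_{\mathbf z}$ as a product of sideways transfer matrices.} Rotate the circuit by $90^\circ$. Fixing $|B_0\rangle$ as a solvable product state and projecting onto $\langle\mathbf z|_B$ caps the top and bottom of the spacetime slab. After the rotation, each bath column becomes a $2^{t}$-dimensional (or $2^{2t}$, depending on conventions) transfer matrix $T_{z_j}$ that acts in the time direction. For KIM at the self-dual point, each $T_{z}$ is proportional to a unitary; this is exactly where dual-unitarity enters. Then $K_{\mathbf z}$ is obtained by contracting the $A$-region tensor with boundary vectors of the form $\prod_{j\in B_R}T_{z_j}|\ell\rangle$ and the corresponding product for $B_L$. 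The moment sum $\frac{1}{d_B}\sum_{\mathbf z}$ then factorizes into powers of a single averaged $k$-replica transfer channel $\mathcal T^{(k)}=\frac{1}{2}\sum_z T_z^{\otimes k}\otimes T_z^{*\otimes k}$, suitably normalized.

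\textbf{Step 2: take the thermodynamic limit.} Let $N_B\to\infty$ at fixed $t$. The power $(\mathcal T^{(k)})^{N_B/2}$ projects onto the leading eigenspace of $\mathcal T^{(k)}$. Because the $T_z$ are unitary up to scale, the replica channel is unital, and its fixed points include the replica permutation states $|\pi\rangle\!\rangle$ on the $t$ temporal legs. The key input I would establish, following the deep-thermalization literature for KIM, is that these $k!$ permutation states span the whole peripheral spectrum; this is known to hold when the $T_z$ are random-enough unitaries generated by the measured kicks. Substituting the projector onto this span back into the $A$-region contraction yields
\begin{equation}
d^k M^{(k)}_{\mathrm{emp}}\to\sum_{\pi,\sigma}W_{\pi\sigma}\,\langle\text{boundary}\rangle,
\end{equation}
where $W$ is the inverse Gram matrix of the permutation states in dimension $D_t=2^{t}$. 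This is a Weingarten-type structure, as in the Haar computation behind Theorem~\ref{thm:1-informal}.

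\textbf{Step 3: take the late-time limit.} As $t\to\infty$ we have $D_t\to\infty$. The Gram matrix becomes diagonal up to $O(D_t^{-1})$ corrections, so $W_{\pi\sigma}\to D_t^{-k}\delta_{\pi\sigma}$. The $A$-region contraction of the permutation state $\pi$ gives $\mathbb{P}_\pi$ on the $A$ replicas, and the powers of $D_t$ cancel against the $\sqrt d$ rescaling. This leaves $\sum_\pi \mathbb P_\pi=M^{(k)}_{\mathrm{GinUE}}$, and the convergence is exponential in $t$.

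\textbf{Main obstacle.} I expect the hardest step to be proving that the leading eigenspace of $\mathcal T^{(k)}$ is exactly the permutation span, i.e.\ that there are no extra commutant elements and that there is a spectral gap. For KIM I would try to reduce $\mathcal T^{(k)}$ to a twirl over the group generated by the measured transfer unitaries (Clifford-like at the self-dual point), and then show that the commutant of its $k$-fold action is spanned by permutations for $t$ large relative to $k$. If that route fails, I would fall back on the known exact results on the KIM projected ensemble and adapt them with $A$ treated as a doubled (vectorized) system.
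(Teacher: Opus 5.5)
The gap is in Step~3, in the sentence ``the $A$-region contraction of the permutation state $\pi$ gives $\mathbb{P}_\pi$ on the $A$ replicas.'' Your bulk analysis is the same as the paper's: sideways transfer matrices $V_z=U_z/\sqrt{2}$, factorization of the $\mathbf{z}$-average into powers of one replica transfer operator, and projection onto the permutation span in the thermodynamic limit. The paper does not prove that last fact either; it imports it as the exact-design result of Lemma~\ref{lemma:ho-exact}.

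Let $W$ be the $A$-region tensor read sideways, mapping the $t$ temporal qubits at the cut to the $2N_A$ input/output legs of $A$. Permutation operators intertwine tensor powers, so $W^{\otimes k}W_\pi W^{\dagger\otimes k}=\mathbb{P}_\pi\,(WW^\dagger)^{\otimes k}$. You therefore recover $\sum_\pi\mathbb{P}_\pi=M^{(k)}_{\mathrm{Gin}}$ only if $WW^\dagger\propto\mathbbm{1}$.

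For a generic edge tensor this fails. The limiting moments are then those of a complex Gaussian matrix with covariance $\propto WW^\dagger$, not $\mathbf{GinUE}$. Making the Gram matrix diagonal by sending $D_t\to\infty$ does not address this. You use dual-unitarity only in the bulk, but it is needed at the edge too.

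The missing ingredient is the second half of the paper's proof. For the DU KIM with $t\ge 2N_A$, the gates in the light-cone trapezoid inside $A$ are proportional to unitaries when read sideways, and the remaining $t-2N_A$ temporal legs are capped by $|+\rangle$ states. This gives
\begin{equation*}
W=2^{-N_A/2}\left(\mathbbm{1}^{\otimes N_A}\otimes\langle+|^{\otimes(t-2N_A)}\otimes\mathbbm{1}^{\otimes N_A}\right)\mathcal{V},
\end{equation*}
with $\mathcal{V}$ a $t$-qubit unitary, hence $WW^\dagger=2^{-N_A}\mathbbm{1}$. Combined with the $\sqrt{2}^{\,t}$ from the boundary state $|+\rangle^{\otimes t}$, this yields $|K_\mathbf{z})=\sqrt{2^t/d}\,\langle+|^{\otimes(t-2N_A)}|\Psi^{(t)}_\mathbf{z}\rangle$. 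Here $|\Psi^{(t)}_\mathbf{z}\rangle$ is Haar-distributed in the thermodynamic limit. The exact limit is then $d^kM^{(k)}_{\mathrm{emp}}\to\frac{2^{kt}}{(2^t)_k}M^{(k)}_{\mathrm{Gin}}$. This is where the threshold $t\ge 2N_A$ comes from; the late-time limit enters only through $2^{kt}/(2^t)_k\to 1$.

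No Weingarten asymptotics are needed here. Contracting the permutation-span projector with the normalized boundary vector gives $\sum_\sigma\mathrm{Wg}(\pi^{-1}\sigma)=1/(D_t)_k$ for every $\pi$, which is exactly the Haar state moment.

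Two smaller points:
\begin{itemize}
\item Keep $A$ at the open end of the chain as the paper does. With baths $B_L$ and $B_R$ on both sides you get two independent temporal states and a double sum over $(\pi_L,\pi_R)$, which your Step~3 does not handle.
\item The transfer unitaries are not ``Clifford-like''. The condition $g\notin\mathbb{Z}\pi/8$ is what makes $\{U_0,U_1\}$ generate a dense subgroup. A Clifford twirl would not suffice, because for $k\ge 4$ the Clifford commutant is strictly larger than the permutation span.
\end{itemize}
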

        Here, solvable product states (bases) refer to a set of states (bases) on the bath $B$ such that the DU circuit with measurements is also unitary when viewed in the space direction, similar to the solvable MPS states defined in~\cite{Piroli_exact_2020}. This theorem follows from perfect quantum information transmission (or the lack of \textit{dynamical purification}) in the sideways evolution through the DU circuit bulk~\cite{ippoliti_dynamical_2023}. We prove this result explicitly for a DU Floquet KIM, following an argument similar to that of~\cite{ho_exact_2022}; its precise formulation is given in Sec.~\ref{sec:DU-Floquet}. Nevertheless, we expect analogous proofs to hold for generic DU circuits under similar solvability constraints on initial state and measurement basis of $B$.

        Finally, we study the most physically relevant case of  generic 1D quantum circuits, which do not enjoy such nice analytically tractable properties like in DU circuits. Accordingly, the universal Ginibre form of the Kraus ensembles has to be modified to account for dynamical purification~\cite{Gullans_dynamical_2020, fidkowski_how_2021, ippoliti_dynamical_2023} in the spatial direction of the circuit bulk. Here, we identify specific joint scaling limits of $N_B, t\rightarrow\infty$ along which we conjecture the Kraus ensemble $\mathcal{K}$ approaches a stable limiting form:
        
        \begin{iconjecture}[informal]
        \label{conj:1-informal}
            A generic 1D local quantum circuit 
            generates a 
            Kraus ensemble described statistically by a universal one-parameter Ansatz:
            \begin{equation}
            \label{eq:univeral-one-parameter-family}
                \sqrt{d} K \sim\mathbf{GinUE}_{d_A\times d_A}\times\mathrm{LogNormal}_{\sigma^2},
            \end{equation}
            in the joint limit $N_B, t \rightarrow\infty$ with fixed parameter set by 
            \begin{equation}
            \label{eq:sigma2-definition}
                \sigma^2=\frac{\lambda N_B}{ \gamma^t }.
            \end{equation}
            Here the constants $\lambda\ge0$ and $\gamma > 1$ are model-dependent but are otherwise independent of $N_A$, $N_B$ or $t$.
        \end{iconjecture}
        The distribution in Eq.~\eqref{eq:univeral-one-parameter-family} consists of two independent parts: a Ginibre matrix part identical to that in the previous theorems, and an independent scalar part following a random log-normal distribution which contributes additional fluctuations. Notably, the distribution $\mathrm{LogNormal}_{\sigma^2}\equiv\mathrm{LogNormal}(\mu=-\sigma^2, \sigma^2)$ has only one free parameter, and describes a positive random variable $Y$ with its logarithm $\log(Y)$ distributed according to $\mathcal{N}(\mu=-\sigma^2, \sigma^2)$,    a real Gaussian distribution with mean $-\sigma^2$ and variance $\sigma^2$, set jointly by spacetime sizes $N_B$ and $t$ through Eq.~\eqref{eq:sigma2-definition}. 
        Note this emergent RMT universality is different from the Ginibre universality emerging in the spectral statistics of chaotic dynamics, previously studied in~\cite{li_dissipative_2021, shivam_ginibre_2023, sa_spectral_2020}.

        Our conjecture highlights a running family of universal Ansatz that the Kraus ensembles of 1D circuits approach in the joint spacetime  limits $N_B, t\to \infty$. In particular, it implies a family of critical scalings $t = \log_\gamma(N_B) + \log_\gamma(\lambda / \sigma^2)$ with $\sigma^2 > 0$, which is in agreement with recent works on random circuit models forming unitary designs in logarithmic depth~\cite{schuster_science_2025, chen_incompressibility_2024}.
        One can verify that for times scaling strictly faster than the critical family $t=\log_\gamma(N_B) + \omega (1)$, $\sigma^2\rightarrow 0$ and the log-normal part becomes trivial. This limit is in agreement with Theorem~\ref{thm:1-informal} where global Haar random unitary dynamics can be effectively understood as taking the long-time limit $t\rightarrow\infty$ first before $N_B\rightarrow\infty$. We state the conjecture   precisely in Sec.~\ref{sec:ruc}, where we also argue for the rationale of the Ansatz as well as the emergence of this critical circuit depth scaling in Sec.~\ref{sec:origin-ansatz}.  We present extensive numerical evidence for a brickwork random unitary circuit (RUC) in Sec.~\ref{sec:numerics} supporting our conjecture, with additional numerical evidence involving an off-DU kicked Ising model (KIM) presented in  Appendix~\ref{app:extra-numerics}.

        We further expound upon the physical interpretations and implications of our Ansatz.
        As mentioned, the Ginibre part in Eq.~\eqref{eq:univeral-one-parameter-family}, common to Theorems~\ref{thm:1-informal} and~\ref{thm:2-informal}, signifies information scrambling within the conditional dynamics on subsystem $A$. Its consequences are twofold:
        \begin{itemize}[leftmargin=1.5em]
            \item Left unitary invariance of the Ansatz gives rise to a state-agnostic microscopic mechanism for deep thermalization: beginning from {\it any} input pure state $|\psi_0\rangle$ on $A$, the output projected ensemble of states $|\psi_\mathbf{z}\rangle$ is also unitarily rotationally invariant, i.e., distributed according to the Haar measure, precisely the statement of deep thermalization. This is  illustrated in Fig.~\ref{fig:consequences}(a),  and discussed  in more detail in Sec.~\ref{sec:consequence-ansatz-deep-thermalization}. 
            \item Measurement-resolved dynamics allow for quantum information recovery conditioned on the classical side information, as illustrated by Fig.~\ref{fig:consequences}(b).
            Specifically, we show that quantum information input on $A$ is recoverable on the output of $A$ up to arbitrarily high fidelity when $\mathbf{z}$ is known, 
            subject to a constant qubit overhead depending on the error threshold. We show this in Sec.~\ref{sec:consequence-ansatz-qi-recovery} through a calculation of the coherent information based on the Ansatz. Remarkably, this applies even when the bath $B$ is thermodynamically large and with logarithmic depth for 1D circuits. This constitutes a variant of the Hayden--Preskill thought experiment~\cite{hayden_mirrors_2007}, where they achieved arbitrarily high recoverability by supplementing the decoder with the quantum side information that is the maximally entangled reference to the initial black hole.
        \end{itemize}
        On the other hand, the log-normal part in Eq.~\eqref{eq:univeral-one-parameter-family}, arising from non-unitary sideways evolution in generic circuits, contributes to the fluctuations in the norms $\|K_\mathbf{z}\|_2$ of the Kraus operators. 
        The norms carry direct physical meaning as the \textit{thermally averaged} Born probabilities $\overline p_\mathbf{z}$ of measuring an outcome $\mathbf{z}$ when the input state on $A$ is taken to be the maximally mixed state $\rho_{A,0} = \pi_A\equiv \mathbbm{1}_A/d_A$. Indeed, 
        \begin{equation}
        \label{eq:norm-prob-relation}
            d_A \overline{p}_\mathbf{z} = \mathrm{Tr}\left(K_\mathbf{z}K_\mathbf{z}^\dagger \right) = \left\| K_\mathbf{z} \right\|_2^2. 
        \end{equation} 
        This part of our Ansatz is in agreement with previous studies on the universal distribution of bitstring probabilities in 1D chaotic circuits~\cite{christopoulos_universal_2025, sauliere_universality_2025, sauliere_universality_noisy_2025}.

\section{Global Haar unitary: random matrix baseline}
\label{sec:global-haar}
    We begin by studying the Kraus ensemble generated by a global Haar random unitary, a standard way of modeling the late-time behavior of the time-evolution operator of a Hamiltonian or circuit. 
    Recall that a Haar random unitary can be obtained by successively Gram--Schmidt orthonormalizing the columns of a complex Ginibre matrix of the same dimension.
    Now, if we consider any {\it finite} submatrix of a large Haar random unitary matrix, the effect of this orthonormalization procedure is weak, and so we should intuitively expect the submatrix (up to rescaling) to itself approximately follow the  Ginibre distribution on matrices equal to the size of the submatrix. 
    This expectation was indeed proven rigorously in the mathematical literature: concretely, in~\cite{mastrodonato_elementary_2007} it was shown that a finite square submatrix of a Haar random unitary defines the \textit{truncated unitary ensemble}, which, up to rescaling by $\sqrt{d}$, converges in distribution to Ginibre in the limit of large underlying $d\times d$ unitary matrix. 
    Importantly, our statement below is conceptually different in that we consider only a {\it single} realization of a global Haar random unitary, and study the Kraus ensemble induced by distinct measurement outcomes (hence the induced source of randomness is different). Nevertheless, we expect an analogous result from the intuitive argument above to hold. Indeed, we have:

    \begin{theorem}
    \label{thm:1-precise}
        Let the global unitary $U_{AB}$ be Haar random in $\mathrm{U}(d_Ad_B)$. Then for any integer $k\ge 1$ and any $0 < \delta <1$, with probability at least $1-\delta$, the $k$-th moment distance of the Kraus ensemble to Ginibre satisfies
        \begin{equation}
            \Delta_\mathrm{Gin}^{(k)} \le \sqrt{\frac{d_A^{2k}}{\delta\, k!\, d_B}\left(1+\mathcal{O}\left(\frac{k^2}{d_A^2}\right)\right)},
        \end{equation}
        for all sufficiently large $d_B$, where the $\mathcal{O}(\cdot)$ term is independent of $d_B$.
    \end{theorem}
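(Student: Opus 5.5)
The bound has the form $\sqrt{\mathbb{E}[\cdot]/\delta}$, which points to a second-moment Markov (Chebyshev) argument over the Haar measure. Define the random operator $X := d^k M^{(k)}_\mathrm{emp} - M^{(k)}_\mathrm{Gin}$. Markov's inequality applied to $\|X\|_2^2$ gives, with probability at least $1-\delta$,
\[
\Delta^{(k)}_\mathrm{Gin} \le \sqrt{\mathbb{E}_U\|X\|_2^2 \,/\, \big(\delta \|M^{(k)}_\mathrm{Gin}\|_2^2\big)}.
\]
The task therefore reduces to computing $\mathbb{E}_U \|X\|_2^2$ and $\|M^{(k)}_\mathrm{Gin}\|_2^2$ to leading order in $1/d_B$.

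We expand
\[
\mathbb{E}\|X\|_2^2 = \mathbb{E}\|d^kM_\mathrm{emp}\|_2^2 - 2\,\mathrm{Re}\,\mathrm{Tr}\big(M_\mathrm{Gin}^\dagger\, \mathbb{E}[d^kM_\mathrm{emp}]\big) + \|M_\mathrm{Gin}\|_2^2 .
\]

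\textbf{First moment.} For each $\mathbf{z}$, $K_\mathbf{z}$ is a $d_A\times d_A$ truncation of a Haar unitary, so $\mathbb{E}[d^k K_\mathbf{z}^{\otimes k}\otimes K_\mathbf{z}^{*\otimes k}]$ is given by Weingarten calculus in $\mathrm{U}(d)$. It equals $\sum_{\sigma,\tau\in S_k} d^k\,\mathrm{Wg}(\sigma\tau^{-1},d)\,(\text{permutation operators})$. Since $d^k\mathrm{Wg}(\sigma\tau^{-1},d)=\delta_{\sigma\tau}+\mathcal{O}(k^2/d)$, this reproduces $M_\mathrm{Gin}=\sum_{\sigma}P_\sigma$ (the Wick formula for Ginibre) up to $\mathcal{O}(1/d)$ corrections. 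These corrections vanish as $d_B\to\infty$ and can be absorbed.

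\textbf{Second moment.} We have
\[
\mathbb{E}\|d^kM_\mathrm{emp}\|_2^2 = \frac{d^{2k}}{d_B^2}\sum_{\mathbf{z},\mathbf{z}'}\mathbb{E}\big|\mathrm{Tr}(K_\mathbf{z}^\dagger K_{\mathbf{z}'})\big|^{2k}.
\]
The $d_B(d_B-1)$ off-diagonal terms involve disjoint row blocks of $U$, which become asymptotically independent Ginibre matrices. Their contribution is $\|M_\mathrm{Gin}\|_2^2(1+\mathcal{O}(1/d))$, and this cancels against the cross term and $\|M_\mathrm{Gin}\|_2^2$ up to terms of order $1/d$ (times a $d_B$-independent factor). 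The $d_B$ diagonal terms contribute $\frac{1}{d_B}\,\mathbb{E}_{G\sim\mathrm{Gin}}\big[\|G\|_2^{4k}\big](1+o(1))$. Since $\|G\|_2^2$ is Gamma-distributed with shape $d_A^2$, this moment equals $(d_A^2)(d_A^2+1)\cdots(d_A^2+2k-1)=d_A^{4k}(1+\mathcal{O}(k^2/d_A^2))$. Thus the leading term of $\mathbb{E}\|X\|_2^2$ is $d_A^{4k}(1+\mathcal{O}(k^2/d_A^2))/d_B$, while all other terms are $\mathcal{O}(1/d)$ with $d = d_Ad_B$ and are subleading for large $d_B$ (this is where ``sufficiently large $d_B$'' enters).

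\textbf{Denominator.} We have $\|M_\mathrm{Gin}\|_2^2=\sum_{\sigma,\tau}\mathrm{Tr}(P_\sigma^\dagger P_\tau)=\sum_{\sigma,\tau}d_A^{2\#\mathrm{cycles}(\sigma^{-1}\tau)}$, which equals $k!\,d_A^{2k}(1+\mathcal{O}(k^2/d_A^2))$ with the identity contributions dominating. Dividing, the ratio is
\[
\frac{d_A^{4k}}{d_B\,k!\,d_A^{2k}} = \frac{d_A^{2k}}{k!\,d_B},
\]
which gives the stated bound.

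\textbf{Main obstacle.} The hardest step is controlling the off-diagonal $\mathbf{z}\neq\mathbf{z}'$ terms precisely enough to show that their deviation from the independent-Ginibre value is $\mathcal{O}(1/d)$ uniformly, and in particular that it does not compete with the $1/d_B$ diagonal term. The plan is to write the pair $(K_\mathbf{z},K_{\mathbf{z}'})$ as a $2d_A\times d_A$ truncation of $U$, apply the Weingarten expansion for $4k$-fold moments, and bound $|d^{2k}\mathrm{Wg}-\delta|\le C_k/d$. The total off-diagonal error is then $\mathcal{O}(C_k d_A^{4k}/d)$, i.e. $\mathcal{O}(1/(d_Ad_B))$ relative to a diagonal term of size $1/d_B$, so it only modifies the prefactor by a $d_A$-dependent correction. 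The $k$-dependence of these corrections must be tracked so that they can be absorbed into the $\mathcal{O}(k^2/d_A^2)$ factor or vanish for large $d_B$.
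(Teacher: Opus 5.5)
Your skeleton is the paper's: Markov's inequality on $\|X\|_2^2$, a Weingarten expansion, the split of the frame potential into $\mathbf{z}=\mathbf{z}'$ and $\mathbf{z}\neq\mathbf{z}'$ terms, the diagonal terms giving $(d_A^2)_{2k}/d_B$, and $\|M^{(k)}_\mathrm{Gin}\|_2^2=k!\,(d_A^2)_k$. The gap is in the error bookkeeping.

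\textbf{The $\mathcal{O}(1/d)$ terms are not subleading in $d_B$.} You assert that the remaining $\mathcal{O}(1/d)$ corrections are subleading ``for large $d_B$''. But $d=d_Ad_B$, so a $1/d$ correction scales as $1/d_B$, exactly like the leading term $d_A^{4k}/d_B$. Their ratio is independent of $d_B$, and sending $d_B\to\infty$ never removes it. Only the $d_A$-dependence of its coefficient can make it small, and that is precisely what the factor $1+\mathcal{O}(k^2/d_A^2)$ asserts. Your own estimate in the last paragraph, an off-diagonal error $\mathcal{O}(C_kd_A^{4k}/d)$, is a relative correction $\mathcal{O}(C_k/d_A)$. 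That cannot be absorbed into $\mathcal{O}(k^2/d_A^2)$, so as written the argument only gives $\mathbb{E}[\Delta^{(k)\,2}_\mathrm{Gin}]\le\frac{d_A^{2k}}{k!\,d_B}\bigl(1+\mathcal{O}_k(1/d_A)\bigr)$.

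The effect is real. For $k=1$ a direct computation gives $\mathbb{E}\|X\|_2^2=\frac{d_A^2(d_A^2-1)(d_A^2d_B-1)}{d^2-1}\to\frac{d_A^4-d_A^2}{d_B}$. The diagonal-Weingarten prediction is $\frac{(d_A^2)_2-(d_A^2)_1}{d_B}=\frac{d_A^4}{d_B}$. The difference $-d_A^2/d_B$ comes from the transposition weight ($\approx -1/d^3$) multiplying a contraction of size $d_A^3$ in the $\mathbf{z}\neq\mathbf{z}'$ pairs. This is a ``$1/d$'' term that survives at every $d_B$.

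\textbf{What closes the gap: sharper counting of the contraction coefficients.} For $\mathbf{z}\neq\mathbf{z}'$, a row index carrying $\mathbf{z}$ can only be matched with one carrying $\mathbf{z}$. This forces the row permutation into $S_k\times S_k$ and caps the row-side loops at $k$ (cycles of $\sigma_1\sigma_2$). The column side contributes at most $2k$ loops. So every coefficient is at most $d_A^{3k}$, not $d_A^{4k}$, and the diagonal ones are $d_A^{2\#\mathrm{cyc}(\sigma_1\sigma_2)}$.

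Combine this with the fact that, for fixed $\sigma$, the off-diagonal Weingarten weights sum to $\mathcal{O}(k^2/d^{2k+1})$ (Lemma~\ref{lemma:weingarten-large-d}, which the paper invokes). The $\mathbf{z}\neq\mathbf{z}'$ error is then $\mathcal{O}\bigl(k!\,k^2 d_A^{2k}(d_A)_k/d\bigr)$, i.e.\ of relative size $\mathcal{O}(k!\,k^2/d_A^{k+1})$ against $d_A^{4k}/d_B$. This is what the paper absorbs into $1+\mathcal{O}(k^2/d_A^2)$, and for $k=1$ it is the $d_A^{-2}$ shift seen above.

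\textbf{The cross term has the same issue.} The paper avoids expanding $\mathbb{E}[d^kM^{(k)}_\mathrm{emp}]$ as an operator. It uses $M_\mathrm{Gin}^{(k)}=k!\,\mathbb{P}^{(k)}_{\mathrm{sym},d_A^2}$ and the symmetric support of $M^{(k)}_\mathrm{emp}$ to write the cross term exactly as $-2k!\,\mathbb{E}[\mathcal{P}^{(k)}_\mathrm{emp}]$. The deviation of $\mathbb{E}[\mathcal{P}^{(k)}_\mathrm{emp}]$ from $(d_A^2)_k$ is $\mathcal{O}(k^2d_A^k(d_A)_k/d)$. That is again a $1/d_B$-order contribution, small only in $d_A$, not one that ``vanishes as $d_B\to\infty$'' relative to the main term.
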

Above, the $k$-th moment distance is defined to be 
        \begin{equation}
            \Delta_\mathrm{Gin}^{(k)} := \frac{\left\| d^k M_\mathrm{emp}^{(k)} - M_\mathrm{Gin}^{(k)} \right\|_2}{\left\|M_\mathrm{Gin}^{(k)}\right\|_2}.
        \end{equation}
    See Appendix~\ref{app:review} for more detailed properties of the Ginibre ensemble and the distance measure.

    Note that this is a probabilistic statement on the concentration of moments for any finite $k$ in the limit of $d_B\rightarrow\infty$: it says that, with high probability, a single typical Haar random unitary produces a Kraus ensemble distributed according to Ginibre with high precision. Hence, this confirms our previous intuition that Haar random global unitary leads to Ginibre-distributed Kraus ensemble, and forms an RMT baseline for understanding of physical dynamics later.

    We briefly outline the proof, whose details are presented in Appendix~\ref{app:global-haar-proof}. First, we compute the Haar-averaged squared distance $\mathbb{E}[\Delta_\mathrm{Gin}^{(k)\,2}]$ with $U_{AB}$ sampled from Haar measure using Weingarten calculus; then, we employ a useful Lemma (Lemma~1 from~\cite{mok_optimal_2025}) to control off-diagonal contributions from the Weingarten functions; finally, using Markov inequality, we show the probabilistic statement in the Theorem by leveraging the averaged quantity.

\section{Dual-unitary circuits: exactly solvable result}
\label{sec:DU-Floquet}

    We next take a step away from the RMT baseline by introducing locality in dynamics, while still retaining a certain degree of analytical control.
    To this end, we consider dynamics generated by 1D circuits built from local entangling gates which possess {\it dual-unitarity}~\cite{Bertini_entanglement_2019, Bertini_exact_2019, Piroli_exact_2020, Zhou_maximal_2022, Bertini_exactly_2026}: 
    gates whose actions are unitary both along the time and space directions.
    As a paradigmatic example, we focus below on the Floquet KIM, 
    but we expect the resulting statement and proof 
    to hold analogously for generic DU circuits under analogous conditions of solvable initial states and final measurement bases.

    We follow the setup in~\cite{ho_exact_2022}. Consider a 1D open chain of $N=N_A + N_B$ qubits with Floquet unitary
    \begin{equation}
        U_F = U_h e^{-i H_\mathrm{Ising}},
    \end{equation}
    consisting of an Ising interaction $e^{-iH_\mathrm{Ising}}$ for unit time, followed by a kick $U_h$. Here, the Ising interaction takes the form
    \begin{equation}
        H_\mathrm{Ising} = J\sum_{i=1}^{N-1} \sigma_i^z\sigma_{i+1}^z + \sum_{i=1}^Ng_i\sigma_i^z + (b_1 \sigma_1^z + b_N\sigma_N^z),
    \end{equation}
    where the longitudinal field $g_i\notin\frac{\mathbb{Z}\pi}{8}$ is allowed to have disorder, and $b_1 = b_N = \frac{\pi}{4}$ are introduced for convenience, while the kick is a transverse field rotation
    \begin{equation}
        U_h = \exp(-ih\sum_{i=1}^N\sigma_i^y).
    \end{equation}
    Therefore, a depth-$t$ global unitary is
    \begin{equation}
        U_{AB}(t)\equiv U_F^t = \left(U_h e^{-iH_\mathrm{Ising}}\right)^t.
    \end{equation}
    This Floquet model can be represented as a 1D circuit, also conveniently depicted as a tensor network~\cite{ho_exact_2022}. In Fig.~\ref{fig:kim}, we illustrate the tensor network representation of the global unitary modulo a global phase, where the single-qubit red gates are
    \begin{equation}
    \label{eq:red-square}
        \mathsf{Red}(h) = \begin{pmatrix}
            \cos(h) & \sin(h)
            \\
            \sin(h) & -\cos(h)
        \end{pmatrix},
    \end{equation}
    the single-qubit black gates are
    \begin{equation}
    \label{eq:black-dot}
        R_z(2g_i) = \begin{pmatrix}
            e^{-ig_i} & ~
            \\
            ~ & e^{ig_i}
        \end{pmatrix},
    \end{equation}
    and the two-qubit blue gates are
    \begin{equation}
    \label{eq:blue-square}
        \mathsf{Blue}(J) = \begin{pmatrix}
            1 & ~ & ~ & ~
            \\
            ~ & e^{i(2J - \pi/2)} & ~ & ~
            \\
            ~ & ~ & e^{i(2J - \pi/2)} & ~
            \\
            ~ & ~ & ~ & -1
        \end{pmatrix}.
    \end{equation}
    The detailed construction of the tensor network elements is explained in Appendix~\ref{app:DU-KIM-proof}.

    There are spacetime self-dual points for this model at $|J| = |h| = \frac{\pi}{4}$ known as the DU points, but we fix $J = h = \frac{\pi}{4}$ for concreteness in this Section.
    Upon such parameter setting, the red gate becomes a Hadamard gate, and the blue gate becomes a controlled-$Z$ gate, which is proportional to replacing the blue square with a Hadamard. It has been shown that the generator state $|\Psi_{AB}(t)\rangle = U_F^t|+\rangle^{\otimes N}$ deep thermalizes as soon as $t\ge N_A$ in the thermodynamic limit $N_B\rightarrow\infty$~\cite{ho_exact_2022}.

    \begin{figure}
        \centering
        \includegraphics[width=\linewidth]{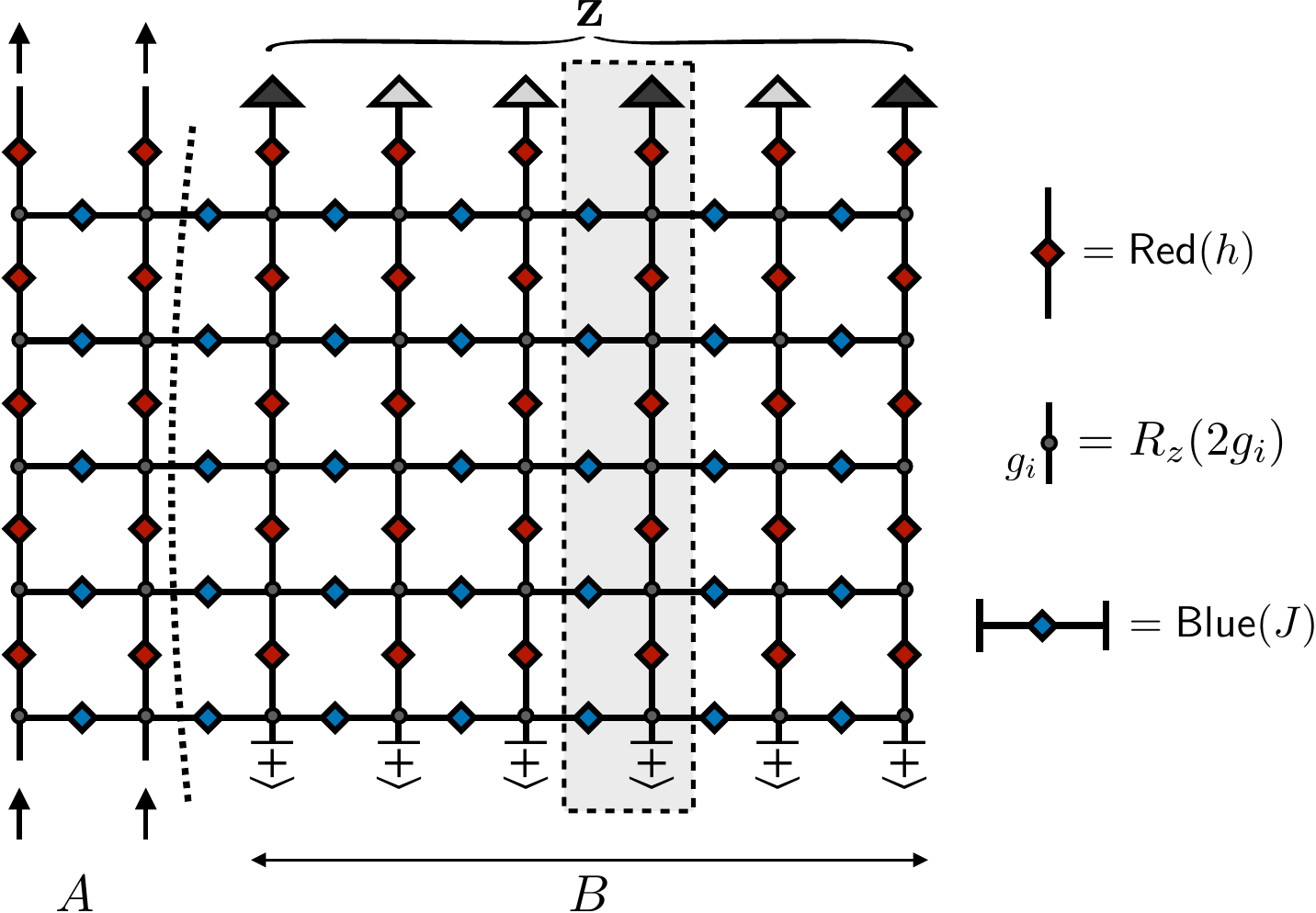}
        \caption{Schematics of kicked Ising model (KIM) represented as a tensor network modulo a global phase, illustrated with $N_A = 2$, $N_B=6$ and $t = 5$. Here, the elementary gates (red gates, black nodes, and blue gates) are defined in
        Eqs.~(\ref{eq:red-square}--\ref{eq:blue-square}) respectively. Initial state on $B$ is set as a solvable state $|B_0\rangle = |+\rangle^{\otimes N_B}$ and the measurement basis on $B$, also solvable, is fixed to be the computational basis $|\mathbf{z}\rangle\equiv\bigotimes_{i=1}^{N_B}|z_i\rangle$, where $|z_i\rangle\in\{|0\rangle, |1\rangle\}$. Upon tensor contraction, this diagram represents a Kraus operator $K_\mathbf{z}$. The shadowed region denotes a \textit{transfer matrix} $V_{z_i}$ of KIM, which propagates information in the spatial direction in the circuit bulk. The spatial cut between $A$ and $B$ is shown with a dashed curve.}
        \label{fig:kim}
    \end{figure}

    For our study of Kraus ensemble of the DU KIM, let us set the initial bath state as a solvable state $|B_0\rangle = |+\rangle^{\otimes N_B}$ and assume a uniform longitudinal field $g_i = g \notin\mathbb{Z}\pi/8$, and finally measure $B$ in the computational basis (a solvable basis). The Floquet dynamics is thus deterministic, and we have the following result:
    \begin{theorem}
    \label{thm:2-precise}
        For the dual-unitary kicked Ising model with uniform $g_i = g \notin \frac{\mathbb{Z}\pi}{8}$ in the thermodynamic limit $N_B\rightarrow\infty$, for any integer $k\ge 1$ and any $t\ge 2N_A$, the Kraus ensemble $\mathcal{K}(t)$ satisfies
        \begin{equation}
            \lim_{N_B\rightarrow\infty} \Delta_\mathrm{Gin}^{(k)}(t) = 1 - \frac{2^{kt}}{(2^t)_k},
        \end{equation}
        where $(x)_k:= x(x+1)\cdots(x+k-1)$ is the rising factorial. For fixed $k$, this distance decays exponentially in time as
        \begin{equation}
            \lim_{N_B\rightarrow\infty}\Delta_\mathrm{Gin}^{(k)}(t) = \mathcal{O}\left(\frac{k^2}{2^t}\right),\qquad(t\rightarrow\infty).
        \end{equation}
    \end{theorem}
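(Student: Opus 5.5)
The plan is to use the spacetime-rotated (``sideways'') picture of Fig.~\ref{fig:kim}. Read along the spatial direction, each bath site $i$ with outcome $z_i$ contributes a transfer matrix $V_{z_i}$. This matrix acts on the $t$ temporal legs, a space of dimension $D:=2^t$. At the DU point with the solvable initial state $|+\rangle$ and computational-basis measurement, I would first check by direct contraction of Eqs.~(\ref{eq:red-square}--\ref{eq:blue-square}) that $\sqrt{2}\,V_{z}$ is unitary on $\mathbb{C}^{D}$. I would then write the vectorized Kraus operator as
\begin{equation}
|K_\mathbf{z}) = 2^{-N_B/2}\, C_A\, W_\mathbf{z}\, |\beta\rangle ,\qquad W_\mathbf{z}:=\prod_{i}\bigl(\sqrt{2}\,V_{z_i}\bigr)\in \mathrm{U}(D).
\end{equation}
Here $|\beta\rangle$ is a fixed normalized boundary vector from the far open end of $B$, contributing at most a constant phase/normalization factor that I would fix explicitly. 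The map $C_A:\mathbb{C}^D\to\mathbb{C}^{d_A^2}$ is the contraction through the $A$ region, taken together with its open input and output legs.

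The second step is to show that for $t\ge 2N_A$, $C_A$ is, up to a scalar, a coisometry onto $\mathbb{C}^{d_A^2}$, i.e.\ $C_A C_A^\dagger \propto \mathbbm{1}_{d_A^2}$. This is the same mechanism as in~\cite{ho_exact_2022}, where the $A$ region sideways acts as a unitary that routes $2N_A$ of the $t$ temporal legs onto the input and output legs of $A$. The condition $D=2^t\ge d_A^2$ is exactly what allows this. Fixing the normalization against $d=d_Ad_B$ then gives $\sqrt{d}\,|K_\mathbf{z})=\sqrt{D}\,P\,W_\mathbf{z}|\beta\rangle$ with $P$ a coisometry. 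I would verify this normalization by checking the trace-preserving condition $\sum_\mathbf{z}K_\mathbf{z}^\dagger K_\mathbf{z}=\mathbbm{1}$.

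The third step, which I expect to be the main obstacle, is the thermodynamic limit. Uniform sampling of $\mathbf{z}\in\{0,1\}^{N_B}$ makes $W_\mathbf{z}$ a product of $N_B$ i.i.d.\ random unitaries drawn uniformly from $\{\sqrt2 V_0,\sqrt2 V_1\}$. I want its $k$-th moments (of $W|\beta\rangle$) to converge to the Haar moments. Following~\cite{ho_exact_2022}, I would show that for $g\notin\mathbb{Z}\pi/8$ the pair $\{\sqrt2V_0,\sqrt2V_1\}$ generates a subgroup acting irreducibly on the relevant $k$-fold tensor representations, or more weakly is $k$-design-generating up to phases. The $k$-th moment operator of one step is then a contraction on the complement of the Haar-invariant subspace. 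Its $N_B$-th power therefore converges exponentially to the Haar projector, so the empirical moment converges to the Haar-state moment. The delicate points are the global-phase invariance and proving there are no extra invariant vectors. For these I would first try to reuse the Lie-algebra generation argument of~\cite{ho_exact_2022} verbatim, since the transfer matrices there coincide with ours.

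Finally, the moment computation is straightforward. For a Haar-random unit vector $v\in\mathbb{C}^D$ we have $\mathbb{E}[(|v\rangle\langle v|)^{\otimes k}]=\Pi_{\rm sym}^{(k)}\binom{D+k-1}{k}^{-1}$. Projecting with $P^{\otimes k}$ and multiplying by $D^k$ gives
\begin{equation}
d^kM^{(k)}_{\rm emp}\to \frac{D^k\,k!}{(D)_k}\,\Pi^{(k)}_{{\rm sym},d_A^2}=\frac{D^k}{(D)_k}\,M^{(k)}_{\rm Gin},
\end{equation}
where the last equality uses the vectorized Ginibre moment $M^{(k)}_{\rm Gin}=k!\,\Pi^{(k)}_{\rm sym}$ on $\mathbb{C}^{d_A^2}$ (Appendix~\ref{app:review}). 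Since $D^k/(D)_k\le 1$, the normalized distance is exactly $1-2^{kt}/(2^t)_k$. Expanding $\prod_{j<k}(1+j/D)^{-1}=1-k(k-1)/(2D)+\dots$ then yields the $\mathcal{O}(k^2/2^t)$ decay.
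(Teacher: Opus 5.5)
Your proposal is correct and follows essentially the same route as the paper. The shared steps are: sideways transfer matrices $\tfrac{1}{\sqrt2}U_{z_i}$ acting on $|+\rangle^{\otimes t}$; an edge operator that for $t\ge 2N_A$ equals $2^{-N_A/2}(\mathbbm{1}^{\otimes N_A}\otimes\langle+|^{\otimes(t-2N_A)}\otimes\mathbbm{1}^{\otimes N_A})\mathcal{V}$, which is your scaled coisometry; the exact unitary-design property of~\cite{ho_exact_2022} in the thermodynamic limit; and projecting $\sum_{\sigma}W_\sigma/(2^t)_k$ onto $k$ copies of $2N_A$ qubits to obtain $\frac{2^{kt}}{(2^t)_k}M^{(k)}_{\rm Gin}$. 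The differences are minor: the paper cites that design theorem rather than re-deriving spectral-gap convergence, and it fixes the prefactor by counting gate normalizations rather than through trace preservation.
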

    Note that we take the thermodynamic limit (TDL) first, so the limiting $k$-th moment distance has no $d_B$ dependence. 
    
    Our proof closely follows that of~\cite{ho_exact_2022}, which we sketch as follows. Instead of viewing the quantum circuit as effecting (unitary) evolution in time, i.e., reading its action vertically in Fig.~\ref{fig:kim}, we can view it as effecting evolution {\it sideways} in space, i.e., read its action horizontally from right to left to compute the Kraus map $K_{\mathbf{z}}$. 
    In other words, we perform a {\it spacetime duality transformation}.
    Crucially, because the local gates of the circuit are DU and the initial state and final measurement basis on $B$ are solvable, the sideways evolution is also unitary, i.e., the transfer matrices defined by the gray box in Fig.~\ref{fig:kim} are proportional to $t$-qubit unitaries.
    Note that there are two different transfer matrices $V_{z_i}$, related to the measurement outcomes $z_i\in\{0, 1\}$ at site~$i$.
    It was further shown by~\cite{ho_exact_2022} that the set of transfer matrices $\{ V_0, V_1\}$ for the DU KIM with parameters as specified in the theorem   forms a universal gate set (meaning that any $t$-qubit unitary can be formed by some long products of them), which further implies that the states living on the $t$ qubits of the spatial cut between $A$ and $B$ (dashed curve of Fig.~\ref{fig:kim}), when considered over all measurement outcomes, are {\it Haar-randomly distributed} in the large $N_B$ limit. 
    Finally, one can derive that the remaining portion of the circuit on $A$ (i.e., the part of the circuit to the left of the dashed line in Fig.~\ref{fig:kim}) behaves proportionally to an isometry projecting the $t$-qubit Haar random  state to a $2N_A$-qubit state (note half of these qubits are the input of $A$ and half of these are the output of $A$). 
    For details of the proof, see Appendix~\ref{app:DU-KIM-proof}. 
    For generic DU circuits, we simply replace the initial state and measurement basis with a solvable initial state and measurement scheme, and all statements follow analogously.

\section{Generic 1D circuits and universal Ansatz}
\label{sec:ruc}

We now turn to the physically relevant scenarios of {\it generic} 1D local quantum circuits and their emergent Kraus ensembles. 
    To concretize discussions below, consider a minimally structured model for dynamics of a brickwork unitary circuit acting on qudits with local dimension $q$. Such circuits constitute a wide class of spatially extended many-body quantum dynamics~\cite{fisher_random_2023, chalker_random_2026}. Precisely, let a 1D array of $N$ (taken to be even for convenience) qudits interact through layers of nearest-neighbor 2-qudit gates, arranged in a brickwork architecture with open boundary condition as shown in Fig.~\ref{fig:ruc}. Dynamics is built up by successively applying unitaries
    \begin{equation}
    U_\tau=
    \begin{cases}
        \bigotimes_{i=1}^{N/2} u_{2i-1,2i}^{(\tau)} &\text{odd }\tau
        \vspace{5pt}
        \\
        \bigotimes_{i=1}^{N/2 - 1}u_{2i,2i+1}^{(\tau)} &\text{even }\tau,
    \end{cases}
    \end{equation}
    for the odd $\tau$ and even $\tau$ circuit layers, where each local $u_{i,j}^{(\tau)}$ gate acts on the $i$-th and $j$-th site and is selected from $\mathrm{U}(q^2)$, the space of 2-qudit gates. The choice of each local gate may or may not be random. A depth-$t$ circuit is thus described by
    \begin{equation}
        U_{AB}(t)\equiv U_t U_{t-1}\cdots U_2 U_1.
    \end{equation}
    We consider a finite $N_A$ number of qudits as $A$ and fix the initial state on $B$ as $|B_0\rangle = |0\rangle^{\otimes N_B}$, and measure $B$ in the qudit computational basis following dynamics. 
    
    \begin{figure}
        \centering
        \includegraphics[width=\linewidth]{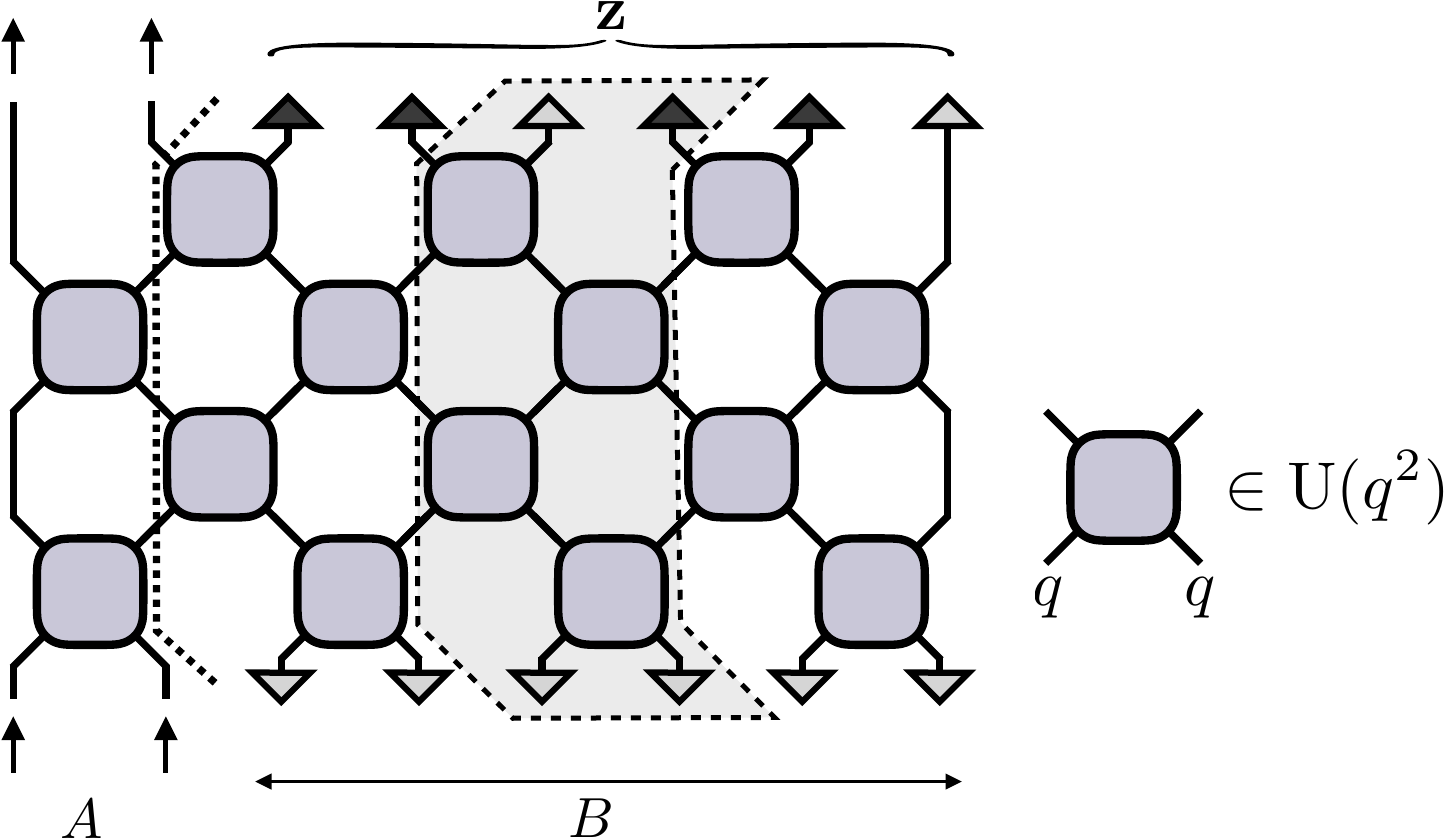}
        \caption{Schematic of a 1D brickwork unitary circuit of qudits, illustrated here with $N_A=2$, $N_B=6$, and $t=4$. Here, the local gates are selected from $\mathrm{U}(q^2)$, and the final measurement on $B$ is projective in local computational basis $|\mathbf{z}\rangle\equiv\bigotimes_{i=1}^{N_B}|z_i\rangle$, where $|z_i\rangle\in\{|a\rangle\}_{a=0}^{q-1}$ is the qudit computational basis. Upon tensor contraction, this diagram represents a Kraus operator $K_\mathbf{z}$. The shadowed region denotes a \textit{transfer matrix} of the brickwork unitary circuit, which propagates information in the spatial direction in the circuit bulk. The spatial cut between $A$ and $B$ is shown with a dashed curve.}
        \label{fig:ruc}
    \end{figure}
    
    To analyze the Kraus ensembles arising from this system, we can similarly perform a spacetime dual transformation like in the DU case, and view the circuit as effecting evolution sideways through space to compute the Kraus maps $K_{\mathbf{z}}$. However, a crucial difference is that the transfer matrices $V_{z_i}$ which effect evolution in space are no longer unitary.
    Thus, tracing the argument of the emergence of Ginibre distribution in the DU circuit case, we see that a $t$-qudit temporal state living on the spatial slice between $A$ and $B$ (dashed curve of Fig.~\ref{fig:ruc}), which is the result of a long product of transfer matrices $V_{z_i}$,
    is no longer Haar randomly distributed. In particular, while one may expect its orientation to still be close to uniformly distributed, its norm is not unity but fluctuates depending on the measurement outcomes. This motivates a modification to the previous idealized Ginibre distribution. 
    As we argue below in Sec.~\ref{sec:origin-ansatz} and verify numerically in Sec.~\ref{sec:numerics}, we identify this as a $\mathrm{LogNormal}_{\sigma^2}$ component, where the parameter $\sigma^2$   encodes a nontrivial  scaling of space and time along which the conjectured target ensemble $\mathbf{GinUE}\times\mathrm{LogNormal}_{\sigma^2}$ arises: 

        \begin{conjecture}
        \label{conj:1-precise}
            For a generic local 1D circuit model with no conserved quantities, let $U_{AB}(t)$ be its depth-$t$ circuit with fixed subsystem size $N_A$ and variable bath size $N_B$. We take $B$ to be initialized in a simple product state and measured in the computational basis. Then, in the joint limit $N_B,t\rightarrow\infty$ with
            \begin{equation}
            \label{eq:sigma2}
                \sigma^2 = \frac{\lambda N_B}{\gamma^t},
            \end{equation}
            held fixed, the $k$-th moment distance between the Kraus ensemble $\mathcal{K}(t)$ and target ensemble $\mathbf{GinUE}\times\mathrm{LogNormal}_{\sigma^2}$ decays exponentially in time as
            \begin{equation}
                \Delta_{\mathrm{th}}^{(k)} \sim e^{-c_k t},\qquad(N_B,t\rightarrow\infty\text{ with }\sigma^2\text{ fixed}),
            \end{equation}
            with decay rates $c_k > 0$. Here, $\lambda\ge 0$ and $\gamma > 1$ are model-dependent constants but are otherwise independent of $N_A$, $N_B$, or $t$. 
        \end{conjecture}
Above, the $k$-th moment distance is defined as
    \begin{equation}
    \label{eq:distance-th-conj-1}
        \Delta_\mathrm{th}^{(k)} = \frac{\left\| d^k M_\mathrm{emp}^{(k)} - M_\mathrm{th}^{(k)}(\sigma^2)\right\|_2}{\left\| M_\mathrm{th}^{(k)}(\sigma^2) \right\|_2},
    \end{equation}
where $k$-th moment $M_\mathrm{th}^{(k)}(\sigma^2)$ of the theoretical ensemble $\mathbf{GinUE}\times\mathrm{LogNormal}_{\sigma^2}$ has the form 
    \begin{equation}
        \label{eq:kth-moment-family}
            M_\mathrm{th}^{(k)}(\sigma^2) = \exp\left(2k(k-1)\sigma^2\right)M_\mathrm{Gin}^{(k)},
        \end{equation}
see Appendix~\ref{app:review}. Note that this reduces to the $k$-th moment of the Ginibre distribution when $\sigma^2=0$. 
        
    This conjecture unveils a  universal spacetime scaling Eq.~\eqref{eq:sigma2} in generic 1D quantum circuits along which the Kraus ensemble converges to a one-parameter universal form, specified by non-universal, model-dependent parameters $(\lambda, \gamma)$.
    More specifically, the joint scalings follow $t = \log_\gamma(N_B) + \log_\gamma(\lambda / \sigma^2)$, which describe a family of vertically-shifted curves which foliate the spacetime diagram, shown pictorially in Fig.~\ref{fig:concept}(b).

    The family of critical joint scalings further separates the limiting behavior of the Kraus ensembles under two distinct scaling regimes. For deep circuits, i.e., circuit depth $t=\log_\gamma(N_B)+\omega(1)$ growing faster than $\log_\gamma(N_B) + c$, where $c$ is an arbitrary real constant (for example, $t \sim \text{poly}(N_B)$), we have $\sigma^2\rightarrow 0$. Thus, we can expect the emergent Kraus ensemble to be described by a plain $\mathbf{GinUE}$ without the fluctuation corrections. 
    This scaling is consistent with recent results~\cite{schuster_science_2025, chen_incompressibility_2024} showing that random 1D circuits can form approximate unitary designs in logarithmic depth, and are ``practically indistinguishable'' from Haar random unitaries. 

    Conversely, for shallow circuits, i.e., circuit depth $t = \log_\gamma(N_B) - \omega(1)>0$ growing slower than $\log_\gamma(N_B) + c$,  $\sigma^2\rightarrow\infty$ and so there is no nontrivial limiting stable distribution that the Kraus ensemble $\mathcal{K}$ converges to. In this case, there are substantial fluctuations in the norms of the Kraus operators across orders of magnitude that become divergent in this limit. This is a prediction consistent with the universal log-normal distribution emerging in global bitstring probabilities found in shallow 1D chaotic circuits~\cite{christopoulos_universal_2025, sauliere_universality_2025, sauliere_universality_noisy_2025}.
    We include a discussion on self-consistency of the Ansatz, particularly in the diverging limit, in Appendix~\ref{app:ansatz-more-details}.

    \subsection{Origin of Ansatz}
    \label{sec:origin-ansatz}

        We now motivate the form of our Ansatz by invoking the intuition of dynamical purification~\cite{Gullans_dynamical_2020, fidkowski_how_2021} and long products of random matrices~\cite{bulchandani_random-matrix_2024,de_luca_universality_2023, Gerbino_dyson-brownian_2024, mochizuki_measurement-induced_2025}.

        As illustrated in Fig.~\ref{fig:oseledets}, the Kraus operators of any 1D local circuit can be obtained by sideways contracting the tensor network representation of the circuit. Concretely, each gray box (viewed from right to left) is a transfer matrix $V_{z_i}$ consisting of a vertical slice of the circuit on $B$ together with the initial state and final measurement outcome $z_i$ on that slice. The transfer matrices may be unitary (special DU solvable case) or non-unitary (generic case), corresponding to discussions in Secs.~\ref{sec:DU-Floquet}~and~\ref{sec:ruc}, respectively. Here, we focus on the latter.

        \begin{figure}
            \centering
            \includegraphics[width=1\linewidth]{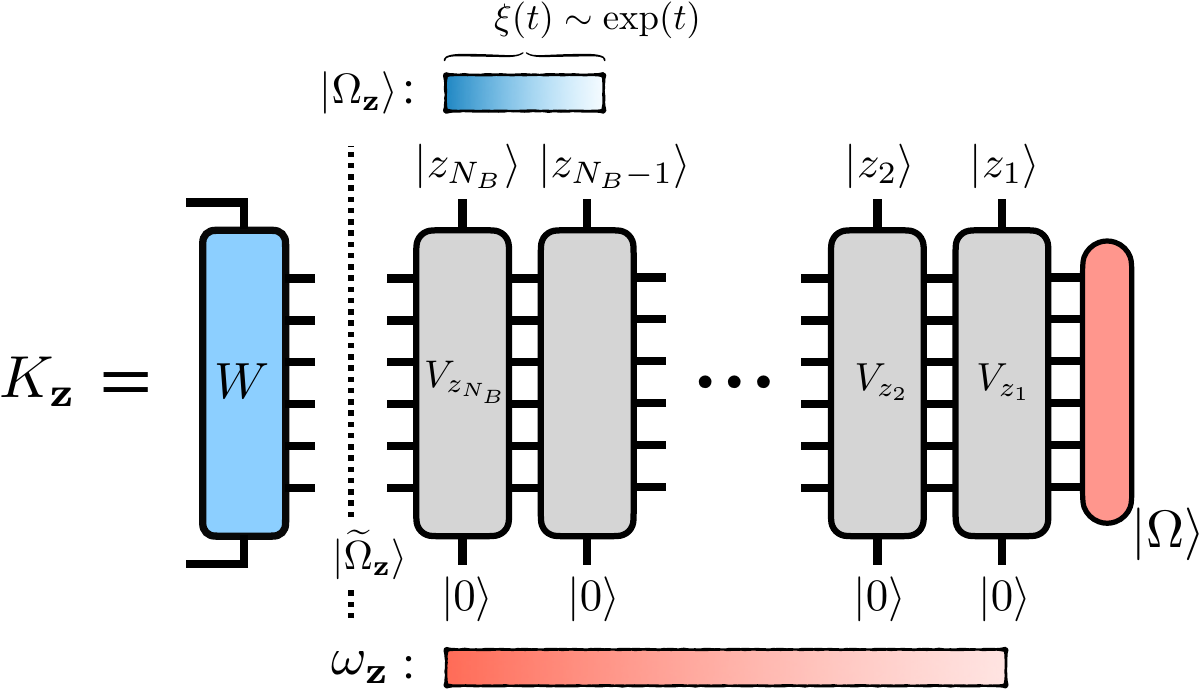}
            \caption{ 
            Schematics of a 1D circuit viewed as tensor network. $K_\mathbf{z}$ can be viewed as the circuit contracted from right to left, equivalently a $t$-qubit state $|\Omega\rangle$ evolving through transfer matrices 
            followed by left multiplying the edge operator $W$. The unnormalized state can be decomposed into $|\widetilde{\Omega}_\mathbf{z}\rangle = \exp(\omega_\mathbf{z})|\Omega_\mathbf{z}\rangle$, namely its length and orientation. The memory span of $|\Omega_\mathbf{z}\rangle$ is shown as the blue bar with a typical length $\xi(t)\sim\exp(t)$ corresponding to the purification length~\cite{ippoliti_dynamical_2023}, while $\omega_\mathbf{z}$ depends on the entire history $\mathbf{z}$ with length $N_B$ shown as the red bar. Thus the distributions of $|\Omega_\mathbf{z}\rangle$ and $\omega_\mathbf{z}$ are expected to be decoupled when $N_B\gg\xi(t)$.
            }
            \label{fig:oseledets}
        \end{figure}

        Concretely, a Kraus operator is represented in Fig.~\ref{fig:oseledets} as a $t$-qubit state $|\Omega\rangle$ (shown as a red box) passing through the non-unitary circuit bulk $B$, resulting in a $t$-qubit unnormalized state $|\widetilde{\Omega}_\mathbf{z}\rangle=\overset{\leftarrow}{\prod}{}^{N_B}_{l=1}V_{z_l}|\Omega\rangle$, which is then acted on by the edge of the circuit $W$ (the blue box):
        \begin{equation}
            |K_\mathbf{z}) = W|\widetilde{\Omega}_\mathbf{z}\rangle.
        \end{equation}
        Here, $|K_\mathbf{z})$ represents the vectorized Kraus operator, and $W$ is an operator that maps a $t$-qubit state to a $(2N_A)$-qubit state. The unnormalized state can be written as $|\widetilde{\Omega}_\mathbf{z}\rangle = \exp({\omega_\mathbf{z}})|\Omega_\mathbf{z}\rangle$, where $|\Omega_\mathbf{z}\rangle$ is a unit vector and $\exp({\omega_{\mathbf{z}}}) = \| |\widetilde{\Omega}_\mathbf{z}\rangle\|$ is the norm. This allows us to separate the effect of bulk and edge of the circuit by
        \begin{equation}
        \label{eq:kraus-from-tensor}
            |K_\mathbf{z}) = W|\Omega_\mathbf{z}\rangle\times \exp(\omega_\mathbf{z}).
        \end{equation}
        
        In the following, we argue that $W|\Omega_\mathbf{z}\rangle$ reproduces the Ginibre part, while $\exp(\omega_\mathbf{z})$ follows a log-normal distribution. We further argue that the two parts are asymptotically independent in the $N_B\gg\xi(t)$ limit, where $\xi(t)\sim\exp(t)$ is known as the \textit{purification length}~\cite{ippoliti_dynamical_2023}, hence motivating the factorized form of our Ansatz.

        \paragraph{Oseledets} First, we explain the origin of the log-normal term. Oseledets' multiplicative ergodic theorem~\cite{Oseledets_1968} states that the product of a long sequence of $L$ i.i.d. random matrices  $V_i$ has the following limiting behavior:
        \begin{equation}
        \label{eq:oseledets-original}
            \lim_{L\rightarrow\infty}\frac{1}{L}\log\|V_L\cdots V_2V_1\| = \lambda_1^{(\infty)}\quad\text{almost surely,}
        \end{equation}
        where $\{V_i\}_{i=1}^{L}$ are $D\times D$ matrices sampled from the same distribution with a mild regularity assumption~\footnote{Oseledet's theorem assumes the logarithmic integrability condition $\mathbb E\left[\max(\log(\|V\|), 0)\right]<\infty$ for the underlying distribution of $V_i$'s. This prevents excessively heavy upper tails for the single-step matrix norm $\| V\|$}.
        For finite matrices, the particular choice of norm $\|\cdot\|$ does not matter here as they are all equivalent and would lead to the same {\it fixed, non-random} $\lambda_1^{(\infty)}$, known as the leading \textit{Lyapunov exponent}.
        For our purposes, consider the long product of transfer matrices in the bulk acting on $|\Omega\rangle$:
        \begin{equation}
            V_{z_{N_B}} \cdots V_{z_2}V_{z_1}|\Omega\rangle = \exp(\omega_\mathbf{z})|\Omega_\mathbf{z}\rangle,
        \end{equation}
        with $D=2^t$ for the $t$ qubits. From Oseledet's theorem one would expect that
        \begin{equation}
            \omega_\mathbf{z} = \lambda_1^{(\infty)}N_B + \mathcal{O}(\sqrt{N_B}) 
        \end{equation}
        for some fixed $\lambda_1^{(\infty)}$; fluctuations 
        arise only in the $\mathcal{O}(\sqrt{N_B})$ term, which is asymptotically suppressed relative to the $N_B$ term in the $N_B\rightarrow\infty$ limit. The collection of log norms $\{\omega_\mathbf{z}\}$ thus follows a distribution with a linear drift in $N_B$ and a diffusive spread as $\sqrt{N_B}$. This aligns with the simple intuition of multiplying independent random variables, which can be identified as the sum of their logs, which would then follow the conventional central limit theorem. Hence, it is natural to propose a distribution for the log norms $\{\omega_\mathbf{z}\}$ as:
        \begin{equation}
            \omega \sim \mathcal{N}(\mu =\lambda_1^{(\infty)}N_B, \sigma^2\propto N_B),
        \end{equation}
        which gives rise to the log-normal distribution of $\exp(\omega_\mathbf{z})$. Note that this highlights the $N_B$ dependence, but does not yet address the $t$ dependence of the Ansatz.

        Next, to understand the origin of Ginibre, we point out that
        in the limit of long product, the vector $|\Omega_\mathbf{z}\rangle$ follows a stationary distribution described by the so-called \textit{Furstenberg measure}~\cite{Furstenberg_noncommuting_1963}. It is a distribution of normalized $D$-dimensional vectors that remains invariant under the application of an additional single-step random matrix ($V$) followed by normalizing again to unit vectors. Generically, the Furstenberg measure depends on the distribution of $V$ and can be highly nontrivial.
        Nevertheless, for chaotic local dynamics, we expect a long sequence of random transfer matrices to drive the direction of $|\Omega_\mathbf{z}\rangle$ towards a stationary distribution such that no preferred direction survives after subsequent action of the circuit edge $W$.
        For the DU KIM, a stronger statement holds analytically:
        the stationary distribution itself is exactly the Haar distribution on $t$-qubit states, a key result shown by~\cite{ho_exact_2022} that we discuss further in Appendix~\ref{app:DU-KIM-proof}.

        To connect this back to the Kraus ensemble, recall in Eq.~\eqref{eq:kraus-from-tensor} that the edge operator $W$ maps the $D=2^t$-dimensional space to a $2^{2N_A}$-dimensional space. For $t\gg 2N_A$, we expect $W$ to become increasingly well approximated as an isometry that partially projects out $(t-2N_A)$ qubits, up to a rescaling factor. Thus, we expect $W|\Omega_\mathbf{z}\rangle$ to produce an isotropic Gaussian vector, i.e., a Ginibre matrix up to proper reshaping and rescaling. This is verified for the DU KIM in Appendix~\ref{app:DU-KIM-proof} by showing that the edge $W$ is proportional to a perfect isometry whenever $t\ge 2N_A$, and tested numerically for generic 1D circuits in Sec.~\ref{sec:numerics}.

        At this point, however, there is no guarantee that the distribution of the Ginibre part $W|\Omega_\mathbf{z}\rangle$ is decoupled from the log-normal part $\exp(\omega_\mathbf{z})$. To motivate this, we recall the physics of \textit{dynamical purification} in monitored quantum circuits~\cite{Gullans_dynamical_2020, fidkowski_how_2021, bulchandani_random-matrix_2024,de_luca_universality_2023, Gerbino_dyson-brownian_2024, mochizuki_measurement-induced_2025}, which is the phenomenon of how quantum systems gradually ``forget'' about their initial states in non-unitary dynamics. Originally studied for unitary circuits interspersed with local measurements at some rate, it is shown that in the low-rate regime (or weakly monitored regime) a \textit{mixed dynamical phase} emerges---a system retains its memory for a characteristic time $t\sim\exp(N)$ known as the \textit{purification time} scaling exponentially in system size (conversely in the high-rate regime this time becomes $O(\log N)$). In~\cite{ippoliti_dynamical_2023}, it was established that the non-unitary dynamics in the space direction of a generic 1D unitary circuit maps to the  mixed dynamical phase of monitored quantum circuits and
        thus has 
        information propagating within the circuit bulk with an exponential characteristic decay length $\xi(t)\sim\exp(t)$.

        Consequently, this suggests that $|\Omega_\mathbf{z}\rangle$ in the evaluation of the Kraus map $K_{\mathbf{z}}$ has only memory from the nearest $\sim\xi(t)$ transfer matrices of the long product, i.e., those immediately to the right of the cut between $A$ and $B$. On the other hand, the scalar part $\exp(\omega_\mathbf{z})$ contains contributions from the entire history of the $N_B$ transfer matrices in the long product. Therefore, we expect the two parts to be asymptotically independent when $N_B\gg\xi(t)\sim\exp(t)$. This is illustrated by the colored bars in Fig.~\ref{fig:oseledets}.
        
        \begin{figure*}[!t]
            \centering
            \includegraphics[width=0.9\linewidth]{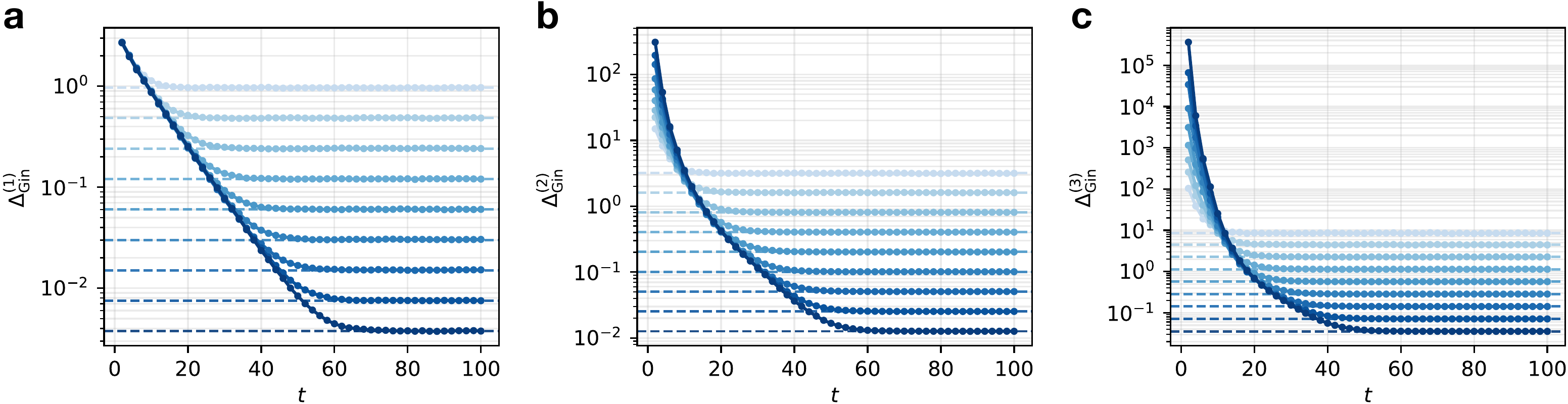}
            \caption{Distance between empirical Kraus ensembles of brickwork RUC and the fixed Ginibre ensemble. Here we fix $N_A=2$ and show $N_B=4,6,8,10,12,14,16,18,20$ (light-dark). (a-c): Distance $\Delta^{(k)}_\mathrm{Gin}$ shown for $k=1,2,3$. Each data point is averaged over $100$ circuit realizations. The dashed lines represent the averaged distance $\Delta_\mathrm{Gin}^{(k)}$ for global Haar random unitary (also over $100$ realizations) with corresponding bath sizes $N_B$, and they scale as $1/\sqrt{d_B}$ (light-dark). For $k=2,3$, early-time curves diverge and do not follow a common envelope, hinting at the failure of the Ginibre ensemble at capturing early-time behaviors.
            }
            \label{fig:dist-Gin}
        \end{figure*}

        \paragraph{Full Ansatz} While the physical motivation laid out above applies only to the $N_B\gg\xi(t)$ scenario, we nevertheless assume factorization of the Ginibre and log-normal parts when extending to a full Ansatz for arbitrary $N_B$ and $t$~\footnote{In the deep circuit $\sigma^2\rightarrow 0$ limit, the log-normal part of the Ansatz vanishes and asymptotic statistical independence between the two parts is trivially satisfied. Thus, the only nontrivial extension of the factorized Ansatz concerns scaling regimes with $\sigma^2 > 0$ but not yet $N_B\gg\xi(t)$, which we will verify numerically.}.
        This leads us to the form for the rescaled Kraus ensemble as
        \begin{equation}
            \sqrt{d}K \sim\mathbf{GinUE}\times\mathrm{LogNormal}(\mu, \sigma^2),
        \end{equation}
        where $\sigma^2 = a(t) N_B$ is proportional to $N_B$, and the normalization constraint on Kraus maps enforces that $\mu = - \sigma^2 = -a(t) N_B$~\footnote{The normalization condition of Kraus operators is written as $\sum_\mathbf{z}K_\mathbf{z}^\dagger K_\mathbf{z} = \mathbbm{1}_A$. When one constructs the Ansatz, it must satisfy $\mathbb{E}[K^\dagger K]=\frac{1}{d_B}\mathbbm{1}_A$. The only allowable log-normal settings are $Y\sim\mathrm{LogNormal}(\mu=-\sigma^2,\sigma^2)$ such that $\mathbb{E}[Y^2]=1$. See Appendix~\ref{app:review} for more details.}. 
        Our next step is to then determine the functional form of $a(t)$.
        Note as a  consistency check, at  late times $t\rightarrow\infty$ the Ansatz should reproduce the global Haar scenario, and thus the log-normal part should vanish. This implies  $\lim_{t\rightarrow\infty}a(t) = 0$.

        Now, for 1D circuits, we expect the $t$ dependence of $a(t)$  to  be related to the average shrink rate (i.e., negative growth rate) of a typical $t$-qubit normalized state $|\Omega_F\rangle$ (sampled from the \textit{Furstenberg measure}) passing through a single-step transfer matrix $V$ within the bulk. Concretely, we consider the asymptotic behavior of the average growth rate $\mathbb{E}\left[\log( \left\| V|\Omega_F\rangle\right\|)\right]$. If $V$ were unitary, this growth rate would be $0$ independently of $t$. However, for non-unitary transfer matrices $V$ of generic 1D circuits~\footnote{Here, we impose a suitable rescaling on $V$ such that $\mathbb{E}\left[\langle\Omega_F|V^\dagger V|\Omega_F\rangle\right] = 1$ without loss of generality. This is to isolate the drift caused by non-unitarity of the transfer matrices from that caused trivially by the $\sqrt{d}=\sqrt{2^{N_B}}$ rescaling factor for the Kraus ensemble.}, we expect
        \begin{equation}
        \label{eq:average-growth-main-text}
            \mathbb{E}\left[\log( \left\| V|\Omega_F\rangle\right\|)\right] \approx - \frac{\lambda_\mathrm{eff} }{\gamma_\mathrm{eff}^t}.
        \end{equation}
        Here, $\lambda_\mathrm{eff}\ge 0$ and $\gamma_\mathrm{eff} > 1$ are model-dependent constants, and for long products of random matrices, this average shrink in log norm sets the leading Lyapunov exponent $\lambda_1^{(\infty)} = -{\lambda_\mathrm{eff}}/{\gamma_\mathrm{eff}^t}$ in Eq.~\eqref{eq:oseledets-original}. This result can be analytically derived in a toy example of a $t$-qudit Haar random state passing through a single layer of dual Haar random gates (i.e., spacetime dual transformed $2$-qudit Haar random gates), and we further expect similar behavior to hold for generic transfer matrices from 1D circuits. Details of this toy example and numerical verification of Eq.~\eqref{eq:average-growth-main-text} are presented in Appendix~\ref{app:shrink}.
        
        From Eq.~\eqref{eq:average-growth-main-text}, we see that as $t$ increases, the average shrink rate produced by a single-step transfer matrix $V$ correspondingly decays exponentially with $t$. Connecting back to the Ansatz for Kraus maps, we expect $a(t)$ for 1D circuits to be inverse exponential in $t$ as well, namely
        \begin{equation}
            a_\text{1D-circ}(t) = \frac{\lambda}{\gamma^t},
        \end{equation}
        where $\lambda\ge0$ and $\gamma > 1$ are \textit{dressed} versions of $\lambda_\mathrm{eff}$ and $\gamma_\mathrm{eff}$ due to the inclusion of the circuit edge $W$, but the functional scaling form should be fixed by virtue of the structure of transfer matrices in generic 1D circuits.
        We therefore arrive at our full Ansatz:
        \begin{equation}
            \sqrt{d}K\sim\mathbf{GinUE}\times\mathrm{LogNormal}_{\sigma^2},\quad\sigma^2=\frac{\lambda N_B}{\gamma^t},
        \end{equation}
        as advertised.

    \subsection{Numerical verification}
    \label{sec:numerics}

        In this section, we present detailed numerical results for the Kraus ensembles of circuit models in support of our Ansatz. 
        Specifically, we focus on the brickwork random unitary circuit (RUC)~\cite{Nahum_operator_2018, chan_projected_2024} of qubits ($q=2$), which is described by the circuit architecture shown in Fig.~\ref{fig:ruc} where each gate is i.i.d. sampled from the Haar measure of $\mathrm{U}(4)$. Importantly, we stress that we always generate a Kraus ensemble for a fixed realization of the underlying circuit first, before ensemble-averaging the resulting distance quantities of interest.

        As a first rudimentary investigation,  we plot in  Fig.~\ref{fig:dist-Gin}  the distance $\Delta^{(k)}_\mathrm{Gin}$ between the empirical Kraus ensembles and the plain Ginibre ensemble (i.e., without incorporating the log-normal component in our Ansatz). We see that for any fixed $N_B$ and $k$, the distance $\Delta^{(k)}_\mathrm{Gin}$ indeed plateaus at late $t$  to the same value of $\Delta^{(k)}_\mathrm{Gin}$ obtained from global Haar random dynamics with the same system size; these decay like $1/\sqrt{d_B}$ in accordance with Theorem~\ref{thm:1-precise}.
        However, the numerics for $k=2,3$ also show that the distance $\Delta^{(k)}_\mathrm{Gin}$ diverges with $N_B$ at any fixed $t$, indicating that there is a nontrivial scaling of space and time that needs to be identified, and that the  Ginibre ensemble as solely the target ensemble may not be appropriate, in line with our conjecture.

        We thus compare the empirical Kraus ensembles with a running target (one-parameter Ansatz) instead of a fixed target (Ginibre).
        Given a Kraus ensemble $\mathcal{K}(N_B, t)$ denoted by the spacetime size $(N_B, t)$ of the circuit that generates it, our objective is twofold: firstly, we should identify the member in the family of one-parameter Ansatz that best describes the Kraus ensemble at this size. We extract this through a maximum likelihood estimation, resulting in a collection of estimates $\hat\sigma^2(N_B,t)$ which we could further fit to a theoretical scaling prediction $\sigma^2_\mathrm{th}(N_B, t) = \lambda N_B/\gamma^t$ with respect to $N_B$ and $t$; secondly, we need to analyze how good an approximation the theoretical distribution is to the empirical ensemble. To this end, we extract numerics for the distances $\Delta^{(k)}_\mathrm{th}$ to our Ansatz. These two steps are addressed in Secs.~\ref{sec:probing-norms}~and~\ref{sec:running-target}, respectively.

        \subsubsection{Probing distribution of $\|K_\mathbf{z}\|$}
        \label{sec:probing-norms}
        To extract the best estimates $\hat\sigma^2(N_B,t)$, instead of directly working with the full matrices in $\mathcal{K}(N_B, t)$, we identify a scalar quantity closely related to the norms $\|K_\mathbf{z}\|_2$ of the Kraus maps and fit to those scalars.

        Given the Ansatz $\sqrt{d}K\sim \mathbf{GinUE}\times\mathrm{LogNormal}_{\sigma^2}$, we observe that $\sigma^2$ can be estimated from the norms $\|K\|_2$. To see this, let $G\sim\mathbf{GinUE}$ be a $d_A\times d_A$ random Ginibre matrix and $Y\sim\mathrm{LogNormal}_{\sigma^2}$ an independent log-normally distributed random scalar, then $\sqrt{d}K\overset{d}{=}YG$ in distribution, where $K$ represents an ideal Kraus operator. Let $R = \mathrm{log}(\sqrt{d}\|K\|_2)$ be the rescaled log norm, and $X = \|G\|^2_2$, then
        \begin{equation}
            R \overset{d}{=} \log(Y \cdot \|G\|_2) \overset{d}{=} \frac{1}{2}\log(X) + \log(Y),
        \end{equation}
        namely, $R$ is equal in distribution to the sum of two independent random scalars. Here, $X = \|G\|_2^2 = \sum_{ij}|G_{ij}|^2$ is the sum of squares of independent Gaussians, and it follows that $X\sim\Gamma(d_A^2,1)$, an Erlang distribution. $\log(Y)$ follows a normal distribution $\mathcal{N}(\mu=-\sigma^2, \sigma^2)$ by definition. Therefore, the theoretical distribution of $R$ can be obtained by a convolution, and we denote its probability density function (PDF) as $p_{\sigma^2}(r)$, parametrized by $\sigma^2$. The detailed form of $p_{\sigma^2}(r)$ can be found in Appendix~\ref{app:ansatz-more-details}.

        \begin{figure}[!t]
            \centering
            \includegraphics[width=1.00\linewidth]{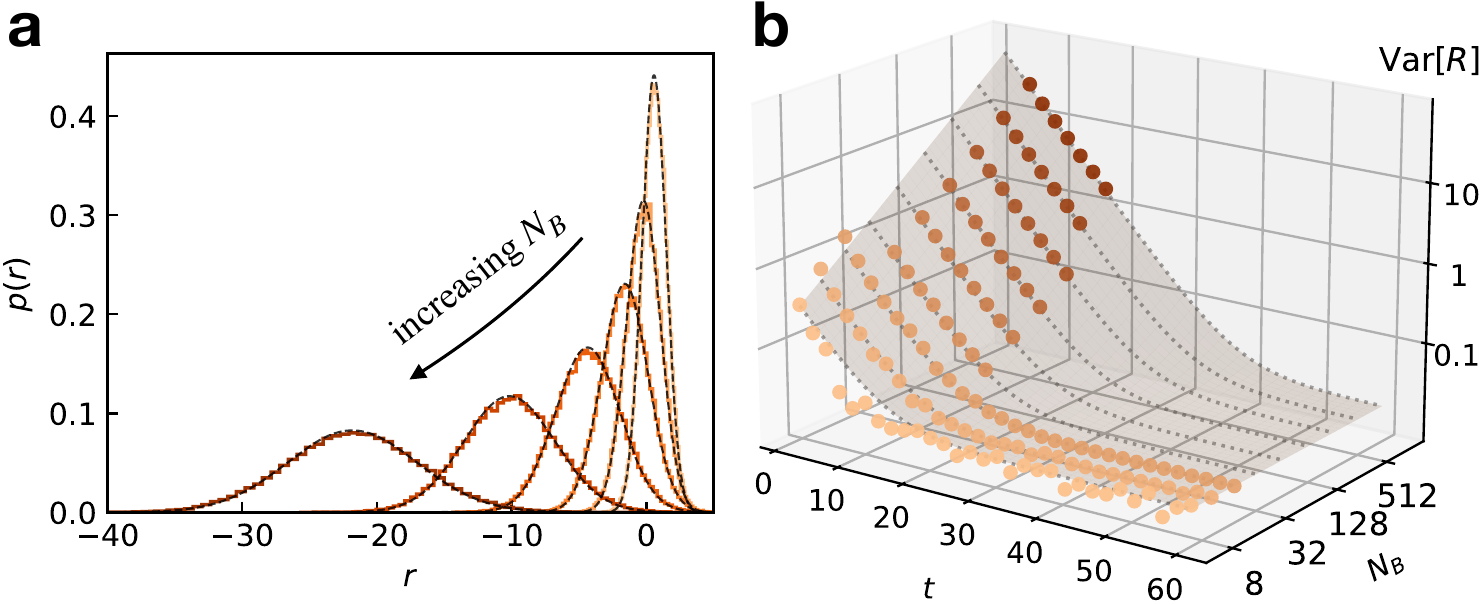}
            \caption{Fitting data shown for RUC. (a) Histogram of rescaled log norms $R_\mathbf{z}=\mathrm{log}(\sqrt{d}\|K_\mathbf{z}\|_2)$ for fixed $N_A=2$, $t = 6$, and $N_B = 32,64,128,256,512,1024$ (light-dark). Each histogram contains $2^{16}$ random samples from the Kraus ensemble. The dashed curves represent theoretical distributions $p_{\hat\sigma^2}(r)$ demonstrating excellent fits. (b) Variance of $\{R_\mathbf{z}\}$ in log scale, with $N_B = 6, 10, 18, 32, 64, 128, 256, 512, 1024$ (light-dark). The light brown surface outlines theoretical prediction $\mathrm{Var}[R]=\sigma^2_\mathrm{th}(N_B, t) + \frac{1}{4}\psi_1(2^{2N_A})$, where $\sigma^2_\mathrm{th} = \lambda N_B / \gamma^t$ is fitted from the best fits $\hat\sigma^2(N_B, t)$. Here for RUC, $\lambda\approx0.0858$ and $\gamma\approx 1.249$. The dashed curves represent slices of that surface with fixed $N_B$ in accordance with the plotted data.}
            \label{fig:sigma2_fit}
        \end{figure}

        Our task of extracting $\hat\sigma^2(N_B,t)$ can therefore be cast as a maximum likelihood estimation (MLE) procedure on the rescaled log norms $R$. Given an empirical Kraus ensemble $\mathcal{K}=\{K_\mathbf{z}\}_{\mathbf{z}=1}^{d_B}$, we compute the discrete set $\{R_\mathbf{z}\equiv\mathrm{log}(\sqrt{d}\|K_\mathbf{z}\|_2)\}_{\mathbf{z}=1}^{d_B}$ of rescaled log norms with PDF $p_\mathcal{K}(r)=\frac{1}{d_B}\sum_\mathbf{z}\delta(r - R_\mathbf{z})$. To determine the optimal fit of $\sigma^2$, we minimize the empirical cross-entropy of the theoretical distribution $p_{\sigma^2}$ under the empirical distribution $p_\mathcal{K}$, namely        
        \begin{equation}
        \begin{aligned}
            H\left(p_\mathcal{K}, p_{\sigma^2}\right) &\equiv -\int\mathrm{d}r\,p_\mathcal{K}(r)\log{p_{\sigma^2}(r)}
            \\
            &= -\frac{1}{d_B}\sum_\mathbf{z}\log p_{\sigma^2}(R_\mathbf{z}),
        \end{aligned}
        \end{equation}
        which reduces to maximizing the log-likelihood by
        \begin{equation}
            \hat\sigma^2 = \arg\max_{\sigma^2\ge 0}\sum_\mathbf{z}\log(p_{\sigma^2}(R_\mathbf{z})),
        \end{equation}
        standard in MLE. This allows us to extract $\hat\sigma^2(N_B, t)$ for any $N_B$ and $t$ from only the norm data $\|K_\mathbf{z}\|_2$, shown in Fig.~\ref{fig:sigma2_fit}(a) for the brickwork RUC. Here, for 1D circuits, we are numerically prohibited from accessing both large bath sizes and long times. Therefore, we either simulate the full Kraus ensemble for $N_B$ up to $20$ and $t$ up to $100$, or perform a uniform Monte Carlo sampling~\footnote{Numerically, we randomly sample a bitstring and calculate the action of a long sequence of transfer matrices on the temporal state $|\Omega\rangle$ prescribed by that bitstring. For numerical stability (especially in the large $N_B$ scenario where the norms of the Kraus operators span a wide range of orders of magnitude), we store the norm $\exp(\omega_\mathbf{z})$ of the temporal vector in log scale separately from its orientation $|\Omega_\mathbf{z}\rangle$, and progressively update them for every action of the transfer matrix.} from the Kraus ensemble for $N_B$ up to $1024$ and $t$ up to $20$.

        Next, we analyze the spacetime scalings that the data $\{\hat\sigma^2(N_B,t)\}$ uncover. To achieve this, we fit them against our theoretical functional form $\sigma^2_\mathrm{th}(N_B, t) = \lambda N_B / \gamma^t$ (as predicted in Sec.~\ref{sec:origin-ansatz}) on the available spacetime points $(N_B, t)$, obtaining $\lambda\approx 0.0858$ and $\gamma\approx1.249$ for brickwork qubit RUC. To visualize the quality of this fit, consider comparing an ideal signal with the empirical signal as follows: the ideal variance of the rescaled log norm $R$ is (see Appendix~\ref{app:ansatz-more-details})
        \begin{equation}
        \label{eq:ideal-var-R}
            \mathrm{Var}[R] = \sigma^2 + \frac{1}{4}\psi_1(d_A^2),
        \end{equation}
        where $\psi_1$ is the trigamma function. In Fig.~\ref{fig:sigma2_fit}(b), we show the empirical variance of $\{R_\mathbf{z}\}$ as scattered points, and overlay with the theoretical surface by substituting $\sigma^2_\mathrm{th}(N_B, t)$ into Eq.~\eqref{eq:ideal-var-R} with the parameters $(\lambda, \gamma)$ extracted from the previous fit for brickwork RUC. The scattered points show excellent agreement with the fitted surface, even though we are numerically restricted to accessing either large bath size or long time, but not both.

        At this point, we have extracted the empirical estimates $\hat\sigma^2(N_B, t)$ for brickwork qubit RUC and confirmed their consistency with the functional form $\sigma^2_\mathrm{th}(N_B, t) = \lambda N_B / \gamma^t$. 
        The same numerical procedure for an off-DU KIM~\footnote{We numerically study the off-DU KIM by setting $J = 1.000$, $h=0.4488$ away from DU point, with $g_i \sim \mathrm{Uniform}\left[\frac{\pi}{7}-\frac{\pi}{12}, \frac{\pi}{7}+\frac{\pi}{12}\right]$ sampled uniformly randomly and independently for each site.} is presented in Appendix~\ref{app:extra-numerics}, where we confirm the identical scaling form of $\sigma_\mathrm{th}^2$, with distinct numerical result for $\lambda\approx0.0364$, $\gamma\approx1.388$. These are system-specific, non-universal parameters.

        \subsubsection{Distance to running target}
        \label{sec:running-target}

        Having identified the spatiotemporal dependence of $\sigma^2$, we turn to specifying how well the running theoretical distribution $\mathbf{GinUE}\times\mathrm{LogNormal}_{\sigma^2}$ (the \textit{target} ensemble) captures the empirical Kraus ensemble statistically. As outlined in Sec.~\ref{sec:overview-setup}, we compare the empirical ensemble with the theoretical ensemble moment-by-moment.

        Recall that the $k$-th moment of the Ansatz is given by Eq.~\eqref{eq:kth-moment-family}. Thus, we aim to show that (here, we drop all explicit $\sigma^2$ dependence associated with subscript ``th'')
        \begin{equation}
            d^k M_\mathrm{emp}^{(k)}\rightarrow M_\mathrm{th}^{(k)} \equiv \underbrace{\exp(2k(k-1)\sigma^2)}_{x_\mathrm{th}^{(k)}}M_\mathrm{Gin}^{(k)},
        \end{equation}
        along any scaling contour $N_B, t\rightarrow\infty$ with $\sigma^2 = \lambda N_B/\gamma^t$ fixed, with the distance $\Delta^{(k)}_\mathrm{th}$ (defined in Eq.~\eqref{eq:distance-th-conj-1}) decaying exponentially in $t$.
        Note that $M_\mathrm{th}^{(k)}$ is always proportional to $M_\mathrm{Gin}^{(k)}$, thus it is helpful to consider the following unique decomposition for the empirical moment
        \begin{equation}
        \label{eq:signal-noise-decomposition}
            d^kM_\mathrm{emp}^{(k)} = x_\mathrm{emp}^{(k)}\left(M^{(k)}_\mathrm{Gin} + M_\perp^{(k)}\right),
        \end{equation}
        where we decompose the empirical moment into a part aligned with $M_\mathrm{Gin}^{(k)}$, and the rest $M_\perp^{(k)}$ orthogonal to $M_\mathrm{Gin}^{(k)}$ in the Hilbert-Schmidt sense: $\mathrm{Tr}(M_\perp^{(k)}M_\mathrm{Gin}^{(k)}) = 0$. This allows us to treat the sources of error in $x_\mathrm{emp}^{(k)}$ and in $M_\perp^{(k)}$ separately.
        Specifically, the coefficient can be found by
        \begin{equation}
            x_\mathrm{emp}^{(k)} = \frac{d^k\mathrm{Tr}\left(M_\mathrm{emp}^{(k)}\right)}{\mathrm{Tr}\left(M_\mathrm{Gin}^{(k)}\right)} = \frac{\mathcal{P}^{(k)}_\mathrm{emp}}{(d_A^2)_k},
        \end{equation}
        where we define the \textit{generalized purity} of the Kraus ensemble as
        \begin{equation}
            \mathcal{P}^{(k)}_\mathrm{emp} := \frac{d^k}{d_B}\sum_\mathbf{z}(K_\mathbf{z}|K_\mathbf{z})^k = d^k\mathrm{Tr}\left( M_\mathrm{emp}^{(k)} \right),
        \end{equation}
        directly related to the moments of the norms $\|K_\mathbf{z}\|_2$.

        Next, to capture the error $\Delta_\mathrm{th}^{(k)}$ with respect to the target ensemble, we introduce two helpful quantities.
        First, we define the \textit{running distance}:
        \begin{equation}
        \label{eq:running-distance}
            \Delta_\mathrm{run}^{(k)} := \frac{\left\| d^k M_\mathrm{emp}^{(k)} - x_\mathrm{emp}^{(k)}M_\mathrm{Gin}^{(k)} \right\|_2}{\left\| x_\mathrm{emp}^{(k)}M_\mathrm{Gin}^{(k)} \right\|_2}
            = \frac{\left\|M_\perp^{(k)} \right\|_2}{\left\| M_\mathrm{Gin}^{(k)}\right\|_2},
        \end{equation}
        as a measure of closeness in terms of alignment in the direction of $M_\mathrm{Gin}^{(k)}$. Second, we define the residue error in generalized purity compared to the theoretical distribution at $\sigma_\mathrm{th}^2(N_B, t)=\lambda N_B/\gamma^t$ extracted previously:
        \begin{equation}
            \delta^{(k)}:= \frac{\left| \mathcal{P}_\mathrm{emp}^{(k)} - \mathcal{P}_\mathrm{th}^{(k)}\right|}{\mathcal{P}_\mathrm{th}^{(k)}}=\frac{\left|x_\mathrm{emp}^{(k)} - x_\mathrm{th}^{(k)}\right|}{x_\mathrm{th}^{(k)}},
        \end{equation}
        where $\mathcal{P}_\mathrm{th}^{(k)} = \exp(2k(k-1)\sigma^2_\mathrm{th}) (d_A^2)_k$ is the theoretical generalized purity set by the ideal spacetime scaling. $\delta^{(k)}$ captures the error in the $k$-th moment of the norm of Kraus compared with our Ansatz at $\sigma_\mathrm{th}^{2}(N_B,t)$.

        \begin{figure*}[!t]
            \centering
            \includegraphics[width=0.85\linewidth]{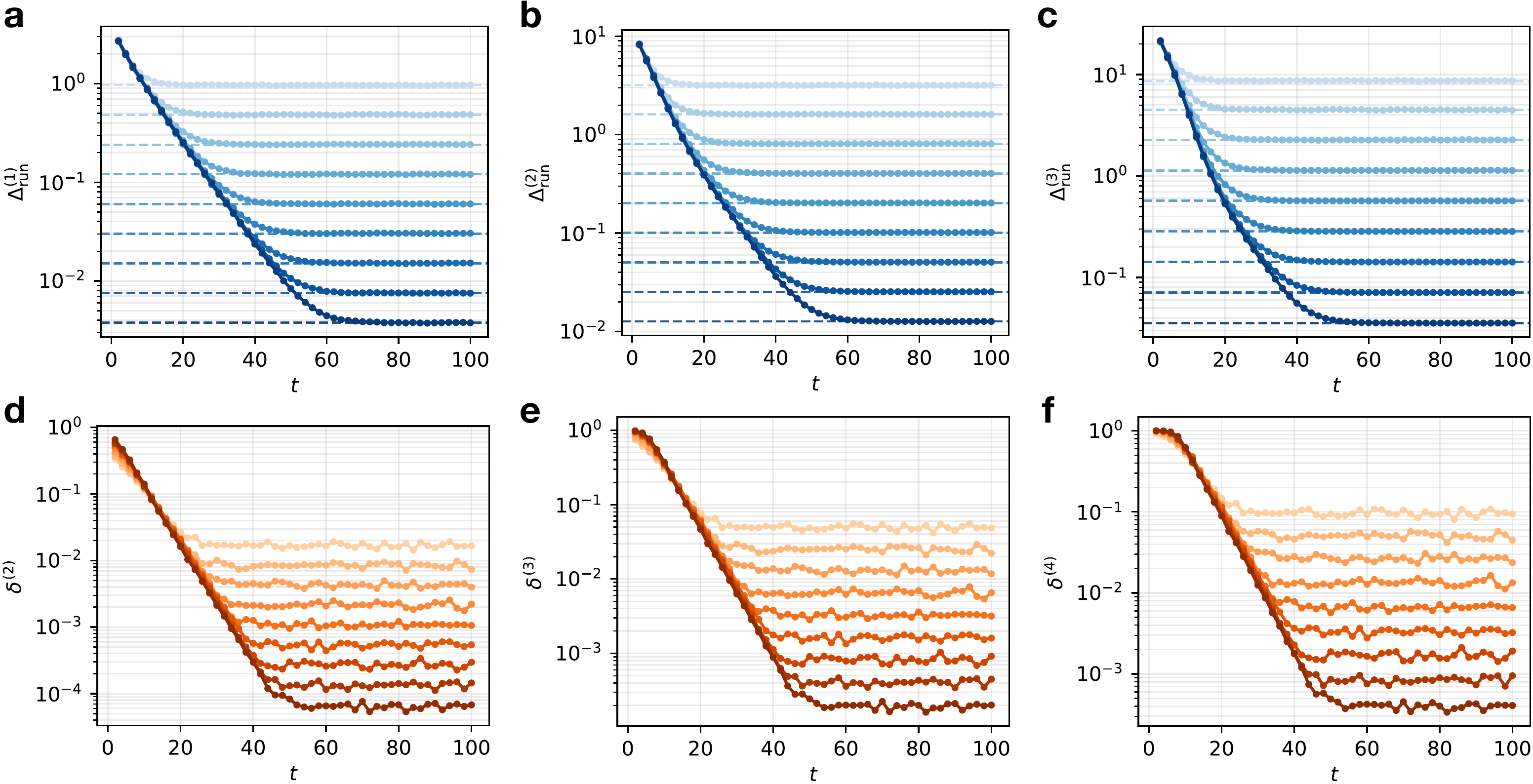}
            \caption{Numerical tests for RUC. Here we fix $N_A=2$ and show $N_B=4,6,8,10,12,14,16,18,20$ (light-dark). (a-c) Running distance $\Delta^{(k)}_\mathrm{run}$ shown for $k=1,2,3$. Each data point is averaged over 100 circuit realizations. These curves collapse in early times and follow an envelope showing a multi-rate exponential decay in $t$ before the curves plateau at a turning time $t_c$ scaling linearly in $N_B$.
            We attribute the multi-rate decay to early-time transient of local circuits, and assume that the envelope crosses over to a simple exponential decay in $t$ at later times. The dashed lines represent the averaged distance $\Delta_\mathrm{Gin}^{(k)}$ for global Haar random unitary (also over $100$ realizations) with corresponding bath size $N_B$. (d-f) Purity relative error  $\delta^{(k)}$ compared to theoretical value, shown for $k=2,3,4$ (as $k=1$ is trivial by normalization constraint of the Kraus ensemble). Each data point is averaged over 100 circuit realizations. The envelope of the curves also follows exponential decay in $t$ at sufficiently large $t$. 
            }
            \label{fig:dist-purity}
        \end{figure*}
        
        We claim that $\Delta_\mathrm{run}^{(k)}$ and $\mathcal{P}_\mathrm{emp}^{(k)}$ (or equivalently $x_\mathrm{emp}^{(k)}$) capture the complete error profile of $\Delta_\mathrm{th}^{(k)}$. In particular, the distance $\Delta_\mathrm{th}^{(k)}$ can be bounded as

        \begin{equation}
        \label{eq:distance-bound}
        \begin{aligned}
            \Delta_{\mathrm{th}}^{(k)} &= 
            \frac{\left\| d^k M_\mathrm{emp}^{(k)} - M_{\mathrm{th}}^{(k)} \right\|_2}{\left\| M_{\mathrm{th}}^{(k)} \right\|_2}
            \\
            &= \frac{\left \| (x_\mathrm{emp}^{(k)} - x_\mathrm{th}^{(k)}) M_\mathrm{Gin}^{(k)} + x_\mathrm{emp}^{(k)} M_\perp^{(k)} \right\|_2}{\left\| x_\mathrm{th}^{(k)}M_\mathrm{Gin}^{(k)}\right\|_2}
            \\
            &\le \sqrt{\delta^{(k)\,2} + (1+\delta^{(k)})^2\Delta_\mathrm{run}^{(k)\,2 
            }},
        \end{aligned}
        \end{equation}
        which also holds for $\Delta_\mathrm{Gin}^{(k)}$ when we set $x_\mathrm{th}^{(k)}=1$.
        This allows us to bound the distance $\Delta_\mathrm{th}^{(k)}$ through separate sources of error. With Eq.~\eqref{eq:distance-bound}, it suffices to show exponential decay of $\Delta_\mathrm{run}^{(k)}$ and $\delta^{(k)}$ separately with respect to $t$ in the joint scaling limit.

        We show the numerical results of $\Delta_\mathrm{run}^{(k)}$ and $\delta^{(k)}$ for brickwork RUC in Fig.~\ref{fig:dist-purity}, where we plot the errors separately for different $N_B$---for a given color within a panel, $N_B$ is fixed---and the late-time plateau values within each panel decay exponentially with $N_B$. However, this does not yet reveal the decay behavior of errors along a particular scaling contour $N_B, t\rightarrow\infty$ with $t=\log_\gamma N_B + c$.

        To understand errors under joint scaling, first observe that for any $k$, both $\Delta_\mathrm{run}^{(k)}$ and $\delta^{(k)}$ show excellent collapse of curves along an envelope of exponential decay in $t$, until a time $t_c\sim\mathcal{O}(N_B)$ when the curve turns to a plateau set by the finite-size effect of $N_B$. This is in sharp contrast to Fig.~\ref{fig:dist-Gin} where a fixed Ginibre target fails to capture the universal behavior of Kraus ensembles.

        If one follows the joint limit of $N_B, t\rightarrow\infty$ with $\sigma^2=\lambda N_B/\gamma^t$ held fixed, circuit depth $t$ scales like $\mathcal{O}\left(\log{N_B}\right)$---much slower than the turning time $t_c$ scaling linearly as $\mathcal{O}(N_B)$. Therefore, we extrapolate and argue that in this joint scaling limit, both $\Delta_\mathrm{run}^{(k)}$ and $\delta^{(k)}$ follow the envelopes without ever hitting a finite-size plateau, and therefore decay exponentially in $t$ indefinitely. Hence, by Eq.~\eqref{eq:distance-bound}, the distance $\Delta_\mathrm{th}^{(k)}$ also decays exponentially in time indefinitely, and we arrive at the convergence behavior stated in Conjecture~\ref{conj:1-precise}.
        
        To recapitulate, we have argued for the exponential decay of $\Delta_\mathrm{th}^{(k)}$ along the critical family of joint scalings as $t=\log_\gamma N_B + c$. In Appendix~\ref{app:extra-numerics}, we carry out the same numerical procedures for the off-DU KIM. We also study the spectral density of the fixed-trace Kraus ensemble, i.e., the spectrum of $K^\dagger K/\mathrm{Tr}(K^\dagger K)$. This is to isolate and probe only the Ginibre part of the Ansatz, and we show that the empirical fixed-trace Kraus ensemble follows the spectral properties of the fixed-trace Wishart--Laguerre ensemble, consistent with the Ansatz. 

\section{Physical consequences of the Ansatz}
\label{sec:consequence-ansatz}

    Having stated and substantiated our conjecture of the universal form of Kraus ensembles emergent in generic 1D quantum circuits, we now move to discussing its physical consequences~\footnote{In this section, motivated by Conjecture~\ref{conj:1-informal}, we make a stronger physical assumption that, in the limiting universal regime, averages over Kraus operators may be replaced by integrals with respect to the full Ansatz measure.}. We derive implications for: (i) deep thermalization, 
    the emergence of maximally random {\it quantum state ensembles} in quantum many-body dynamics from measurements, 
    and (ii) quantum information recovery, where we treat the measurement outcomes $\mathbf{z}$ as  classical side information which can be used as a resource to retrieve initially local quantum information that is scrambled into the global system.

    \subsection{Deep thermalization}
    \label{sec:consequence-ansatz-deep-thermalization}
    Here, we provide a brief recap of \textit{deep thermalization} and argue that the universal behavior of Kraus ensembles provides a microscopic mechanism for deep thermalization to Haar ensembles in generic 1D circuits,
    shown schematically in Fig.~\ref{fig:consequences}(a). Readers familiar with the background may skip the following recap.

    \paragraph{Recap}
    Consider a bipartite many-body state $|\Psi_{AB}\rangle$ with subsystem dimensions $d_A$ and $d_B$, respectively. Projective measurements on $B$ in the computational basis $\{|\mathbf{z}\rangle_B\}$ generate a probabilistic ensemble of pure states on $A$ called the \textit{projected ensemble} (PE):
    \begin{equation}
        \mathcal{E}_A = \{p_\mathbf{z}, |\psi_\mathbf{z}\rangle_A\},
    \end{equation}
    where $p_\mathbf{z} = \left\|\left(\mathbbm{1}_A\otimes\langle\mathbf{z}|_B\right)|\Psi_{AB}\rangle\right\|^2$ is the Born probability of finding $B$ in the state $|\mathbf{z}\rangle_B$, and $|\psi_\mathbf{z}\rangle_A = \left(\mathbbm{1}_A\otimes\langle\mathbf{z}|_B\right)|\Psi_{AB}\rangle/\sqrt{p_\mathbf{z}}$ is the state on $A$ conditioned on the state of $B$ being $|\mathbf{z}\rangle_B$.
    One can understand the PE as a particular physically motivated \textit{unraveling} of the reduced density matrix $\rho_A \equiv \mathrm{Tr}_B|\Psi_{AB}\rangle\langle\Psi_{AB}|$, which is obtained by mixing the PE and discarding the outcome~$\mathbf{z}$: $\rho_A = \sum_\mathbf{z}p_\mathbf{z}|\psi_\mathbf{z}\rangle\langle\psi_\mathbf{z}|$. 

    Suppose now $|\Psi_{AB}\rangle$ arises from quantum many-body dynamics. 
    As discussed in Sec.~\ref{sec:overview-setup}, the local dynamics of subsystem $A$ interacting with its bath $B$ can be described by a quantum channel $\mathcal{N}^{A\rightarrow A}_t$. Under generic chaotic dynamics of a small subsystem $A$ interacting with a large bath $B$, the reduced density matrix $\rho_{A,t} = \mathcal{N}^{A\rightarrow A}_t(\rho_{A,0})$ is expected to approach a Gibbs state at late times---a process known as (regular) {\it quantum thermalization}. 
    Microscopically, this can be understood from the production of entanglement between $A$ and $B$, such that initial quantum information in $A$ gets scrambled into global degrees of freedom and becomes locally irretrievable~\cite{deutsch_quantum_1991, srednicki_chaos_1994, Rigol_thermalization_2008, Dalessio_from_2016}. 
    
    \textit{Deep} thermalization refers instead to a deeper form of equilibration, captured by the PE on $A$ approaching a maximally entropic universal distribution subject to physical constraints. The most generic universal state ensemble is termed the (generalized) \textit{Scrooge} ensemble and is related to a generalized maximal entropy principle on an information-theoretic level, previously studied in~\cite{mark_maximum_2024}. The (generalized) Scrooge ensembles have been investigated in spin systems~\cite{chang_deep_2025, liu_coherence-induced_2025, Feng_resource_2026, mcginley_scrooge_2025, mok_nature_2026}, as well as fermionic and bosonic systems~\cite{lucas_fermions_2023, liu_cv_2024, bejan_matchgate_2025}. Under special cases like in systems without explicit conservation laws, the generalized Scrooge ensemble reduces to the Haar ensemble $\mathcal{E}_H$.

    Previously, analytical results demonstrating the emergence of Haar ensembles in deep thermalization have been limited to very special states $|\Psi_{AB}\rangle$~\cite{cotler_emergent_2023, Ghosh_design_2025, ho_exact_2022, Claeys_emergent_2022, ippoliti_dynamical_2023}. This naturally raises the question we address here: how does Kraus universality imply deep thermalization to Haar ensembles for an arbitrary pure input state on $A$?
   
    \paragraph{Implication of Kraus universality} Given an empirical Kraus ensemble $\{K_\mathbf{z}\}$ and an {\it arbitrary} pure input state $|\psi_0\rangle$ on $A$, the unnormalized PE---a collection of unnormalized states assuming equal weights among the ensemble---can be written as
    \begin{equation}
        \widetilde{\mathcal{E}}_A = \{ |\widetilde\psi_\mathbf{z}\rangle\},\quad\text{where }|\widetilde\psi_\mathbf{z}\rangle=K_\mathbf{z}|\psi_0\rangle.
    \end{equation}
    Here, each element contains information of its Born probability $p_\mathbf{z} = \langle\widetilde\psi_\mathbf{z}|\widetilde\psi_\mathbf{z}\rangle$ and its corresponding normalized state $|\psi_\mathbf{z}\rangle = |\widetilde\psi_\mathbf{z}\rangle/\sqrt{p_\mathbf{z}}$. Hence, information of the normalized PE is fully derivable from unnormalized PE.

    To apply the universal form of Kraus ensembles, note that our Ansatz enjoys both left and right unitary invariance, i.e., for $\sqrt{d}K\sim\mathbf{GinUE}\times\mathrm{LogNormal}_{\sigma^2}$ and any unitary $U_L$ and $U_R$, there is
    \begin{equation}
        K\overset{d}{=}U_L K U_R^\dagger.
    \end{equation}
    Intuitively, right unitary invariance implies that the unnormalized PE is independent of the input state $|\psi_0\rangle$, as expected for a small subsystem thermalized by a big bath. Furthermore, left unitary invariance implies that the unnormalized PE is rotationally invariant and, upon normalization, produces the Haar ensemble $\mathcal{E}_H$.

    Therefore, under the limiting form of the Kraus ensemble specified by the conjecture and independently of the input state $|\psi_0\rangle$, the unnormalized PE follows:
    \begin{equation}
    \label{eq:ansatz-unnormalized-states}
        \sqrt{d}|\widetilde\psi\rangle\sim\mathbf{Gauss}_{d_A}\times\mathrm{LogNormal}_{\sigma^2},
    \end{equation}
    where $\mathbf{Gauss}_{d_A}$ represents a $d_A$-dimensional standard complex Gaussian distribution. This is nothing but a column of the Ginibre matrix, and the log-normal part provides additional fluctuations in the Born probabilities.
    
    It thus follows that the $k$-th moment of the PE is
    \begin{equation}
    \begin{aligned}
        \rho^{(k)} = \sum_\mathbf{z}\frac{|\widetilde\psi_\mathbf{z}\rangle\langle\widetilde\psi_\mathbf{z}|^{\otimes k}}{\langle\widetilde\psi_\mathbf{z}|\widetilde\psi_\mathbf{z}\rangle^{k-1}}\approx d_B\underset{\widetilde\psi}{\mathbb{E}}\left[ \frac{|\widetilde\psi\rangle\langle\widetilde\psi|^{\otimes k}}{\langle\widetilde\psi|\widetilde\psi\rangle^{k-1}}\right] = \rho^{(k)}_H,
    \end{aligned}
    \end{equation}
    precisely the phenomenon of deep thermalization to a Haar ensemble. Here, the expectation ${\mathbb{E}}$ is over the Ansatz in Eq.~\eqref{eq:ansatz-unnormalized-states} for unnormalized states (see Appendix~\ref{app:ansatz-more-details} for details). 
    Notably, this result is independent of $\sigma^2$ while the Born probabilities of the PE retain the nontrivial $\sigma^2$-dependent log-normal statistics from the Ansatz, indicating there is physics still characterizing quantum many-body dynamics even after the PE on a local subsystem has deeply thermalized. 
    
    \subsection{Quantum information recovery}
    \label{sec:consequence-ansatz-qi-recovery}
        While the local chaotic dynamics $\mathcal{N}$ thermalizes any input state to a maximally mixed state, the projected Kraus ensemble $\mathcal{K}$ provides a process-level unraveling of the dynamics. Here, we show that the universal Ansatz of Kraus maps underpins the possibility of recovering quantum information locally in chaotic dynamics with the assistance of classical side information.

        Conventionally, quantum information stored within a quantum ``memory'' is contaminated through uncontrolled interactions with an environment that decoheres the system. In the quantum error correction setup, quantum information is encoded in logical states, which are embedded in a larger physical space with redundancies to protect the quantum information from noise. In order for the intended information to remain recoverable, one needs to make sure that the environment stays decoupled from the encoded logical information~\cite{Nielsen_Chuang_2010, Schumacher_quantum_1996}.
        
        The dynamics of such a quantum ``memory'' can be modeled by the local subsystem dynamics discussed in this work. Suppose subsystem $A$ (the ``memory'') hosts the quantum information we wish to protect, and its bath $B$ (the ``environment'') interacts with $A$ and may corrupt the information stored. It is thus natural to ask: how can we restore the initial state of $A$ by acting on the final state of $A$? What resources does this task require?
        
        Let the logical information be represented by a reference system $R$ maximally entangled with a subspace of $A$, as illustrated in Fig.~\ref{fig:recovery}. Specifically, the maximally entangled state prior to any dynamics can be written as
        \begin{equation}
            |\Phi^{RA}\rangle = \frac{1}{\sqrt{d_R}}\sum_{\alpha=1}^{d_R} |\alpha\rangle_R\otimes |\alpha\rangle_A,
        \end{equation}
        where $d_R$ is the Hilbert space dimension of the reference system $R$, and $\{|\alpha\rangle\}_{\alpha=1}^{d_R}$ is an orthonormal basis of $R$, with $d_R\le d_A$. Correspondingly, $\{|\alpha\rangle_A\}$ spans only a $d_R$-dimensional subspace of $A$, i.e., the code space. For convenience, we denote the size of subsystems by their number of qubits. The number of logical qubits $N_R$ is no larger than the number of physical qubits $N_A$ in $A$, and we refer to $n=(N_A - N_R)$ as the qubit overhead of the encoding. Through interaction with a thermodynamically large bath ($N_B\gg 1$), the logical information in $R$ is scrambled across $A\cup B$ and can no longer be retrieved locally from $A$, as illustrated in Fig.~\ref{fig:recovery}(a). In order to recover the original quantum information in $R$, a decoder requires additional resources.

        In the seminal work by Hayden and Preskill~\cite{hayden_mirrors_2007}, they considered the case where the decoder is supplemented with a maximally entangled reference $Q$ to the initial bath $B$, and is allowed to perform joint operations on $Q$ and $A$ after a global Haar random unitary on $A$ and $B$. They concluded that quantum information on $R$ can be recovered provided that there is a constant number of qubit overhead. Their setting is illustrated in Fig.~\ref{fig:recovery}(b), where $Q$ represents a quantum register holding the quantum side information in assistance of the decoder. Their protocol falls into the category of environment-assisted quantum error correction, studied previously in~\cite{Gregoratti_quantum_2003, Hayden_correcting_2005, Winter_environment_2007}.

        \begin{figure}[!t]
            \centering
            \includegraphics[width=\linewidth]{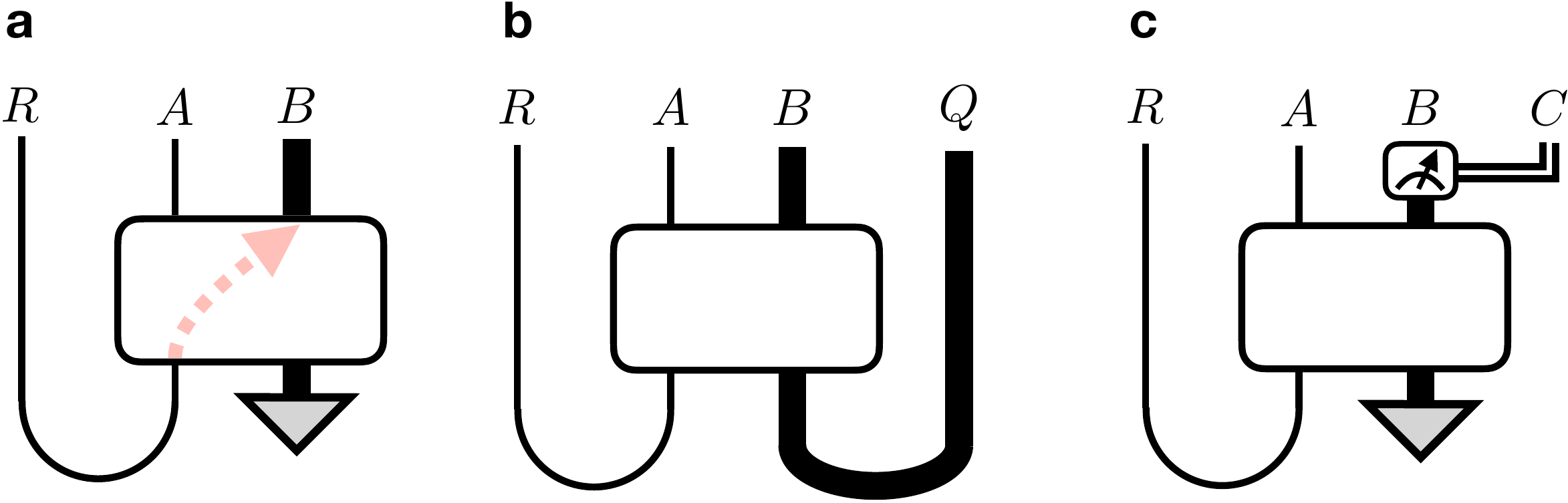}
            \caption{Schematics of quantum information recovery setups. (a) No side information supplemented as additional resource, therefore quantum information on $R$ is leaked into $B$ and becomes locally irretrievable on $A$. (b) With quantum side information in the form of a maximally entangled reference $Q$ to the initial state of the bath~\cite{hayden_mirrors_2007}. Quantum information in $R$ is recoverable in $A$ and $Q$ with a constant qubit overhead $n$. (c) With classical side information in the form of measurement outcomes stored in $C$. Quantum information in $R$ is recoverable in $A$ and $C$ with a constant qubit overhead $n$.}
            \label{fig:recovery}
        \end{figure}

        Here, we propose a modified setup where the decoder no longer has access to the quantum side information of the bath. Assume once again that the bath is initialized in a fixed pure state. Instead of discarding $B$, suppose we measure $B$ in the computational basis and keep the measurement outcome in a classical register $C$, and we allow the decoder to act jointly on $C$ and $A$ to retrieve quantum information on $R$. Our setup is illustrated in Fig.~\ref{fig:recovery}(c), where $C$ represents a classical register holding the classical side information in assistance of the decoder.

        Physically, the projected Kraus ensemble provides a description of the local dynamics on $A$ conditioned on the measurement outcome of $B$. These are the characteristics of a \textit{quantum instrument}, described by
        \begin{equation}
        \label{eq:q-instrument}
            \mathcal{M}^{A\rightarrow AC}(\cdot) = \sum_\mathbf{z}K_\mathbf{z}(\cdot)K_\mathbf{z}^\dagger\otimes|\mathbf{z}\rangle\langle\mathbf{z}|_C,
        \end{equation}
        which takes the form of a \textit{flagged channel}, where $C$ stands for the classical register.  
        Unlike in Eqs.~\eqref{eq:q-channel} and~\eqref{eq:q-channel-kraus-decomposition} where the bath $B$ is traced out and inaccessible to a future observer, here, the local dynamics on $A$ is flagged by a classical register holding the $\mathbf{z}$ information. As we will show, this additional classical side information allows for recoverability of quantum information that is otherwise lost in the local thermalizing channel $\mathcal{N}^{A\rightarrow A}$.

        To derive this, consider that the flagged channel has a Choi state representation:
        \begin{equation}
        \begin{aligned}
            |\Psi^{RABC}\rangle 
            &= \frac{1}{\sqrt{d_R}}\sum_{\alpha,\mathbf{z}}\underbrace{|\alpha\rangle_R}_{\text{logical}}\otimes \underbrace{K_\mathbf{z}|\alpha\rangle_A}_{\text{physical}}\otimes\underbrace{|\mathbf{z}\rangle_C}_{\text{side}}\otimes|\mathbf{z}\rangle_B,
        \end{aligned}
        \end{equation}
        where we note that $B$ and $C$ are symmetric by virtue of the projective measurement. To capture quantum information recoverability in this setting, we consider the coherent information from the reference $R$ to the systems $AC$ that are accessible to the decoder:
        \begin{equation}
        \label{eq:coherent-info}
        \begin{aligned}
            I_\mathrm{coh}(R\rangle AC)&:= S(AC) - S(RAC)
            \\
            &= S(RB) - S(B) = S(R|B)
            \\
            &= \sum_\mathbf{z}p(\mathbf{z})S(\rho_\mathbf{z}^{R})\le\log(d_R),
        \end{aligned}
        \end{equation}
        where we make use of the fact that the Choi state is a pure state on the second line. To explain the third line, we first define the unnormalized conditional Choi state:
        \begin{equation}
            |\widetilde{\Psi}_\mathbf{z}^{RA}\rangle = \frac{1}{\sqrt{d_R}}\sum_\alpha|\alpha\rangle_R\otimes K_\mathbf{z}|\alpha\rangle_A,
        \end{equation}
        with probability $p(\mathbf{z})=\langle\widetilde{\Psi}_\mathbf{z}^{RA}|\widetilde{\Psi}_\mathbf{z}^{RA}\rangle$, and
        \begin{equation}
            \rho_\mathbf{z}^{R} = \mathrm{Tr}_A|\Psi_{\mathbf{z}}^{RA}\rangle\langle\Psi_{\mathbf{z}}^{RA}|,
        \end{equation}
        where $|\Psi_\mathbf{z}^{RA}\rangle\equiv |\widetilde{\Psi}_\mathbf{z}^{RA}\rangle/\sqrt{p(\mathbf{z})}$ is the normalized conditional Choi state. Therefore, the coherent information is nothing but the averaged entanglement entropy between $R$ and $A$ in the conditional Choi states. In~\cite{Schumacher_quantum_1996}, it is proved that there exists a perfect quantum information recovery protocol if and only if $I_\mathrm{coh}$ is maximal, in this case $\log(d_R)$. Later in~\cite{Schumacher_approximate_2001} it is shown that an approximate quantum error correction protocol exists if $I_\mathrm{coh} = \log(d_R) - \epsilon$ for small $\epsilon > 0$. More precisely, there exists a recovery protocol that acts only on the system accessible to the decoder (in this case $AC$) and outputs a state $\sigma^{RA}$ such that the fidelity $F(\sigma^{RA}, |\Phi^{RA}\rangle)\ge 1 - \sqrt{\epsilon}$. In our case specifically, we show that a recovery protocol that acts on $A$ with a unitary $U_\mathbf{z}^\dagger$ conditioned on the measurement outcome $\mathbf{z}$ suffices to achieve the recovery fidelity, as illustrated in Fig.~\ref{fig:consequences}(b). Here, $U_\mathbf{z}$ is the unitary arising from polar decomposition $K_\mathbf{z}\mathbb{P}^{(R)} = U_\mathbf{z}\sqrt{\mathbb{P}^{(R)}K_\mathbf{z}^\dagger K_\mathbf{z}\mathbb{P}^{(R)}}$, where $\mathbb{P}^{(R)} = \sum_\alpha |\alpha\rangle\langle\alpha|_A$ is the projector on $A$ to the code space, and application of $U_\mathbf{z}^\dagger$ can be understood as the action to counter the subprocess $K_\mathbf{z}$. We prove this in Appendix~\ref{app:quantum-information-recovery}.

        As a special case discussed in~\cite{cheng_emergent_2025}, for a mixed unitary channel with Kraus maps proportional to unitaries, i.e., $K_\mathbf{z}^\dagger K_\mathbf{z}\propto\mathbbm{1}_A$, one can show that the coherent information is maximal with $N_R = N_A$. This is because the Schmidt spectra of these Kraus operators are exactly flat, giving rise to maximal entanglement entropies in the conditional Choi states $|\Psi_\mathbf{z}^{RA}\rangle$ between $R$ and $A$. For such cases, initial quantum information is perfectly recoverable with the assistance of the classical side information. This is in alignment with the fact that Born probabilities $p_\mathbf{z}$ of the event $\mathbf{z}$ happening do not depend on the input state on $A$ that is fed into the system~\cite{cheng_emergent_2025}, and therefore the measurement on $B$ has no way of intercepting quantum information passed from the past to the future in $A$.

        For local dynamics induced by generic 1D circuit dynamics on $A\cup B$, consider substituting our universal Ansatz for the Kraus maps $K_\mathbf{z}$. This corresponds to taking the suitable joint limit $N_B, t\rightarrow\infty$, and we have
        \begin{equation}
        \label{eq:coherent-info-ansatz}
            I_\mathrm{coh}(R\rangle AC) \ge \log(d_R) - \frac{d_R}{2d_A} = \log(d_R)-2^{-n-1},
        \end{equation}
        recall that $n=(N_A - N_R)$ is the qubit overhead. 
        This is obtained by substituting the Ansatz into Eq.~\eqref{eq:coherent-info} and replacing the sum with an integral. Based on this calculation, we may then conclude that under the Ansatz, it takes only a constant qubit overhead $n=\mathcal{O}(\log({1}/{\epsilon}))$ for an arbitrary error threshold $\epsilon > 0$ such that $I_\mathrm{coh}(R\rangle AC)\ge \log(d_R)-\epsilon$ is satisfied. For a detailed discussion of the calculations and the qubit overhead, see Appendix~\ref{app:quantum-information-recovery}.

        This calculation suggests that even with a thermodynamically large bath, projective measurement on the bath reads out only a minuscule amount of quantum information in $A$. Importantly, Eq.~\eqref{eq:coherent-info-ansatz} contains no dependence of $\sigma^2$. Thus, the recoverability condition is independent of the additional log-normal fluctuations in the norm of the Kraus operators. Consequently, even for a thermodynamically large bath, a generic 1D logarithmic-depth circuit provides the same level of conditional recoverability as the global Haar case. We note here that a similar result concerning random Haar codes was discussed in~\cite{Lee_randomly_2026} based on a typical decoupling argument, while  our argument is based on Kraus universality and thus applicable for generic physical dynamics.

\section{Discussion and outlook}
\label{sec:discussion}

In this work, we have identified a new form of random-matrix universality in generic 1D local quantum circuit dynamics. The central object we study is the \textit{projected Kraus ensemble}, the collection of effective subprocesses $K_\mathbf{z}$ of the bath on a local subsystem resolved by classical information $\mathbf{z}$ from measurements,
which provides a finer fingerprint of chaotic quantum many-body dynamics.

Concretely, we proposed a one-parameter Ansatz consisting of two components: $\mathbf{GinUE}\times\mathrm{LogNormal}_{\sigma^2}$. The Ginibre part captures information scrambling within the subsystem due to its rotational invariance properties, while the log-normal part introduces another layer of fluctuations in the norms of the matrices, originating from the effective non-unitary dynamics of transfer matrices in the spatial direction. For generic 1D local quantum circuits, we identified and argued for a family of critical spacetime joint scaling limits $N_B, t\rightarrow\infty$ parametrized by fixed $\sigma^2 = \lambda N_B / \gamma^t$ for each contour, and we provided evidence that the empirical projected Kraus ensemble (up to proper rescaling) converges exponentially quickly in $t$ towards the Ansatz at the corresponding $\sigma^2$. Our Ansatz was also established in special cases of global Haar random unitary dynamics and solvable DU circuits.

Physically, we showed that this universality underlies deep thermalization in generic 1D circuit dynamics. In particular, the projected ensemble of states inherits the rotational invariance of the projected Kraus ensemble and hence approaches Haar randomness independently of the input state. It is worth re-emphasizing that locality-induced log-normal features in the Born probabilities may persist even after deep thermalization has occurred.

Finally, we exploited this universal structure to study conditional quantum information recovery within chaotic many-body dynamics. We proposed a variant of the Hayden--Preskill recovery protocol in which local quantum information can be recovered with the assistance of classical side information rather than quantum side information. Thus, the same measurement outcomes that label the projected Kraus ensemble also serve as a resource for retrieving quantum information leaked into the bath.

One may inquire how our predicted Kraus universality could be tested experimentally. The projected Kraus ensemble can be realized directly as an unnormalized projected ensemble of states by introducing a reference system $A'$ maximally entangled with $A$, evolving $AB$, and subsequently measuring $B$. Specifically, 
\begin{equation}
\label{eq:PKE-prepare}
    \frac{1}{\sqrt{d_A}}|K_\mathbf{z}) = \frac{1}{\sqrt{d_A}}\,\vcenter{\hbox{\includegraphics[height=1.25cm]{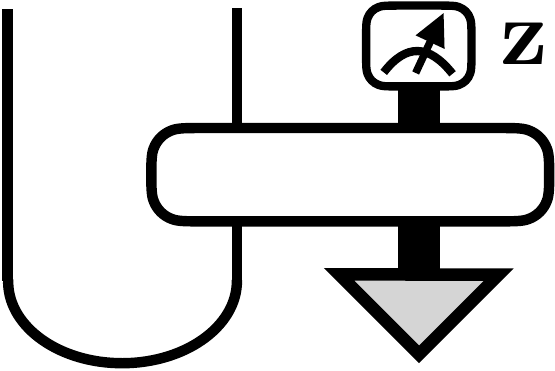}}}\,.
\end{equation}
where the generator state is $(\mathbbm{1}_{A'}\otimes U_{AB})|\Phi_{A'A}\rangle\otimes |B_0\rangle_B$, with $|\Phi_{A'A}\rangle=d_A^{-1/2}\sum_i |i\rangle_{A'}|i\rangle_A$ the maximally entangled state between the reference $A'$ and subsystem $A$. This construction translates the two components of Kraus universality into experimentally accessible statistics: the Born probabilities of the state ensemble encode the norms of Kraus operators, while the projected states encode the normalized Kraus structure. To avoid issues of post-selection, the ensemble statistics may be probed using established techniques based on quantum--classical correlators~\cite{McGinley_postselection-free_2024, Garratt_probing_2024}.

Several directions remain open for future studies. First, it would be useful to understand how higher-dimensional local circuits approach the late-time Ginibre regime. There, the spatial transfer matrix intuition used here should be modified, and the spatiotemporal scaling features would correspondingly change. For example, in generic 2D local circuits, $\mathcal{O}(1)$ critical timescale is expected due to teleportation phenomenology~\cite{McGinley_measurement-induced_2025, liu_conditional_2026}. Second, it would be interesting to study dynamics with conserved quantities, e.g., Hamiltonian dynamics. One could expect that, in such a setting, the Ginibre intuition should be modified to comply with the observation of Scrooge ensembles~\cite{mark_maximum_2024, mcginley_scrooge_2025, mok_nature_2026}. Intuitively, the Kraus ensemble could potentially approach a universal form $\sqrt{d}K\sim \sqrt{d_A\rho_A}\,\mathbf{GinUE}$ that remains right (but no longer left) unitary invariant, and is consistent with the Scrooge ensemble by construction. Additionally, the early time critical spatiotemporal scalings would be different from that of 1D circuit dynamics. Preliminary numerical results suggest scalings in the form of $\sigma^2\propto N_B / t^\alpha$ for a mixed-field Ising model Hamiltonian with $B$ initialized in a zero energy-density product state. It could also be interesting to study conditional dynamics with noise, similar to the setup in~\cite{yu_mixed_2025, Sherry_deep_2026}.

\acknowledgments
We thank Vir Bulchandani, John Chalker, Soonwon Choi, Daniel Mark, Max McGinley, and Tian-Hua Yang for insightful discussions.
Numerical experiments of this work were supported by the National Supercomputing Centre (NSCC) of Singapore. W.~W.~H.~is supported by the National Research Foundation (NRF), Singapore, through the NRF Fellowship NRF-NRFF15-2023-0008, and through the National Quantum Office, hosted in A*STAR, under its Centre for Quantum Technologies Funding Initiative (S24Q2d0009). The Institute for Quantum Information and Matter is an NSF Physics Frontiers Center (PHY-2317110).

\nocite{ODonovan_diagnosing_2026}
\bibliography{ref}

\newpage
\appendix
\onecolumngrid
\clearpage

\begin{center}
{\large\bfseries
Supplemental Material for: \\Emergent universality in Kraus maps of quantum chaotic many-body dynamics
\par}

\vspace{0.8em}

{\normalsize
Qi Camm Huang,$^{1,2}$
Wai-Keong Mok,$^{3}$
Tobias Haug,$^{4}$
and Wen Wei Ho$^{1,2}$
\par}

\vspace{0.4em}

{\small{\itshape
$^{1}$Department of Physics, National University of Singapore, Singapore 117551\\
$^{2}$Centre for Quantum Technologies, National University of Singapore, Singapore 117543\\
$^{3}$Institute for Quantum Information and Matter,\\
California Institute of Technology, Pasadena, CA 91125, USA\\
$^{4}$Quantum Research Center, Technology Innovation Institute, Abu Dhabi, UAE\\}
(Dated: \today)
\par}

\end{center}
\vspace{0.8em}

In this supplemental material, we provide details on the background information, derivation of the theorems and other statements, and extended discussions of this work. The Appendices are organized as follows: Appendix~\ref{app:review} reviews the basic tools used throughout the paper, including Weingarten calculus for Haar random unitaries, Wick calculus for Ginibre matrices, elementary properties of the log-normal distribution, and the interpretation of Kraus ensembles as unnormalized state ensembles. Appendix~\ref{app:global-haar-proof} contains the proof of Theorem~\ref{thm:1-precise}, establishing the emergence of Ginibre statistics from global Haar random unitary dynamics. Appendix~\ref{app:DU-KIM-proof} presents the tensor network formulation of the kicked Ising model and proves the dual-unitary (DU) result stated in Theorem~\ref{thm:2-precise}. Appendix~\ref{app:shrink} explains the argument of the functional dependence $\sigma^2\propto 1/\gamma^t$ appearing in the Ansatz. Appendix~\ref{app:ansatz-more-details} provides further details on the Ansatz, including norm statistics for fitting procedures, self-consistency checks, and implications on deep thermalization. Appendix~\ref{app:extra-numerics} contains additional numerical evidence, including results for kicked Ising dynamics away from the dual-unitary point and spectral tests of the fixed-trace Kraus ensemble. Finally, Appendix~\ref{app:quantum-information-recovery} presents the details of quantum information recovery analysis.

\section{Review of the basics}
\label{app:review}

In this section, we review the basics of Weingarten and Wick calculus for Haar random unitary ensemble and Ginibre ensemble, respectively. We then refresh some basic properties of the log-normal distribution, and explain our measure of closeness when comparing matrix ensembles.

\subsection{Weingarten and Wick calculus}
A Haar random unitary ensemble (also $\mathbf{CUE}$) is characterized by the uniform measure over the unitary group $\mathrm{U}(d)$. It is the unique probability measure that is both left and right unitary invariant. The statistics of the Haar unitary ensemble is given by Weingarten calculus. Diagrammatically, the $k$-th moment is
\begin{equation}
\begin{aligned}
\label{eq:weingarten-calculus}
    M_\mathrm{Haar}^{(k)} &:= \underset{U\sim\mathrm{Haar}}{\mathbb{E}}\left[U^{\otimes k} \otimes U^{*\otimes k}\right] = \underset{U\sim\mathrm{Haar}}{\mathbb{E}}\left[
    \vcenter{\hbox{\includegraphics[height=1.2cm]{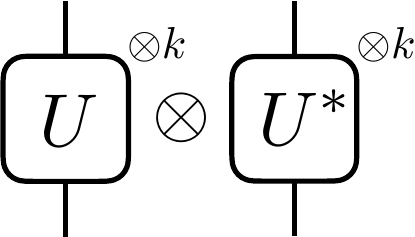}}}
    \right]
    \\
    &= \sum_{\sigma,\tau\in S_k}\mathrm{Wg}^{(k)}_d(\sigma^{-1}\tau)\, \vcenter{\hbox{\includegraphics[height=1.2cm]{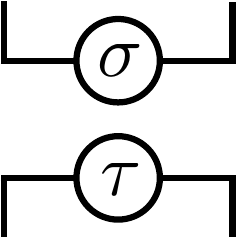}}}~,
\end{aligned}
\end{equation}
where $\mathrm{Wg}^{(k)}_d(\sigma^{-1}\tau)$ is the Weingarten matrix, thoroughly surveyed in~\cite{collins_weingarten_2022, mele_introduction_2024}. Here, $\sigma$ and $\tau$ are elements of the permutation group $S_k$, which also serve as the row and column indices of the Weingarten matrix. The permutation action on the $k$-copy space is defined through
\begin{equation}
    \vcenter{\hbox{\includegraphics[height=0.6cm]{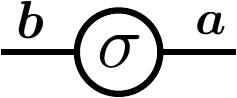}}} \equiv \langle\boldsymbol{b}|W_\sigma|\boldsymbol{a}\rangle = \prod_{i=1}^k \langle b_i|a_{\sigma^{-1}(i)}\rangle= \prod_{i=1}^{k}\langle b_{\sigma(i)}|a_i\rangle,
\end{equation}
where $i\in\{1,2,\cdots k\}$ stands for the replica index, $|\boldsymbol{a}\rangle = \bigotimes_{i=1}^{k}|a_i\rangle$ and  $\langle\boldsymbol{b}| = \bigotimes_{i=1}^{k}\langle b_i|$ are tensor products of local basis in the single-copy $d$-dimensional space, with $|a_i\rangle, |b_i\rangle\in\{|0\rangle, |1\rangle,\cdots|d-1\rangle\}$ for any $i$, and $W_\sigma$ is the unitary representation of $\sigma$ by permuting the $k$ copies of the Hilbert space. In the scaling limit $d\rightarrow\infty$, the Weingarten matrix has the following asymptotic property:
\begin{equation}
    \mathrm{Wg}_d^{(k)}(\mathrm{id}) = \frac{1}{d^k} + \mathcal{O}\left(\frac{1}{d^{k+2}}\right)\quad\text{and}\quad\mathrm{Wg}_d^{(k)}(\sigma) = \mathcal{O}\left(\frac{1}{d^{k+|\sigma|}}\right)\text{~for~}\sigma\ne\mathrm{id},
\end{equation}
where $|\sigma|$ denotes the minimal number of transpositions of the permutation $\sigma$ away from the identity $\mathrm{id}$. This suggests that in the limit $d\rightarrow\infty$, the Weingarten matrix is dominated by the leading terms on the diagonal entries.

Similarly, the statistics of a Ginibre ensemble can be derived through Wick contractions. Consider a Ginibre ensemble of $d_A\times d_A$ matrices, denoted $\mathbf{GinUE}_{d_A\times d_A}$ in the main text. It consists of i.i.d. matrix elements following the standard complex Gaussian distribution with zero mean and unit variance. Diagrammatically, its $k$-th moment is
\begin{equation}
\begin{aligned}
    M_\mathrm{Gin}^{(k)}&:= \underset{G\sim\mathrm{Gin}}{\mathbb{E}}\left[G^{\otimes k} \otimes G^{*\otimes k}\right] = \underset{G\sim\mathrm{Gin}}{\mathbb{E}}\left[\vcenter{\hbox{\includegraphics[height=1.2cm]{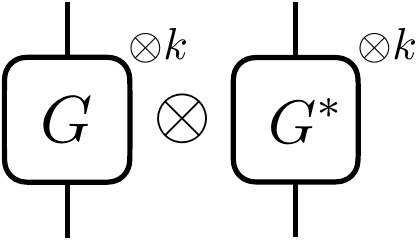}}}
    \right]
    \\
    &= \sum_{\sigma\in S_k}~\vcenter{\hbox{\includegraphics[height=1.2cm]{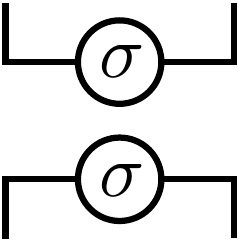}}}~,
\end{aligned}
\end{equation}
which consists of only the diagonal terms. This fact does not rely on $d_A$ being large.

In this work, we find it convenient to view $M_\mathrm{Gin}^{(k)}$ as a matrix from right to left, this is equivalent to treating the matrix ensembles in their vectorized version. Diagrammatically, this is equivalent as
\begin{equation}
\begin{aligned}
\label{eq:M_Gin}
    M_\mathrm{Gin}^{(k)} &= \underset{G\sim\mathrm{Gin}}{\mathbb{E}}\left[|G)(G|^{\otimes k}\right]= \underset{G\sim\mathrm{Gin}}{\mathbb{E}}\left[~\vcenter{\hbox{\includegraphics[height=1.15cm]{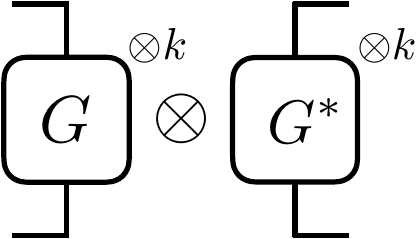}}}~\right]
    \\
    &= \sum_{\sigma\in S_k}~\vcenter{\hbox{\includegraphics[height=1.05cm]{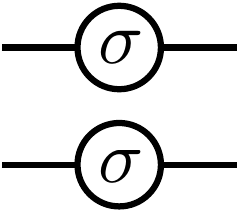}}} \,= (d_A^2)_k\,\rho^{(k)}_{H, d_A^2} = k!\,\mathbb{P}_{\mathrm{sym}, d_A^2}^{(k)},
\end{aligned}
\end{equation}
where we use the notation $(d)_k:=d(d+1)\cdots(d+k-1)$ as the rising factorial, $\rho^{(k)}_{H, d}:= \int \mathrm{d}\mu_\mathrm{Haar}(\psi)|\psi\rangle\langle\psi|^{\otimes k}$ for the $k$-th moment of a $d$-dimensional Haar random state ensemble, and $\mathbb{P}_{\mathrm{sym}, d}^{(k)}$ for the projector onto the totally symmetric subspace of $k$ copies of a $d$-dimensional Hilbert space. We have made use of the fact that
\begin{equation}
\label{eq:rho_Haar_k}
    \rho^{(k)}_{H, d} = \frac{\mathbb{P}_{\mathrm{sym}, d}^{(k)}}{(d)_k/k!} = \frac{\sum_{\sigma\in S_k}W_\sigma}{(d)_k},
\end{equation}
and $\frac{(d)_k}{k!} = \begin{pmatrix} d + k - 1\\k\end{pmatrix}$ is the dimension of the totally symmetric subspace. For later convenience, we also note that
\begin{equation}
    \mathrm{Tr}\left(M_\mathrm{Gin}^{(k)}\right) = (d_A^2)_k, \quad \mathrm{Tr}\left(M_\mathrm{Gin}^{(k)\,2}\right) = \left\| M_\mathrm{Gin}^{(k)}\right\|_2^2 = k!\,(d_A^2)_k.
\end{equation}

\subsection{LogNormal distribution}
\label{app:log-normal-introduction}

Log-normal distributions naturally emerge in multiplication of random variables. For $Y\sim\mathrm{LogNormal}(\mu, \sigma^2)$, the defining probability density function (PDF) is
\begin{equation}
    p_Y(y) = \frac{1}{y\sigma\sqrt{2\pi}}\exp(-\frac{(\log(y) - \mu)^2}{2\sigma^2})\quad\text{for}~y\in(0,+\infty),
\end{equation}
and the logarithm of $Y$ follows a real normal distribution $\log(Y)\sim\mathcal{N}(\mu, \sigma^2)$. All positive moments of the log-normal distribution exist:
\begin{equation}
    \mathbb{E}\left[Y^n\right] = \exp(n\mu+\frac{n^2\sigma^2}{2})\quad\text{for}~n=1,2,3,\cdots.
\end{equation}

An Ansatz for $\sqrt{d}K\sim\mathbf{GinUE}\times\mathrm{LogNormal}(\mu, \sigma^2)$ must satisfy the normalization condition for Kraus operators:
\begin{equation}
    \mathbbm{1}_A = \sum_\mathbf{z}K_\mathbf{z}^\dagger K_\mathbf{z} \leftrightarrow d_B\mathbb{E}\left[K^\dagger K\right] = \frac{1}{d_A}\mathbb{E}\left[Y^2\right]\mathbb{E}\left[G^\dagger G\right] = \exp(2\mu+2\sigma^2)\mathbbm{1}_A,
\end{equation}
where we set $G\sim\mathbf{GinUE}$ and $Y\sim\mathrm{LogNormal}(\mu,\sigma^2)$. For consistency, we must demand $\mu = -\sigma^2$. Hence, we use the notation $\mathrm{LogNormal}_{\sigma^2}\equiv\mathrm{LogNormal(\mu=-\sigma^2, \sigma^2)}$ throughout this work. Then for $Y\sim\mathrm{LogNormal}_{\sigma^2}$, we have
\begin{equation}
\underset{Y\sim\mathrm{LogNormal}_{\sigma^2}}{\mathbb{E}}\left[Y^{2k}\right] = \exp(2k(k-1)\sigma^2),
\end{equation}
and our Ansatz follows with
\begin{equation}
    M_\mathrm{th}^{(k)}(\sigma^2) = \mathbb{E}\left[Y^{2k}\right]\mathbb{E}\left[|G)(G|^{\otimes k}\right] = \exp(2k(k-1)\sigma^2)M_\mathrm{Gin}^{(k)} = x_\mathrm{th}^{(k)}(\sigma^2)M_\mathrm{Gin}^{(k)},
\end{equation}
where we denote $x_\mathrm{th}^{(k)}(\sigma^2)\equiv \exp(2k(k-1)\sigma^2)$.

The log-normal distribution is a long-tailed distribution, which suffers from numerical instability when sample size is insufficient, i.e., the fluctuations in the $k$-th moment can be large. To this end, when we extract a parameter fit $\hat\sigma^2$ to $\sigma^2$, we use the log norms of the rescaled Kraus operators $R_\mathrm{z}\equiv\mathrm{log}(\sqrt{d} \| K_\mathbf{z}\|_2)$ in Section~\ref{sec:numerics} for better stability.

\subsection{Kraus ensemble as unnormalized state ensemble}
\label{app:kraus-as-unnormalized-ensemble}
Assuming the vectorized form of matrix ensembles, we denote the $k$-th moment of the empirical Kraus ensemble as
        \begin{equation}
        \label{eq:kraus-kth-moment}
            M_\mathrm{emp}^{(k)}=\frac{1}{d_B}\sum_{\mathbf{z}}|K_\mathbf{z})(K_\mathbf{z}|^{\otimes k},
        \end{equation}
        and we use $M_\mathrm{th}^{(k)}=\int\mathrm{d}\mu_\mathrm{th}(V)\,|V)(V|^{\otimes k}$ to denote $k$-th moment of a theoretical distribution, where $\mu_\mathrm{th}$ is its probability measure. %
        By Eq.~\eqref{eq:signal-noise-decomposition}, there is a unique decomposition of the empirical moments, which we restate here:
\begin{equation}
    d^kM_\mathrm{emp}^{(k)} = x_\mathrm{emp}^{(k)}\left(M^{(k)}_\mathrm{Gin} + M_\perp^{(k)}\right),
\end{equation}
where $x_\mathrm{emp}^{(k)} = d^k\mathrm{Tr}\left(M_\mathrm{emp}^{(k)}\right)\big/\mathrm{Tr}\left(M_\mathrm{Gin}^{(k)}\right)$ and $\mathrm{Tr}\left(M_\mathrm{Gin}^{(k)}M_\perp^{(k)}\right) = 0$. This is because $M_\mathrm{emp}^{(k)}$ can be interpreted as the $k$-th moment density matrix of an unnormalized state ensemble $\{|K_\mathbf{z})\}$, which lives within the totally symmetric subspace specified by the projector $\mathbb{P}^{(k)}_{\mathrm{sym},d_A^2}$. Such a density matrix can always be decomposed into two parts: one proportional to $\mathbb{P}^{(k)}_{\mathrm{sym},d_A^2}$ (and therefore $M_\mathrm{Gin}^{(k)}$), and a traceless part, illustrated by Fig.~\ref{fig:decompose}.
\begin{figure}[!h]
    \centering
    \includegraphics[width=0.325\linewidth]{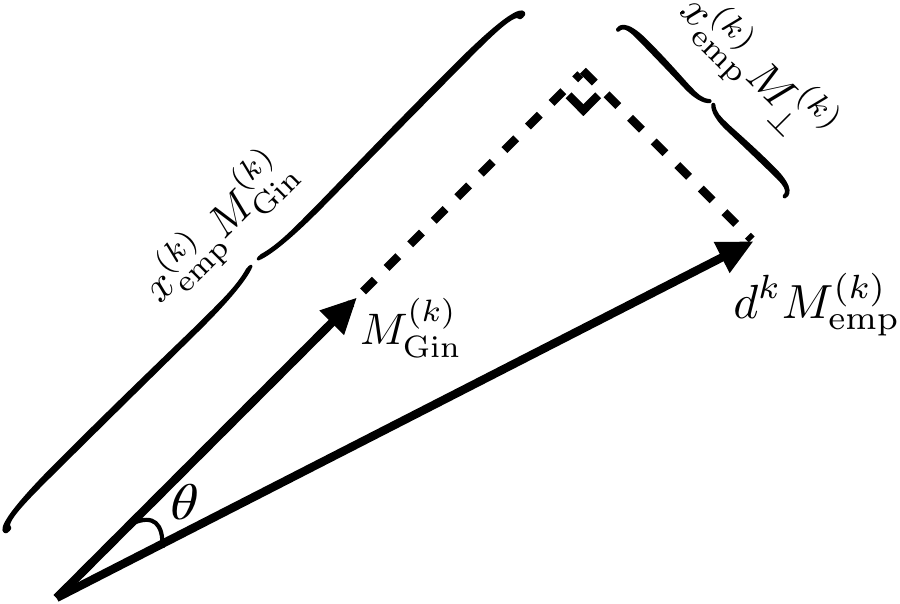}
    \caption{Decomposition of $k$-th moment of the empirical ensemble, where the rescaled empirical $k$-th moment $d^k M_\mathrm{emp}^{(k)}$ is decomposed into a part that is aligned with $M_\mathrm{Gin}^{(k)}$, and a part that is orthogonal (in the Hilbert-Schmidt sense) to $M_\mathrm{Gin}^{(k)}$.}
    \label{fig:decompose}
\end{figure}

It is helpful to define the \textit{generalized purity} of the Kraus ensemble
\begin{equation}
    \mathcal{P}_\mathrm{emp}^{(k)}:= d^k\mathrm{Tr}\left(M_\mathrm{emp}^{(k)}\right) = \frac{d^k}{d_B}\sum_\mathbf{z}(K_\mathbf{z}|K_\mathbf{z})^k,
\end{equation}
which captures the statistical moments of the norms of the empirical Kraus ensemble, with $x_\mathrm{emp}^{(k)} = \mathcal{P}_\mathrm{emp}^{(k)} / (d_A^2)_k$. We also define the \textit{generalized frame potential} for the Kraus ensemble
\begin{equation}
    \mathcal{F}_\mathrm{emp}^{(k)}:= d^{2k}\mathrm{Tr}\left(M_\mathrm{emp}^{(k)\, 2}\right) = \frac{d^{2k}}{d_B^2}\sum_{\mathbf{z},\mathbf{z}'}\left|(K_\mathbf{z}|K_{\mathbf{z}'})\right|^{2k},
\end{equation}
which satisfies
\begin{equation}
\begin{aligned}
\label{eq:inequality-FP}    \mathcal{F}_\mathrm{emp}^{(k)}&=x_\mathrm{emp}^{(k)\,2}\mathrm{Tr}\left( M_\mathrm{Gin}^{(k)\,2}\right) + x_\mathrm{emp}^{(k)\,2}\mathrm{Tr}\left(M_\perp^{(k)\,2}\right)
    \\
    &= \frac{k!\,\mathcal{P}_\mathrm{emp}^{(k)\,2}}{(d_A^2)_k}\left(1 + \Delta_\mathrm{run}^{(k)\,2}\right)\ge\frac{k!\,\mathcal{P}_\mathrm{emp}^{(k)\,2}}{(d_A^2)_k},
\end{aligned}
\end{equation}
where we use the definition of the running distance in Eq.~\eqref{eq:running-distance}, restated as follows:
\begin{equation}
    \Delta_\mathrm{run}^{(k)} := \frac{\left\| d^k M_\mathrm{emp}^{(k)} - x_\mathrm{emp}^{(k)}M_\mathrm{Gin}^{(k)}\right\|_2}{\left\| x_\mathrm{emp}^{(k)} M_\mathrm{Gin}^{(k)}\right\|_2} = \frac{\left\| M_\perp^{(k)}\right\|_2}{\left\| M_\mathrm{Gin}^{(k)}\right\|_2}.
\end{equation}
In Fig.~\ref{fig:decompose}, this corresponds to the tangent of the ``angle'' between $M_\mathrm{Gin}^{(k)}$ and $M_\mathrm{emp}^{(k)}$ as $\Delta_\mathrm{run}^{(k)} = \tan\theta$. Hence, the $k$-th moment distance of the empirical Kraus ensemble to the theoretical target ensemble is
\begin{equation}
\begin{aligned}
    \Delta_\mathrm{th}^{(k)} &= \frac{\left\|d^k M_\mathrm{emp}^{(k)} -M_\mathrm{th}^{(k)}(\sigma^2)  \right\|_2}{\left\| M_\mathrm{th}^{(k)}(\sigma^2) \right\|_2} = \frac{\left\| \left(x^{(k)}_\mathrm{emp}- x_\mathrm{th}^{(k)}(\sigma^2)\right)M_\mathrm{Gin}^{(k)} + x_\mathrm{emp}^{(k)}M_\perp^{(k)}\right\|_2}{\left\| x_\mathrm{th}^{(k)}(\sigma^2) M_\mathrm{Gin}^{(k)} \right\|_2}
    \\
    &= \sqrt{\left(\frac{x_\mathrm{emp}^{(k)}}{x_\mathrm{th}^{(k)}(\sigma^2)}-1\right)^2 + \left(\frac{x_\mathrm{emp}^{(k)}}{x_\mathrm{th}^{(k)}(\sigma^2)}\right)^2\Delta_\mathrm{run}^{(k)\,2}}\le \sqrt{\delta^{(k)\,2} + (1+\delta^{(k)})^2\Delta_\mathrm{run}^{(k)\,2 
            }},
\end{aligned}
\end{equation}
where we used $\delta^{(k)}:= \frac{\left| \mathcal{P}_\mathrm{emp}^{(k)} - \mathcal{P}_\mathrm{th}^{(k)}\right|}{\mathcal{P}_\mathrm{th}^{(k)}}=\frac{\left|x_\mathrm{emp}^{(k)} - x_\mathrm{th}^{(k)}\right|}{x_\mathrm{th}^{(k)}}$ as the residue error in the $k$-th generalized purity of the empirical Kraus ensemble. In particular, when we set $\sigma^2=0$ explicitly with $x_\mathrm{Gin}^{(k)}\equiv x_\mathrm{th}^{(k)}(\sigma^2=0)=1$, this reduces to
\begin{equation}
\begin{aligned}
    \Delta_\mathrm{Gin}^{(k)} &:= \frac{\left\| d^k M_\mathrm{emp}^{(k)} - M_\mathrm{Gin}^{(k)} \right\|_2}{\left\|M_\mathrm{Gin}^{(k)}\right\|_2} = \sqrt{\left(x_\mathrm{emp}^{(k)}-1\right)^2 + x_\mathrm{emp}^{(k)\,2}\Delta_\mathrm{run}^{(k)\,2}}
    \\
    &=\sqrt{1 - \frac{2\mathcal{P}_\mathrm{emp}^{(k)}}{(d_A^2)_k} + \frac{\mathcal{F}_\mathrm{emp}^{(k)}}{k!\,(d_A^2)_k}}\ge\left| 1 - \frac{\mathcal{P}_\mathrm{emp}^{(k)}}{(d_A^2)_k}\right|,
\end{aligned}
\end{equation}
where we used Eq.~\eqref{eq:inequality-FP} in the second line. This inequality is saturated when $\Delta_\mathrm{run}^{(k)}=0$. Effectively, we may conclude that the $k$-th moment error (distance) with respect to a theoretical ensemble $\Delta_\mathrm{th}^{(k)}$ (or $\Delta_\mathrm{Gin}^{(k)}$ when the target is Ginibre) is captured fully by $\mathcal{P}_\mathrm{emp}^{(k)}$ (or equivalently $x_\mathrm{emp}^{(k)}$) and $\Delta_\mathrm{run}^{(k)}$. By Eq.~\eqref{eq:inequality-FP}, we note that $\mathcal{F}_\mathrm{emp}^{(k)}$ can be expressed in terms of $\mathcal{P}_\mathrm{emp}^{(k)}$ and $\Delta_\mathrm{run}^{(k)}$, thus the error profile can be captured by any two elements of the triple $(\mathcal{P}_\mathrm{emp}^{(k)}, \mathcal{F}_\mathrm{emp}^{(k)}, \Delta_\mathrm{run}^{(k)})$.

\section{Proof for global Haar case: emergence of $\mathbf{GinUE}$}
\label{app:global-haar-proof}

In this section, we provide a detailed proof of Theorem~\ref{thm:1-precise} in the main text. Given a Haar random unitary $U_{AB}$ on the composite system $AB$, the empirical $k$-th moment
\begin{equation}
    M^{(k)}_\mathrm{emp}(U_{AB}) = \frac{1}{d_B}\sum_\mathbf{z}|K_\mathbf{z})(K_\mathbf{z}|^{\otimes k},
\end{equation}
is itself a random tensor. Recall that we defined the $k$-th moment distance to Ginibre, the target ensemble, as
\begin{equation}
    \Delta_\mathrm{Gin}^{(k)} := \frac{\left\| d^kM_\mathrm{emp}^{(k)}(U_{AB}) - M_\mathrm{Gin}^{(k)}\right\|_2}{\left\| M_\mathrm{Gin}^{(k)}\right\|_2},
\end{equation}
where $M_\mathrm{Gin}^{(k)}$ is the $k$-th moment tensor given in Eq.~\eqref{eq:M_Gin}. We now restate Theorem~\ref{thm:1-precise} for convenience.
\begin{atheorem}
        Let the global unitary $U_{AB}$ be Haar random in $\mathrm{U}(d_Ad_B)$. Then for any integer $k\ge 1$ and any $0 < \delta <1$, with probability at least $1-\delta$, the $k$-th moment distance of the Kraus ensemble to Ginibre satisfies
        \begin{equation}
            \Delta_\mathrm{Gin}^{(k)} \le \sqrt{\frac{d_A^{2k}}{\delta\, k!\, d_B}\left(1+\mathcal{O}\left(\frac{k^2}{d_A^2}\right)\right)},
        \end{equation}
        for all sufficiently large $d_B$, where the $\mathcal{O}(\cdot)$ term is independent of $d_B$.
\end{atheorem}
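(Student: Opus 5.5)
We will bound the Haar average of the squared distance and then apply Markov's inequality. Write $\Delta^2:=\Delta_\mathrm{Gin}^{(k)\,2}$. Using the expansion derived at the end of Appendix~\ref{app:kraus-as-unnormalized-ensemble},
\begin{equation}
    \Delta^2 = 1 - \frac{2\mathcal{P}_\mathrm{emp}^{(k)}}{(d_A^2)_k} + \frac{\mathcal{F}_\mathrm{emp}^{(k)}}{k!\,(d_A^2)_k},
\end{equation}
it suffices to compute $\mathbb{E}_U[\mathcal{P}_\mathrm{emp}^{(k)}]$ and $\mathbb{E}_U[\mathcal{F}_\mathrm{emp}^{(k)}]$ for Haar random $U_{AB}$. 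Once we show $\mathbb{E}[\Delta^2]\le \frac{d_A^{2k}}{k!\,d_B}(1+\mathcal{O}(k^2/d_A^2))$, Markov's inequality $\Pr[\Delta^2\ge \mathbb{E}[\Delta^2]/\delta]\le\delta$ gives the theorem.

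\emph{Purity term.} Each $|K_\mathbf{z})$ is the column block of $U$ obtained by fixing the input state on $B$ to $|B_0\rangle$ and projecting the output onto $\langle\mathbf{z}|$. Equivalently, $\sqrt{d_A}\,|K_\mathbf{z})/\sqrt{d_A}$ collects $d_A$ columns of $U$ restricted to $d_A$ rows, i.e.\ a $d_A\times d_A$ truncation. Its $k$-th moment $\mathbb{E}[(K|K)^k]$ follows from the one-sided moment formula Eq.~\eqref{eq:weingarten-calculus} with $k$ copies. Leading order in $1/d$ comes from $\sigma=\tau$, giving $d^{-k}(d_A^2)_k(1+\mathcal{O}(k^2/d))$. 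Hence $\mathbb{E}[\mathcal{P}_\mathrm{emp}^{(k)}]=(d_A^2)_k(1+\mathcal{O}(k^2/d))$. Because the corrections are $\mathcal{O}(1/d)$, this term cancels the constant $1$ up to $\mathcal{O}(1/d_B)$ corrections that carry no large $d_A^{2k}$ prefactor.

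\emph{Frame potential term.} We split $\mathcal{F}_\mathrm{emp}^{(k)} = \frac{d^{2k}}{d_B^2}\sum_{\mathbf{z},\mathbf{z}'}|(K_\mathbf{z}|K_{\mathbf{z}'})|^{2k}$ into diagonal and off-diagonal parts.
\begin{itemize}
\item The $d_B$ diagonal terms $\mathbf{z}=\mathbf{z}'$ each involve a $2k$-th moment of a single truncated block. To leading order this is the Ginibre value $\mathbb{E}\|G\|_2^{4k}=(d_A^2)_{2k}$. This yields the contribution $\frac{(d_A^2)_{2k}}{k!\,(d_A^2)_k\,d_B}=\frac{d_A^{2k}}{k!\,d_B}(1+\mathcal{O}(k^2/d_A^2))$, which is exactly the advertised leading term.
\item The $d_B(d_B-1)$ off-diagonal terms involve two disjoint row blocks of $U$, so we need a $2k$-fold Weingarten expansion. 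If the rows of distinct blocks were independent Gaussians, these terms would reproduce exactly $\mathrm{Tr}(M_\mathrm{Gin}^{(k)\,2})=k!\,(d_A^2)_k$, which cancels the remaining $1-2+1$. What must be shown is that the Haar correlations between distinct blocks change the sum by at most $o(d_A^{2k}/d_B)$ after the $d_B^2$ multiplicity and the $d^{2k}$ rescaling.
\end{itemize}

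The hard step is this off-diagonal control. The $d_B^2$ prefactor multiplies subleading Weingarten terms of relative order $1/d$, which naively produces an $\mathcal{O}(d_B^2\cdot 1/d)\cdot d^{-1}$-type contribution that must be organized carefully. We will handle it by invoking Lemma~1 of~\cite{mok_optimal_2025}, as flagged in the proof outline of Sec.~\ref{sec:global-haar}. That lemma bounds sums of off-diagonal Weingarten contributions $\sum_{\sigma\ne\tau}|\mathrm{Wg}(\sigma^{-1}\tau)|\cdot(\text{index contractions})$ relative to the diagonal ones. The permutation pairs that would couple distinct $\mathbf{z},\mathbf{z}'$ blocks force index coincidences costing a factor $1/d_B$ each. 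We expect this to show that the off-diagonal terms equal $k!\,(d_A^2)_k(1+\mathcal{O}(1/d_B))$ with an $\mathcal{O}(\cdot)$ constant depending only on $k,d_A$, and subleading to the diagonal $d_A^{2k}/d_B$ piece. That requires verifying that the correction coefficient is at most $\mathcal{O}(k^2 d_A^{2k-2})$.

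Collecting terms, we will obtain $\mathbb{E}[\Delta^2]=\frac{d_A^{2k}}{k!\,d_B}(1+\mathcal{O}(k^2/d_A^2))+o(1/d_B)$ for large $d_B$, which concludes the proof via Markov's inequality.
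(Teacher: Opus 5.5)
Your route is the paper's own. You expand $\mathbb{E}[\Delta_\mathrm{Gin}^{(k)\,2}]$ through $\mathbb{E}[\mathcal{P}_\mathrm{emp}^{(k)}]$ and $\mathbb{E}[\mathcal{F}_\mathrm{emp}^{(k)}]$, split the frame potential into $\mathbf{z}=\mathbf{z}'$ and $\mathbf{z}\neq\mathbf{z}'$ parts, control subleading Weingarten terms with Lemma~\ref{lemma:weingarten-large-d}, and finish with Markov.

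The gap is that the one estimate carrying the content is left as ``we expect'' and ``requires verifying.'' This is not a formality. The signal $d_A^{2k}/(k!\,d_B)$ is itself a $1/d_B$ effect. So every $\mathcal{O}(1/d)$ Weingarten correction, in $\mathbb{E}[\mathcal{P}_\mathrm{emp}^{(k)}]$ as well as in the off-diagonal part of $\mathbb{E}[\mathcal{F}_\mathrm{emp}^{(k)}]$, is of exactly the same order in $d_B$ as the quantity you want to isolate; it is not $o(1/d_B)$. The bound holds only because these corrections carry fewer powers of $d_A$.

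Your heuristic for why this happens also points at the wrong mechanism. For $\mathbf{z}\neq\mathbf{z}'$, row pairings that cross blocks do not ``cost $1/d_B$''; they vanish identically. No power of $d_B$ enters the index contractions at all, since the $B$-labels $\mathbf{z}$, $\mathbf{z}'$ and (by invariance) $B_0$ are fixed. $d_B$ enters only through $d=d_Ad_B$ in $\mathrm{Wg}$. The deviation from your independent-Gaussian value comes, up to $\mathcal{O}(1/d^2)$, from the Weingarten terms with $\tau\neq\sigma$.

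To close the gap, carry out the $2k$-copy expansion explicitly.
\begin{itemize}
\item \emph{Allowed pairings.} The $k$ factors $(K_\mathbf{z}|K_{\mathbf{z}'})$ put their $U$-rows in block $\mathbf{z}'$ and their $U^*$-rows in block $\mathbf{z}$, and the $k$ conjugate factors do the reverse. So the row permutation is forced into the block-swapping form $\sigma=(\sigma_1,\sigma_2)$ with $\sigma_1,\sigma_2\in S_k$. The column permutation $\tau\in S_{2k}$ is unconstrained.
\item \emph{Weights.} The contraction weight is $a_{\sigma\tau}=d_A^{\#\mathrm{cyc}(\sigma)+\#\mathrm{cyc}(\tau)}$, with $\#\mathrm{cyc}(\sigma)=\#\mathrm{cyc}(\sigma_1\sigma_2)$.
\item \emph{Terms with $\tau=\sigma$.} These give $k!\,(d_A^2)_k\,\mathrm{Wg}^{(2k)}_d(\mathrm{id})$. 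After the multiplicity $d_B(d_B-1)/d_B^2$, this is your Gaussian value times $(1-1/d_B)$, not exactly $k!\,(d_A^2)_k$. The leading remainder is therefore $\frac{1}{d_B}\bigl[(d_A^2)_{2k}/(k!\,(d_A^2)_k)-1\bigr]$, and the extra $-1/d_B$ has the harmless sign.
\item \emph{Terms with $\tau\neq\sigma$.} Even the crude input $\max_{\tau\neq\sigma}a_{\sigma\tau}\le d_A^{2k+\#\mathrm{cyc}(\sigma_1\sigma_2)}$ to Lemma~\ref{lemma:weingarten-large-d} suffices. Summing over $(\sigma_1,\sigma_2)$ gives $\mathcal{O}(k^2\,k!\,d_A^{2k}(d_A)_k/d^{2k+1})$. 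After multiplying by $d^{2k}$ and dividing by $k!\,(d_A^2)_k$, this contributes $\mathcal{O}(k^2 d_A^{k-1}/d_B)$ to $\mathbb{E}[\Delta_\mathrm{Gin}^{(k)\,2}]$. That is a relative correction $\mathcal{O}(k^2\,k!/d_A^{k+1})$ to the signal, which is at most $\mathcal{O}(k^2/d_A^2)$ once $d_A\ge k$, and so sits within your criterion.
\item \emph{Purity.} The same crude bound gives $\mathbb{E}[\mathcal{P}_\mathrm{emp}^{(k)}]=(d_A^2)_k+\mathcal{O}(k^2 d_A^k(d_A)_k/d)$, which contributes only $\mathcal{O}(k^2/d)$.
\end{itemize}
With these estimates written out, your outline becomes the paper's proof.
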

We point out that this applies for a single realization of the global unitary, unlike that in the previous result~\cite{mastrodonato_elementary_2007} for the \textit{truncated unitary ensemble} ($\mathbf{TUE}$) generated from a continuous set of global unitary realizations. In order to prove this theorem, we quote Lemma 1 in~\cite{mok_optimal_2025} as follows, which we restate without proving.

\begin{lemma}[Large-$d$ approximation of Weingarten sums~\cite{mok_optimal_2025}]
\label{lemma:weingarten-large-d}
    For any function $a_{\sigma\tau}$ of the permutation elements $\sigma$, $\tau$ of the symmetric group $S_k$ with $k^2 < d$, the error of the leading order approximation is
    \begin{equation}
        \left| \sum_{\tau\in S_k}a_{\sigma\tau}\mathrm{Wg}^{(k)}_d (\sigma^{-1}\tau) - a_{\sigma\sigma}\mathrm{Wg}^{(k)}_d(\sigma^{-1}\sigma=\mathrm{id}) \right| = \mathcal{O}\left( \frac{k^2}{d^{k+1}}\max_{\sigma\ne\tau}|a_{\sigma\tau}| \right).
    \end{equation}
\end{lemma}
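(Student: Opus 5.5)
The proof reduces to bounding the total off-diagonal mass of the Weingarten matrix, and I would get that bound from a Neumann series for the inverse Gram matrix. The quantity inside the absolute value equals $\sum_{\tau\neq\sigma}a_{\sigma\tau}\mathrm{Wg}^{(k)}_d(\sigma^{-1}\tau)$. Substituting $\pi=\sigma^{-1}\tau$, it is bounded in modulus by
\begin{equation}
\max_{\sigma\neq\tau}|a_{\sigma\tau}|\sum_{\pi\neq\mathrm{id}}\bigl|\mathrm{Wg}^{(k)}_d(\pi)\bigr|.
\end{equation}
So it suffices to show the $a$-independent estimate $\sum_{\pi\neq\mathrm{id}}|\mathrm{Wg}^{(k)}_d(\pi)|=\mathcal{O}(k^2/d^{k+1})$ whenever $k^2<d$.

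The first step is to realize the Weingarten matrix as the inverse of the Gram matrix of permutation operators. Define $G_{\sigma\tau}=\mathrm{Tr}(W_\sigma^\dagger W_\tau)=d^{\#\mathrm{cyc}(\sigma^{-1}\tau)}=d^{k-|\sigma^{-1}\tau|}$, so that $\mathrm{Wg}=G^{-1}$ as matrices indexed by $S_k$. Factor $G=d^k(I+E)$, where $E_{\sigma\tau}=d^{-|\sigma^{-1}\tau|}$ for $\sigma\neq\tau$ and $E_{\sigma\sigma}=0$. The entries of $E$ are nonnegative, and every row sum equals
\begin{equation}
\varepsilon:=\sum_{\pi\neq\mathrm{id}}d^{-|\pi|}=\prod_{j=1}^{k-1}\Bigl(1+\frac{j}{d}\Bigr)-1,
\end{equation}
by the standard generating function $\sum_{\pi\in S_k}x^{|\pi|}=\prod_{j=1}^{k-1}(1+jx)$ (Stirling numbers of the first kind). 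Using $1+y\le e^{y}$ gives $\varepsilon\le e^{k^2/(2d)}-1$. Under $k^2<d$ this yields $\varepsilon\le e^{1/2}-1<1$, and moreover $\varepsilon=\mathcal{O}(k^2/d)$.

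The second step is the Neumann series. The induced $\infty$-norm (maximum row sum) of $E$ is $\varepsilon<1$, so
\begin{equation}
\mathrm{Wg}=d^{-k}\sum_{n\ge0}(-E)^n
\end{equation}
converges. The $n=0$ term is the identity and contributes only on the diagonal. Every entry of $\sum_{n\ge1}(-E)^n$ is dominated in modulus by the corresponding entry of $\sum_{n\ge1}E^n$, whose row sums are at most $\sum_{n\ge1}\varepsilon^n=\varepsilon/(1-\varepsilon)$. Hence, for every fixed row,
\begin{equation}
\sum_{\pi\neq\mathrm{id}}\bigl|\mathrm{Wg}^{(k)}_d(\pi)\bigr|\le d^{-k}\frac{\varepsilon}{1-\varepsilon}=\mathcal{O}\Bigl(\frac{k^2}{d^{k+1}}\Bigr),
\end{equation}
because $1-\varepsilon$ is bounded below by $2-e^{1/2}>0$. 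Multiplying by $\max_{\sigma\neq\tau}|a_{\sigma\tau}|$ gives the claim.

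The same expansion also gives $\mathrm{Wg}(\mathrm{id})=d^{-k}(1+\mathcal{O}(k^2/d))$, which is consistent with the asymptotics quoted earlier in the appendix, although the lemma does not need it. The main obstacle is obtaining a bound uniform in $k$, rather than only the entrywise $\mathcal{O}(d^{-k-|\pi|})$ behaviour. A naive entrywise bound summed over the $k!$ permutations would lose control, because both the number of permutations and the constants grow with $k$. The row-sum (maximum-norm) Neumann argument, combined with the product formula for $\varepsilon$, is exactly what converts the hypothesis $k^2<d$ into a convergent geometric series with the explicit $k^2/d$ prefactor.
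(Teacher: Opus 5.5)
Your proof is correct and complete. The reduction to the $a$-independent estimate $\sum_{\pi\neq\mathrm{id}}|\mathrm{Wg}^{(k)}_d(\pi)|=\mathcal{O}(k^2/d^{k+1})$ is right, and the Stirling generating function gives the common row sum $\varepsilon$ of $E$. Because $E$ is entrywise nonnegative with constant row sums, the row sums of $\sum_{n\ge1}E^n$ are exactly $\varepsilon/(1-\varepsilon)$. The Neumann series also proves that the Gram matrix is invertible under $k^2<d$, so you assume nothing beyond the definition $\mathrm{Wg}=G^{-1}$. There is no in-paper argument to compare against: the paper imports this lemma from Ref.~\cite{mok_optimal_2025} and explicitly restates it ``without proving.''

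For context, a standard alternative route to the same estimate uses two structural facts about Weingarten functions. The first is sign alternation, $(-1)^{|\pi|}\mathrm{Wg}^{(k)}_d(\pi)>0$ for $d\ge k$. The second is the evaluation of $G^{-1}$ in the sign representation of $S_k$. Together they give the closed form $\sum_{\pi}|\mathrm{Wg}^{(k)}_d(\pi)|=1/[d(d-1)\cdots(d-k+1)]$. Combine this with $\mathrm{Wg}^{(k)}_d(\mathrm{id})\ge 1/G_{\mathrm{id},\mathrm{id}}=d^{-k}$, which holds for any diagonal entry of the inverse of a positive definite matrix. The result is $\sum_{\pi\ne\mathrm{id}}|\mathrm{Wg}^{(k)}_d(\pi)|\le d^{-k}\bigl[\prod_{j=1}^{k-1}(1-j/d)^{-1}-1\bigr]$. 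This bound is never worse than your $d^{-k}\varepsilon/(1-\varepsilon)$, since $\prod_j(1+j/d)+\prod_j(1-j/d)\ge 2$, and it has the same leading term $k(k-1)/(2d^{k+1})$.

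The price of that route is reliance on the nontrivial sign-alternation theorem. Your argument needs only nonnegativity of $E$ and cycle counting. It also makes explicit that the implied constant is uniform in $k$, roughly $e^{1/2}/[2(2-e^{1/2})]\approx 2.35$. That uniformity is exactly what is needed when the paper applies the lemma at order $2k$ under the condition $(2k)^2<d$.
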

This lemma provides a theoretical foundation for approximating a sum that includes Weingarten terms using only the diagonal terms. Equipped with this error estimation, we are now able to show the following lemma:
\begin{lemma}
\label{lemma:Haar-Delta-Gin}
    For global Haar random unitary $U_{AB}$ sampled from $\mathrm{U}(d_Ad_B)$, the squared distance satisfies
    \begin{equation}
        \mathbb{E}\left[\Delta_\mathrm{Gin}^{(k)\,2}\right]\le\frac{d_A^{2k}}{k!\cdot d_B}\left(1 + \mathcal{O}\left(\frac{k^2}{d_A^2}\right)\right),
    \end{equation}
    for sufficiently large $d_B$, where $\mathbb{E}[\cdot]$ denotes expectation of $U_{AB}$ over the Haar measure.
\end{lemma}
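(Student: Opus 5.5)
We expand the squared distance using the identity derived in Appendix~\ref{app:kraus-as-unnormalized-ensemble},
\begin{equation*}
\Delta_\mathrm{Gin}^{(k)\,2} = 1 - \frac{2\mathcal{P}_\mathrm{emp}^{(k)}}{(d_A^2)_k} + \frac{\mathcal{F}_\mathrm{emp}^{(k)}}{k!\,(d_A^2)_k},
\end{equation*}
and take the Haar average term by term. The problem then reduces to computing $\mathbb{E}[\mathcal{P}_\mathrm{emp}^{(k)}]$ and $\mathbb{E}[\mathcal{F}_\mathrm{emp}^{(k)}]$ to sufficient accuracy. Both are polynomial moments of the entries of $U_{AB}$: the first has degree $(k,k)$ and the second degree $(2k,2k)$. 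We evaluate them with Weingarten calculus in dimension $d=d_Ad_B$.

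\textbf{Purity term.} Since $(K_\mathbf{z}|K_\mathbf{z})=\mathrm{Tr}(K_\mathbf{z}^\dagger K_\mathbf{z})$, each summand is $\mathrm{Tr}[(U^{\otimes k}\otimes U^{*\otimes k})$ contracted with $|B_0\rangle,|\mathbf{z}\rangle]$. By unitary invariance, every $\mathbf{z}$ contributes identically, so $\mathbb{E}[\mathcal{P}_\mathrm{emp}^{(k)}] = d^k\,\mathbb{E}[\mathrm{Tr}(K^\dagger K)^k]$ for a single fixed $\mathbf{z}$. The Weingarten expansion gives a sum over pairs $(\sigma,\tau)\in S_k^2$ of loop counts. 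The diagonal terms $\sigma=\tau$ give $d^{-k}\sum_\sigma d_A^{2(\#\text{cycles})}\cdot(\ldots)$, and this reproduces exactly $(d_A^2)_k$ at leading order. Lemma~\ref{lemma:weingarten-large-d} bounds the off-diagonal correction by $\mathcal{O}(k^2/d)$ times the maximal loop weight. This yields $\mathbb{E}[\mathcal{P}]=(d_A^2)_k(1+\mathcal{O}(k^2/d))$. (For $k=1$ this is exactly $d_A^2$ by normalization, which is a useful check.)

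\textbf{Frame potential term.} We split the double sum over $(\mathbf{z},\mathbf{z}')$ into the $d_B$ terms with $\mathbf{z}=\mathbf{z}'$ and the $d_B(d_B-1)$ terms with $\mathbf{z}\neq\mathbf{z}'$.
\begin{itemize}
\item \emph{Diagonal part.} This is $\frac{d^{2k}}{d_B}\mathbb{E}[(K|K)^{2k}] \approx \frac{d^{2k}}{d_B}\,d^{-2k}(d_A^2)_{2k}$, again via Lemma~\ref{lemma:weingarten-large-d}. Divided by $k!(d_A^2)_k$, it contributes $\frac{(d_A^2)_{2k}}{k!\,(d_A^2)_k\,d_B}=\frac{d_A^{2k}}{k!\,d_B}(1+\mathcal{O}(k^2/d_A^2))$. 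This is precisely the claimed leading term.
\item \emph{Off-diagonal part.} Here $\mathbf{z}\ne\mathbf{z}'$, and we need $\mathbb{E}|(K_\mathbf{z}|K_{\mathbf{z}'})|^{2k}$. These are two disjoint $d_A\times d_A$ blocks of the same Haar unitary, each behaving at leading order like independent Ginibre matrices. Their Wick value is $\sum_{\sigma}$ over pairings that match each $K_\mathbf{z}$ with a $K_\mathbf{z}^*$. This gives $d^{-2k}\,k!\,\mathrm{Tr}(M_\mathrm{Gin}^{(k)2})$-type contributions, namely $k!(d_A^2)_k\,d^{-2k}$ times $(1+\mathcal{O}(k^2/d))$ corrections from Weingarten off-diagonals. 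The factor $d_B(d_B-1)/d_B^2$ then makes this contribute $1-1/d_B+\mathcal{O}(k^2/d)$ after normalization.
\end{itemize}

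\textbf{Assembly and main obstacle.} Combining the pieces, the constant terms cancel ($1-2+1$). What remains is the diagonal contribution $\frac{d_A^{2k}}{k!d_B}(1+\mathcal{O}(k^2/d_A^2))$, together with $-1/d_B$ and $\mathcal{O}(k^2/d)$ errors. Both of these are absorbed for large $d_B$; the $-1/d_B$ term is negative, which helps the upper bound. The main obstacle is controlling the off-diagonal frame-potential errors. These errors must be $o(1/d_B)$ or at least $\mathcal{O}(k^2 d_A^{2k-2}/d_B)$, not merely $\mathcal{O}(1)$. This requires care, because the leading terms there are $\mathcal{O}(1)$ and cancel against the purity term. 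So the $\mathcal{O}(k^2/d)$ relative corrections must be tracked exactly at order $1/d_B$, and in particular the $\mathcal{O}(1/d)$ correction to $\mathrm{Wg}(\mathrm{id})$ and the transposition terms. I would first try to show that these subleading pieces come with a relative factor $1/d_A^2$ compared to the diagonal $1/d_B$ term, so that they fold into $1+\mathcal{O}(k^2/d_A^2)$. If that failed, I would compute the $k$-transposition Weingarten contributions explicitly, using $\mathrm{Wg}(\tau)\approx -d^{-k-1}$.
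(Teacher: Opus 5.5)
Your proposal is correct and matches the paper's proof essentially step for step. Both use the same purity/frame-potential expansion of $\Delta^2$, the same $\mathbf{z}=\mathbf{z}'$ versus $\mathbf{z}\neq\mathbf{z}'$ split evaluated with Weingarten calculus and Lemma~\ref{lemma:weingarten-large-d}, and arrive at the same surviving term $\frac{1}{d_B}\big((d_A^2)_{2k}-k!\,(d_A^2)_k\big)$ after the $\mathcal{O}(1)$ pieces cancel.

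The obstacle you flag is not real, because $d=d_Ad_B$. After your normalization, Lemma~\ref{lemma:weingarten-large-d} bounds the residual errors by $\mathcal{O}\big(k^2(d_A)_k/d\big)=\mathcal{O}\big(k^2d_A^{k-1}/d_B\big)$. This is slightly larger than your stated $\mathcal{O}(k^2/d)$, since the off-diagonal loop weights exceed the diagonal ones, but it is still below your own threshold $\mathcal{O}(k^2d_A^{2k-2}/d_B)$. So no explicit transposition bookkeeping is needed, exactly as in the paper.
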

\textit{Proof.} First recall the definition of generalized frame potential and purity of the empirical ensemble:
\begin{equation}
    \mathcal{F}^{(k)}_\mathrm{emp} = \frac{d^{2k}}{d_B^2}\sum_{\mathbf{z},\mathbf{z}'}\left| \left(K_\mathbf{z}|K_{\mathbf{z}'}\right)\right|^{2k}=d^{2k}\mathrm{Tr}\left(M^{(k)\,2}_\mathrm{emp}\right),\quad\text{and}\quad     \mathcal{P}^{(k)}_\mathrm{emp}=\frac{d^k}{d_B}\sum_\mathbf{z}(K_\mathbf{z}|K_\mathbf{z})^{k}=d^k\mathrm{Tr}\left(M_\mathrm{emp}^{(k)}\right),
\end{equation}
then by properties of the Frobenius norm, we have
\begin{equation}
\begin{aligned}
\label{eq:RUC-frobenius}
    {\mathbb{E}}\left[
        \left\| d^k M_\mathrm{emp}^{(k)} - M_\mathrm{Gin}^{(k)} \right\|_2^2
        \right] &= d^{2k}{\mathbb{E}}\left[\mathrm{Tr}\left(M_\mathrm{emp}^{(k)\,2}\right)\right] - 2d^k{\mathbb{E}}\left[\mathrm{Tr}\left( M_\mathrm{emp}^{(k)}M_\mathrm{Gin}^{(k)}\right) \right] + \mathrm{Tr}\left(M_\mathrm{Gin}^{(k)\,2}\right)
        \\
        &=\mathbb{E}\left[\mathcal{F}_\mathrm{emp}^{(k)}\right] - 2k!\mathbb{E}\left[\mathcal{P}_\mathrm{emp}^{(k)}\right] + \mathrm{Tr}\left(M_\mathrm{Gin}^{(k)\,2}\right).
\end{aligned}
\end{equation}
We then evaluate these terms separately. For the generalized frame potential, we have
\begin{equation}
\begin{aligned}
\label{eq:RUC-expectation-F}
\mathbb{E}\left[\mathcal{F}_\mathrm{emp}^{(k)}\right] &= \frac{d^{2k}}{d_B^2}\sum_{\mathbf{z},\mathbf{z}'}\mathbb{E}\left[ \vcenter{\hbox{\includegraphics[height=1.7cm]{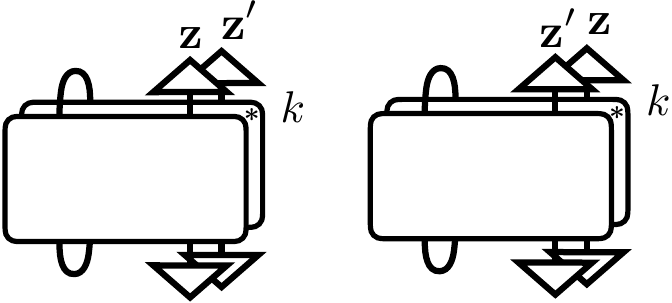}}} \right]
\\
&= \frac{d^{2k}}{d_B}\mathbb{E}\left[ \vcenter{\hbox{\includegraphics[height=1.45cm]{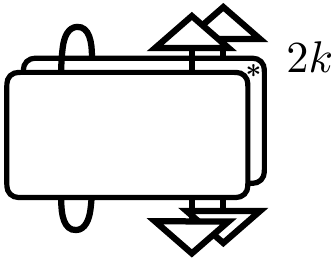}}} \right] + \frac{d^{2k}(d_B-1)}{d_B}\mathbb{E}\left[ \vcenter{\hbox{\includegraphics[height=1.5cm]{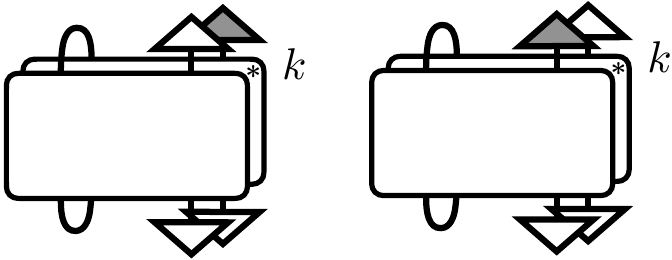}}} \right]
\\
&= \frac{d^{2k}}{d_B}
\underbrace{\sum_{\sigma,\tau\in S_{2k}}\vcenter{\hbox{\includegraphics[height=1.25cm]{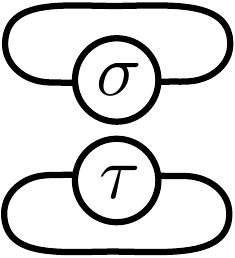}}}\mathrm{Wg}_d^{(2k)}(\sigma^{-1}\tau)}_{(\text{i})} + \frac{d^{2k}(d_B-1)}{d_B}\underbrace{\sum_{\substack{(\sigma_1,\sigma_2)\in S_k^{\times 2} \\ \tau\in S_{2k}}} \vcenter{\hbox{\includegraphics[height=1.3cm]{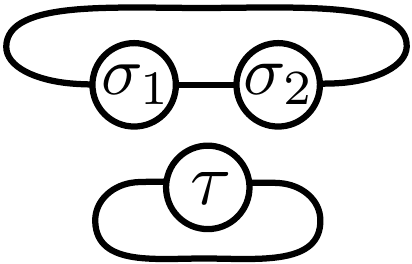}}}\mathrm{Wg}_d^{(2k)}\left((\sigma_1,\sigma_2)^{-1}\tau\right)}_{(\text{ii})}
\\
&\le k!(d_A^2)_k + \frac{1}{d_B}\left( (d_A^2)_{2k} - k! (d_A^2)_k \right) + \mathcal{O}\left(\frac{k! k^2 d_A^{2k} (d_A)_k}{d}\right),
\end{aligned}
\end{equation}
where in the first line we represent the Kraus operators diagrammatically, with the squares in front representing $U_{AB}$, and the squares in the back representing $U_{AB}^*$; in the second line we separate the sum into two parts: the first contains contributions with $\mathbf{z}=\mathbf{z}'$, and the second contains contributions with $\mathbf{z}\ne\mathbf{z}'$, represented with different colors in the top boundary; in the third line we use the Weingarten calculus summarized in Eq.~\eqref{eq:weingarten-calculus}, for the second term, in place of $\sigma$ now constrained by the top boundary condition, only permutation spins in the form of $(\sigma_1, \sigma_2)\in S_k^{\times 2}\subset S_{2k}$ survive, here $\sigma_1, \sigma_2\in S_k$ and $(\sigma_1, \sigma_2)$ is a permutation on $2k$ copies defined by $(\sigma_1, \sigma_2)(i) = k+\sigma_1(i)$ and $(\sigma_1, \sigma_2)(k+i)=\sigma_2(i)$ for $1\le i \le k$. We then use lemma~\ref{lemma:weingarten-large-d} for $(\text{i})$ and $(\text{ii})$ to bound the off-diagonal sum: for sufficiently large $d_B$ such that $(2k)^2 < d = d_Ad_B$ and lemma~\ref{lemma:weingarten-large-d} holds, we have
\begin{equation}
\begin{aligned}
    (\text{i}) &= \sum_{\sigma\in S_{2k}}\left( d_A^{2\#\text{cyc}(\sigma)} \mathrm{Wg}_d^{(2k)}(\mathrm{id}) + \mathcal{O}\left( \frac{(2k)^2}{d^{2k+1}}d_A^{2k}d_A^{\#\text{cyc}(\sigma)
    } \right)\right)
    \\
    &= \frac{(d_A^2)_{2k}}{d^{2k}} + \mathcal{O}\left( \frac{k^2}{d^{2k+1}}d_A^{2k}\underbrace{\sum_{\sigma\in S_{2k}}d_A^{\#\text{cyc}(\sigma)}}_{(d_A)_{2k}} \right) = \frac{(d_A^2)_{2k}}{d^{2k}} + \mathcal{O}\left( \frac{k^2 d_A^{2k}(d_A)_{2k}}{d^{2k+1}} \right),
\end{aligned}
\end{equation}
where $\#\text{cyc}(\sigma)$ is the number of cycles in a permutation $\sigma$, and
\begin{equation}
\begin{aligned}
    (\text{ii}) & = \sum_{(\sigma_1, \sigma_2)\in S_k^{\times 2}} \left(d_A^{2\#\text{cyc}(\sigma_1 \sigma_2)} \mathrm{Wg}_d^{(2k)}(\mathrm{id}) + \mathcal{O}\left( \frac{(2k)^2}{d^{2k+1}}d_A^{2k} d_A^{\#\text{cyc}(\sigma_1\sigma_2)} \right)\right)
    \\
    &= \frac{k! (d_A^2)_k}{d^{2k}} + \mathcal{O}\left( \frac{k^2}{d^{2k+1}}k!\, d_A^{2k} (d_A)_k \right),
\end{aligned}
\end{equation}
where $\sigma_1\sigma_2\in S_k$ is a permutation on $k$ copies, and $k!$ comes from the cardinality of $S_k$. Combining these estimations for $(\text{i})$ and $(\text{ii})$, we arrive at the fourth line in Eq.~\eqref{eq:RUC-expectation-F}. Similarly, for the generalized purity, we have
\begin{equation}
\begin{aligned}
    \mathbb{E}\left[\mathcal{P}_\mathrm{emp}^{(k)}\right] &= d^k\,\mathbb{E}\left[\vcenter{\hbox{\includegraphics[height=1.45cm]{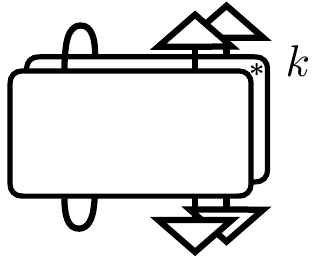}}}\right]
    = d^k \sum_{\sigma,\tau\in S_k} \vcenter{\hbox{\includegraphics[height=1.25cm]{graph_perm_contract_same.pdf}}}\mathrm{Wg}_d^{(k)}(\sigma^{-1}\tau)
    \\
    &= d^k\sum_{\sigma \in S_k}\left( d_A^{2\#\text{cyc}(\sigma)}\mathrm{Wg}_d^{(k)}(\mathrm{id}) + \mathcal{O}\left( \frac{k^2}{d^{k+1}}d_A^k d_A^{\#\text{cyc}(\sigma)} \right)\right)
    =
    (d_A^2)_k + \mathcal{O}\left( \frac{k^2 d_A^k (d_A)_k}{d} \right).
\end{aligned}
\end{equation}
Combining these, we have the following inequality continuing Eq.~\eqref{eq:RUC-frobenius}
\begin{equation}
    \mathbb{E}\left[ \left\| d^k M_\mathrm{emp}^{(k)} - M_\mathrm{Gin}^{(k)} \right\|_2^2 \right] \le \frac{d_A^{4k}}{d_B}\left(1 + \mathcal{O}\left(\frac{k^2}{d_A^2}\right)\right).
\end{equation}
Substitute and we find that
\begin{equation}
    \mathbb{E}\left[\Delta_\mathrm{Gin}^{(k)\,2}\right]\le\frac{d_A^{2k}}{k!\cdot d_B}\left(1 + \mathcal{O}\left(\frac{k^2}{d_A^2}\right)\right)
\end{equation}
which shows the desired result. \quad$\square$

With lemma~\ref{lemma:Haar-Delta-Gin}, we are then ready to prove Theorem~\ref{thm:1-precise}.

\textit{Proof of Theorem~\ref{thm:1-precise}.} Markov inequality guarantees that
\begin{equation}
    \mathbb{P}\left[ \Delta_\mathrm{Gin}^{(k)}\ge \epsilon \right] \le \frac{\mathbb{E}\left[\Delta_\mathrm{Gin}^{(k)\,2}\right]}{\epsilon^2}\le\frac{d_A^{2k}}{\epsilon^2\cdot k!\cdot d_B}\left(1 + \mathcal{O}\left(\frac{k^2}{d_A^2}\right)\right),
\end{equation}
for any $\epsilon > 0$. By setting the RHS to be $\delta$, we can claim that for any $0 < \delta < 1$,
\begin{equation}
    \Delta_\mathrm{Gin}^{(k)} \le \sqrt{\frac{d_A^{2k}}{\delta\cdot k!\cdot d_B}\left(1+\mathcal{O}\left(\frac{k^2}{d_A^2}\right)\right)},
\end{equation}
with probability at least $(1-\delta)$. This concludes the proof. \quad$\blacksquare$

The results in this appendix suggest that for global Haar random unitary, the Kraus ensemble has distance to Ginibre $\Delta_\mathrm{Gin}^{(k)}$ scaling as $1/\sqrt{d_B}$. We comment here that if we consider the \textit{truncated unitary ensemble} ($\mathbf{TUE}$), which is a continuous distribution of a submatrix of a larger global unitary across the realizations of the global unitary, the distance to Ginibre would scale as $1/d_B$.

\section{Kicked Ising model and proof for dual-unitary case}
\label{app:DU-KIM-proof}

In this section, we introduce the tensor network representation of a kicked Ising model (KIM) in Section~\ref{sec:DU-Floquet}, and provide a proof for Theorem~\ref{thm:2-precise}. Recall that the Floquet unitary of KIM is defined as
\begin{equation}
    U_F = U_h e^{-i H_\mathrm{Ising}},\quad\text{where}~H_\mathrm{Ising} = J\sum_{i=1}^{N-1} \sigma_i^z\sigma_{i+1}^z + \sum_{i=1}^Ng_i\sigma_i^z + (b_1 \sigma_1^z + b_N\sigma_N^z)~\text{and}~U_h = \exp(-ih\sum_{i=1}^N\sigma_i^y).
\end{equation}
Here, $g_i\notin\frac{\mathbb{Z}\pi}{8}$ and $b_1=b_N = \frac{\pi}{4}$ on the boundary. First, we represent the single-qubit $Z$-rotation gate with a dot
\begin{equation}
    e^{- i g_i \sigma_i^z} = \,\vcenter{\hbox{\includegraphics[height=0.7cm]{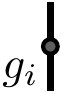}}}\, = R_z(2g_i) = \begin{pmatrix}
            e^{-ig_i} & ~
            \\
            ~ & e^{ig_i}
        \end{pmatrix},
\end{equation}
which satisfies the property: $\vcenter{\hbox{\includegraphics[height=0.9cm]{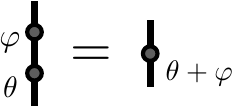}}}$\,. We also make use of the following notation for a multi-leg node:
\begin{equation}
    \vcenter{\hbox{\includegraphics[height=1.1cm]{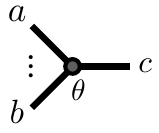}}}\, = \delta_{a\cdots b c} e^{-i\theta(1-2a)},
\end{equation}
where $a,\cdots,b,c\in\{0,1\}$ indicates the computational basis, $\delta_{a\cdots bc}=1$ iff $a=\cdots=b=c$ and $\delta_{a\cdots b c}=0$ otherwise. A node with no dot at the vertex is the special case with $\theta = 0$. By contracting one of the legs with a $|\pm\rangle$ state, we have
\begin{equation}
\label{eq:attach_pm}
    \vcenter{\hbox{\includegraphics[height=0.75cm]{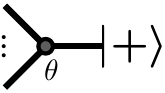}}}\, = \,\vcenter{\hbox{\includegraphics[height=0.75cm]{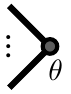}}}\, \big/ \sqrt{2},\quad\text{and}\quad\,\vcenter{\hbox{\includegraphics[height=0.75cm]{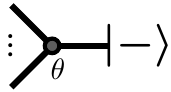}}}\, = \,i\times\vcenter{\hbox{\includegraphics[height=0.75cm]{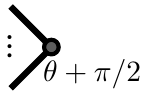}}}\, \big/ \sqrt{2},
\end{equation}
where $|\pm\rangle = \left(|0\rangle \pm |1\rangle\right)/\sqrt{2}$. This allows us to attach or remove a $|\pm\rangle$ state to any such node. The global phase $i$ in the second equality is irrelevant and can be dropped without affecting the physics. Next, the two-qubit $ZZ$-rotation is represented with
\begin{equation}
\label{eq:ZZ-rotation}
    \vcenter{\hbox{\includegraphics[height=0.55cm]{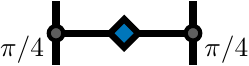}}}\,=\begin{pmatrix}
        -i & ~ & ~ &~
        \\
        ~ & 1 & ~ & ~
        \\
        ~ & ~ & 1 & ~
        \\
        ~ & ~ & ~ & i
    \end{pmatrix}\underbrace{\begin{pmatrix}
            1 & ~ & ~ & ~
            \\
            ~ & e^{i(2J - \pi/2)} & ~ & ~
            \\
            ~ & ~ & e^{i(2J - \pi/2)} & ~
            \\
            ~ & ~ & ~ & -1
        \end{pmatrix}}_{\mathsf{Blue}(J)}= -i e^{iJ} e^{-iJ \sigma^z\otimes \sigma^z},
\end{equation}
where we used the definition of a blue gate $\vcenter{\hbox{\includegraphics[height=0.4cm]{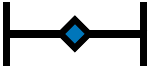}}}\,=\mathsf{Blue}(J)$ as in Eq.~\eqref{eq:blue-square}. The global phase $-i e^{iJ}$ is irrelevant and can be dropped. Finally, the single-qubit $Y$-rotations in $U_h$ can be represented as
\begin{equation}
\label{eq:Y-rotation}
    \vcenter{\hbox{\includegraphics[height=1cm]{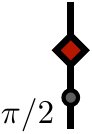}}}\,= \begin{pmatrix}
            \cos(h) & \sin(h)
            \\
            \sin(h) & -\cos(h)
        \end{pmatrix}\begin{pmatrix}
        -i & ~
        \\
        ~ & i
    \end{pmatrix}= -ie^{-i h \sigma^y},
\end{equation}
where we used the definition of a red gate $\vcenter{\hbox{\includegraphics[height=0.7cm]{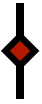}}}\,=\mathsf{Red}(h)$ as in Eq.~\eqref{eq:red-square}. Likewise, the global phase $-i$ can be dropped. Combining these, we represent the Floquet unitary with
\begin{equation}
    U_F = \vcenter{\hbox{\includegraphics[height=1.2cm]{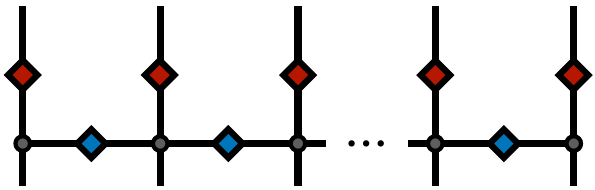}}}\, ,
\end{equation}
up to an irrelevant global phase, where the dot on the $i$-th site is parametrized by $g_i$. Here, we have absorbed the single-qubit $Z$-rotations (dots) arising from Eqs.~\eqref{eq:ZZ-rotation} and~\eqref{eq:Y-rotation}: for each qubit, there are two $\frac{\pi}{4}$-rotations (including the leftmost and rightmost qubit as we added the $b_1$ and $b_N$ terms) from $ZZ$-rotations in Eq.~\eqref{eq:ZZ-rotation}, and there is one $\frac{\pi}{2}$-rotation from $Y$-rotations in Eq.~\eqref{eq:Y-rotation}. These contributions on each qubit add up to a $Z$-rotation of angle $\pi$, which results in a trivial global phase $R_z(2\pi)=-1$ that we dropped.

At the DU point $J=h=\frac{\pi}{4}$, the blue gate reduces to a controlled-$Z$ gate and the red gate to a Hadamard gate. For concreteness, we also set $g_i = g\notin\frac{\mathbb{Z}\pi}{8}$ to be uniform, then our setup is identical to that of~\cite{ho_exact_2022}. We now turn to proving Theorem~\ref{thm:2-precise}, which we restate here for convenience:
\begin{atheorem}
    For the dual-unitary kicked Ising model with uniform $g_i = g \notin \frac{\mathbb{Z}\pi}{8}$ in the thermodynamic limit $N_B\rightarrow\infty$, for any integer $k\ge 1$ and any $t\ge 2N_A$, the Kraus ensemble $\mathcal{K}(t)$ satisfies
    \begin{equation}
        \lim_{N_B\rightarrow\infty} \Delta_\mathrm{Gin}^{(k)}(t) = 1 - \frac{2^{kt}}{(2^t)_k},
    \end{equation}
    where $(x)_k:= x(x+1)\cdots(x+k-1)$ is the rising factorial. For fixed $k$, this distance decays exponentially in time as
    \begin{equation}
        \lim_{N_B\rightarrow\infty}\Delta_\mathrm{Gin}^{(k)}(t) = \mathcal{O}\left(\frac{k^2}{2^t}\right),\qquad(t\rightarrow\infty).
    \end{equation}
\end{atheorem}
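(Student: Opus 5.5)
The plan is to read the circuit of Fig.~\ref{fig:kim} sideways, as in the sketch of Sec.~\ref{sec:DU-Floquet}, and reduce the Kraus moments to moments of a Haar random $t$-qubit state passed through a fixed edge map. Concretely, I will write the vectorized Kraus operator as
\begin{equation}
|K_\mathbf{z}) = W\,V_{z_{N_B}}\cdots V_{z_1}|\Omega\rangle ,
\end{equation}
where $|\Omega\rangle$ is the right boundary state. I will show by the diagrammatic rules of this appendix (Eq.~\eqref{eq:attach_pm}, with the CZ and Hadamard identities at $J=h=\pi/4$) three facts:
\begin{itemize}
\item at the solvable choice $|B_0\rangle=|+\rangle^{\otimes N_B}$ with computational-basis measurement, each $V_{z}$ equals $c\,U_z$ with $U_z$ a $t$-qubit unitary and $c$ a constant independent of $z$;
\item $|\Omega\rangle$ is a fixed normalized state;
\item for $t\ge 2N_A$ the edge operator satisfies $W=\alpha\,(\mathbbm{1}_{2N_A}\otimes\langle\phi|)\,O$, i.e. $W$ is proportional to a unitary $O$ followed by projecting $t-2N_A$ qubits onto a fixed state $\langle\phi|$.
\end{itemize}
The last fact is where $t\ge 2N_A$ enters, since $W$ has $t$ input legs but only $2N_A$ output legs. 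The constants $c$ and $\alpha$ are then fixed by the completeness relation $\sum_\mathbf{z}K_\mathbf{z}^\dagger K_\mathbf{z}=\mathbbm{1}_A$, i.e. $\sum_\mathbf{z}(K_\mathbf{z}|K_\mathbf{z})=d_A$. Because all products $U_\mathbf{z}$ are unitary, every $\|K_\mathbf{z}\|_2$ is then equal, so each $d\,(K_\mathbf{z}|K_\mathbf{z})$ is a constant determined by $d_A$, $D$ and $\alpha$.

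The second step, which I expect to be the main obstacle, is the thermodynamic limit of the empirical moment of the temporal states. Set $D=2^t$ and define the $k$-fold twirl
\begin{equation}
T_k = \tfrac12\sum_{z=0,1}U_z^{\otimes k}\otimes U_z^{*\otimes k}.
\end{equation}
Then
\begin{equation}
\frac1{d_B}\sum_\mathbf{z}\bigl(U_\mathbf{z}|\Omega\rangle\langle\Omega|U_\mathbf{z}^\dagger\bigr)^{\otimes k}=T_k^{N_B}\bigl(|\Omega\rangle\langle\Omega|^{\otimes k}\bigr).
\end{equation}
I will use the result of~\cite{ho_exact_2022} that $\{U_0,U_1\}$ is a universal gate set on $t$ qubits when $g\notin\mathbb{Z}\pi/8$. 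From universality, the fixed-point space of $T_k$ (and of its unit-modulus eigenvalues) coincides with the commutant of $\mathrm{U}(D)^{\otimes k,k}$, namely the span of the permutations $W_\sigma$. So $T_k$ is a contraction with a strict gap on the complement of this space, and $T_k^{N_B}\to$ the Haar projection as $N_B\to\infty$. The delicate point is excluding eigenvalues of modulus one other than $1$; I would handle this by noting that such an eigenvector would be invariant, up to a phase, under the closed group generated by $U_0$ and $U_1$, which is all of $\mathrm{U}(D)$ up to phases. Hence the empirical temporal moment converges to $\rho^{(k)}_{H,D}=\sum_{\sigma}W_\sigma/(D)_k$.

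The final step pushes this through the edge. By unitarity of $O$, the map $W^{\otimes k}(\cdot)W^{\dagger\otimes k}$ sends $\sum_\sigma W_\sigma$ on $D$ dimensions to $\alpha^{2k}\sum_\sigma W_\sigma$ on $d_A^2$ dimensions, since contracting each copy with $\langle\phi|\cdot|\phi\rangle$ contributes $1$ per cycle. Using Eq.~\eqref{eq:M_Gin} this gives
\begin{equation}
\lim_{N_B\to\infty}d^kM^{(k)}_\mathrm{emp}=\kappa^k\,\frac{k!}{(D)_k}\,\mathbb{P}^{(k)}_{\mathrm{sym},d_A^2},
\end{equation}
for a constant $\kappa$ built from $d$, $c$ and $\alpha$, and this is proportional to $M_\mathrm{Gin}^{(k)}$. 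Fixing $\kappa$ from the case $k=1$ via $\mathrm{Tr}\,M^{(1)}_\mathrm{Gin}=d_A^2$ yields $\kappa=D$. Hence
\begin{equation}
\lim_{N_B\to\infty}d^kM_\mathrm{emp}^{(k)}=\frac{D^k}{(D)_k}M_\mathrm{Gin}^{(k)}.
\end{equation}
In the language of Appendix~\ref{app:kraus-as-unnormalized-ensemble} this means $\Delta_\mathrm{run}^{(k)}=0$ and $x^{(k)}_\mathrm{emp}=D^k/(D)_k$. The distance formula there then gives
\begin{equation}
\lim_{N_B\to\infty}\Delta^{(k)}_\mathrm{Gin}=1-\frac{2^{kt}}{(2^t)_k}.
\end{equation}
The asymptotic form follows from $(D)_k/D^k=\prod_{j<k}(1+j/D)=1+\mathcal{O}(k^2/D)$.
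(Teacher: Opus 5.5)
Your argument is essentially the paper's proof: sideways transfer matrices $V_z\propto U_z$ that are unitary, an edge $W$ proportional to a $t$-qubit unitary followed by projecting the middle $t-2N_A$ qubits onto a fixed state (the paper's $|+\rangle^{\otimes(t-2N_A)}$), convergence of $\{U_{\mathbf z}|\Omega\rangle\}$ to a Haar state design in the thermodynamic limit, and permutations pushed through the edge to give $D^kM_{\rm Gin}^{(k)}/(D)_k$. There are two departures. First, you re-derive the quoted design result (Lemma~\ref{lemma:ho-exact}) from universality of $\{U_0,U_1\}$ via the spectral gap of $T_k$ rather than citing it. To close your phase argument, note that a joint eigenvector of both $U_z^{\otimes k}\otimes U_z^{*\otimes k}$ with common eigenvalue $e^{i\theta}$ spans a one-dimensional invariant subspace of a continuous representation of $\mathrm{SU}(D)$. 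Its character must be trivial because $\mathrm{SU}(D)$ is perfect, which forces $e^{i\theta}=1$. Second, you fix the prefactor $\kappa=D$ from the exact identity $\mathrm{Tr}(d\,M^{(1)}_{\rm emp})=d_A^2$ rather than by counting factors of $\sqrt2$ in the diagram.

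One sentence, however, is false and should be removed: the norms $\|K_{\mathbf z}\|_2$ are \emph{not} all equal. Unitarity of the bulk only makes every $\|\widetilde\Omega_{\mathbf z}\|$ equal. For $t>2N_A$ the edge is a partial projection, not an isometry, so $d\,(K_{\mathbf z}|K_{\mathbf z})=D\,\|(\mathbbm{1}_{2N_A}\otimes\langle\phi|)|\Psi_{\mathbf z}\rangle\|^2$, with $|\Psi_{\mathbf z}\rangle=OU_{\mathbf z}|\Omega\rangle$, fluctuates with $\mathbf z$. If the norms were equal you would get $x^{(k)}_{\rm emp}=d_A^{2k}/(d_A^2)_k$. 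That contradicts your own correct result $D^k/(D)_k$ whenever $D>d_A^2$. These edge-induced norm fluctuations are exactly what make the limiting distance nonzero.

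Likewise, completeness cannot fix $c$ and $\alpha$ separately. It fixes only the combination $\kappa=d|c|^{2N_B}|\alpha|^2$, and only in the limit where the first temporal moment becomes $\mathbbm{1}/D$. That is all your final step actually uses, so the conclusion stands.
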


To show this, we prove a stronger result about the $k$-th moment of the empirical Kraus ensemble of the DU KIM, which we denote as $M_\mathrm{DU-KIM}^{(k)}$ (in place of $M_\mathrm{emp}^{(k)}$).

\begin{lemma}
\label{lemma:du-kim}
    For the dual-unitary kicked Ising model in the thermodynamic limit $N_B\rightarrow\infty$, the Kraus ensemble at any $t\ge 2N_A$ satisfies
    \begin{equation}
        \lim_{N_B\rightarrow\infty} d^k M^{(k)}_\mathrm{DU-KIM} = \frac{2^{kt}}{(2^t)_k}M_\mathrm{Gin}^{(k)}.
    \end{equation}
\end{lemma}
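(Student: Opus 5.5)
We will follow the spacetime-duality strategy sketched in Sec.~\ref{sec:DU-Floquet} and write each vectorized Kraus operator as a sideways contraction,
\begin{equation*}
|K_\mathbf{z}) = W\, V_{z_{N_B}}\cdots V_{z_1}|\Omega\rangle ,
\end{equation*}
where $|\Omega\rangle$ is the $t$-qubit right-boundary state and $W$ is the edge map from $t$ qubits to $2N_A$ qubits. The proof then has three steps.

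\textbf{Step 1: unitary transfer matrices.} Using the node identities, Eq.~\eqref{eq:attach_pm}, together with the DU values $J=h=\pi/4$ (blue gate becomes CZ, red gate becomes Hadamard), we contract each vertical slice. This slice carries the initial $|+\rangle$ and the final $\langle z_i|$. We expect to obtain $V_{z_i}=c\,\widetilde V_{z_i}$, with $\widetilde V_{z_i}$ a $t$-qubit unitary. The constant $c$ should be $|c|=2^{-t/2}$: each slice contributes $t$ factors of $1/\sqrt2$ from attaching or removing $|\pm\rangle$ states. Likewise $|\Omega\rangle$ should be, up to normalization, a product $|\pm\rangle$-type state. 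We would reproduce this from \cite{ho_exact_2022} rather than rederive it.

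\textbf{Step 2: Haar distribution at the cut.} By the result of \cite{ho_exact_2022}, for $g\notin\mathbb Z\pi/8$ the pair $\{\widetilde V_0,\widetilde V_1\}$ generates a dense subgroup of $\mathrm{U}(2^t)$. Consider the uniform average over $\mathbf z$ of the states $\widetilde V_{z_{N_B}}\cdots \widetilde V_{z_1}|\Omega\rangle$. Their $k$-th moments evolve by the superoperator $\Phi(X)=\tfrac12\sum_{z}\widetilde V_z^{\otimes k}X\widetilde V_z^{\dagger\otimes k}$. This map is unital and trace preserving, and density implies that its only fixed points in the symmetric subspace are multiples of $\mathbb P_{\mathrm{sym},2^t}^{(k)}$, with a spectral gap. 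Hence $\Phi^{N_B}$ applied to $(|\Omega\rangle\langle\Omega|)^{\otimes k}$ converges to $\rho^{(k)}_{H,2^t}$ as $N_B\to\infty$. We take this primitivity argument, also from \cite{ho_exact_2022}, to be the main obstacle. If needed, we would argue it via the ``universal gate set implies no invariant operators besides the commutant'' route, using Schur--Weyl duality.

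\textbf{Step 3: the edge $W$ and the prefactor.} Next, contract the region left of the dashed cut. For $t\ge 2N_A$, repeated use of the same node identities and dual-unitarity should show that $W=c'\,P$, where $P$ is an isometry-adjoint (co-isometry), $PP^\dagger=\mathbbm 1_{2^{2N_A}}$. We then combine the three steps:
\begin{equation*}
d^kM^{(k)}_\mathrm{DU\text{-}KIM}\;\to\; d^k |c|^{2kN_B}|c'|^{2k}\,P^{\otimes k}\rho^{(k)}_{H,2^t}P^{\dagger\otimes k}.
\end{equation*}
By Eq.~\eqref{eq:rho_Haar_k}, the compression of $\sum_\sigma W_\sigma$ by a co-isometry gives $\sum_\sigma W_\sigma$ on the $d_A^2$-dimensional space. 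So the limit equals $\mathrm{const}\cdot M_\mathrm{Gin}^{(k)}/(2^t)_k$.

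Rather than tracking all phases and constants, we fix the prefactor by normalization. First, $\sum_\mathbf z K_\mathbf z^\dagger K_\mathbf z=\mathbbm 1$ gives $d\,\mathrm{Tr}M^{(1)}=d_A^2=\mathrm{Tr}M^{(1)}_\mathrm{Gin}$. Second, the edge constants must satisfy $d^k|c|^{2kN_B}|c'|^{2k}=(d|c|^{2N_B}|c'|^2)^k$, and $k=1$ fixes $d|c|^{2N_B}|c'|^2=2^t$. This gives the prefactor $2^{kt}/(2^t)_k$, proving Lemma~\ref{lemma:du-kim}.

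Theorem~\ref{thm:2-precise} then follows. Since the limit is proportional to $M_\mathrm{Gin}^{(k)}$, we have $\Delta^{(k)}_\mathrm{run}=0$, and the identity in Appendix~\ref{app:review} gives $\Delta_\mathrm{Gin}^{(k)}=|1-x|$ with $x=2^{kt}/(2^t)_k\le1$. Expanding $(2^t)_k=2^{kt}\bigl(1+\binom k2 2^{-t}+\dots\bigr)$ yields the $\mathcal O(k^2/2^t)$ decay.
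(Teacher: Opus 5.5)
Your proposal is correct and takes essentially the paper's route. The sideways transfer matrices are unitary, their products give Haar moments $\rho^{(k)}_{H,2^t}$ at the cut via \cite{ho_exact_2022}, the edge $W$ is proportional to a co-isometry for $t\ge 2N_A$, and $\sum_\sigma W_\sigma$ compresses to $M^{(k)}_{\mathrm{Gin}}$. The only real difference is that you fix the overall constant through $\sum_{\mathbf{z}}K_{\mathbf{z}}^\dagger K_{\mathbf{z}}=\mathbbm{1}_A$, whereas the paper counts factors explicitly to get $|K_{\mathbf{z}})=\sqrt{2^t/d}\,\langle+|^{\otimes(t-2N_A)}|\Psi^{(t)}_{\mathbf{z}}\rangle$. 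One slip: the per-slice constant is $|c|=1/\sqrt{2}$ (the paper has $V_i=U_{z_i}/\sqrt{2}$), not $2^{-t/2}$; your own normalization step forces $|c|^2=1/2$, since otherwise $d\,|c|^{2N_B}=2^{N_A}(2|c|^2)^{N_B}$ cannot tend to a finite nonzero limit, but you never use the value, so the argument is unaffected.
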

Then Theorem~\ref{thm:2-precise} follows directly from this lemma. In order to prove this, let us take a closer look at the structure of the DU KIM. A typical Kraus operator can be represented through tensor network contractions in Fig.~\ref{fig:du-kim-structure}, similar to the generic 1D circuit discussed in Section~\ref{sec:origin-ansatz} as shown in Fig.~\ref{fig:oseledets}. Here, the circuit bulk gives rise to a state $|\widetilde\Omega_\mathbf{z}\rangle$ at the cut between $A$ and $B$ upon contraction, which is then acted on by the edge operator $W$, resulting in the Kraus operator in its vectorized form: $|K_\mathbf{z}) = W|\widetilde\Omega_\mathbf{z}\rangle$. For DU KIM specifically, there are two analytically tractable properties: (i) the transfer matrices in the bulk are proportional to some unitary, namely $V_i = \frac{1}{\sqrt{2}}U_{z_i}$ where $\{U_0, U_1\}$ can be shown to form a universal gate set. Random long product of such gates approximates a Haar random unitary. (ii) The edge operator $W$ can be shown to be proportional to an isometry, namely a partial projection that maps a $t$-qubit state to a $(2N_A)$-qubit state. We will make use of these properties to prove the lemma.

\begin{figure}[!h]
    \centering
    \includegraphics[width=0.675\linewidth]{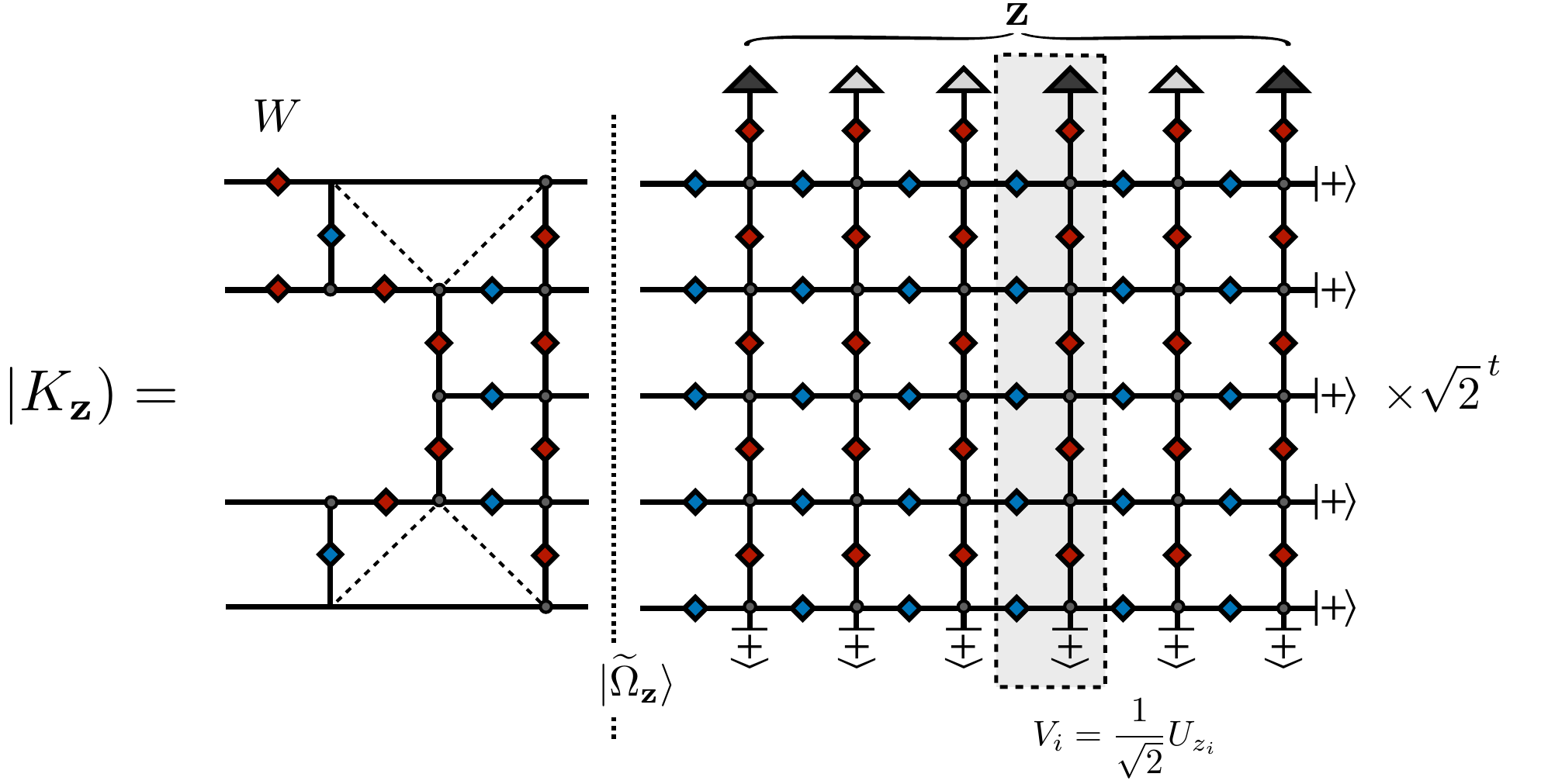}
    \caption{A typical Kraus operator of the dual-unitary (DU) KIM, shown for $N_A=2$, $N_B=6$, and $t=5$. In the bulk, we attach a product state $|\Omega\rangle = |+\rangle^{\otimes t}$ at the rightmost slice by Eq.~\eqref{eq:attach_pm}, thereby introducing a factor of $\sqrt{2}^{\,t}$. For fixed $g_i = g\notin\frac{\mathbb{Z}\pi}{8}$, the transfer matrices (shown in the grey box) are proportional to a unitary, namely $V_i = \frac{1}{\sqrt{2}}U_{z_i}$. This can be seen easily by noting that the blue gates and the red gate switch roles when viewed sideways, and the measurement on the top translates to projections by $|\pm\rangle$. Therefore, the final state at the cut is $|\widetilde\Omega_\mathbf{z}\rangle = \frac{1}{\sqrt{2}^{N_B - t}}U_{z_{N_B}}\cdots U_{z_2} U_{z_1}|+\rangle^{\otimes t}$. The form of the edge operator $W$ is a simple reshaping of the original tensor network. This can be seen by identifying and gluing back together the pairs of dashed lines on top and at the bottom of $W$.}
    \label{fig:du-kim-structure}
\end{figure}

By denoting $\mathcal{U}_\mathbf{z} = U_{z_B}\cdots U_{z_2} U_{z_1}$ and collecting them into an equal-weight unitary ensemble $\mathcal{E}_\mathcal{U}$,~\cite{ho_exact_2022} has shown that they form an exact unitary-design in the thermodynamic limit (TDL). Here, we quote their results without proving.

\begin{lemma}[Theorem 2 in~\cite{ho_exact_2022}]
\label{lemma:ho-exact}
    For $g\notin\mathbb{Z}\pi/8$, the unitary ensemble $\mathcal{E}_\mathcal{U}$ forms an exact unitary-design in the TDL. That is, all moments $k$ of $\mathcal{E}_\mathcal{U}$ and the Haar random unitary ensemble agree:
    \begin{equation}
        \lim_{N_B\rightarrow\infty} \sum_{\mathbf{z}}\frac{1}{d_B}\mathcal{U}_\mathbf{z}^{\otimes k}\otimes \mathcal{U}_\mathbf{z}^{*\otimes k} = \int_{U\sim\mathrm{Haar}}\mathrm{d}U\, U^{\otimes k}\otimes U^{*\otimes k} = M_\mathrm{Haar}^{(k)}.
    \end{equation}
\end{lemma}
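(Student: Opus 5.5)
The plan is to reduce the statement to a spectral property of a single-step moment operator, and then derive that property from universality of the gate set $\{U_0,U_1\}$. Since the ensemble $\mathcal{E}_\mathcal{U}$ weights every bitstring $\mathbf{z}$ of length $N_B$ equally, the averaged moment factorizes:
\begin{equation}
\frac{1}{d_B}\sum_{\mathbf{z}}\mathcal{U}_\mathbf{z}^{\otimes k}\otimes\mathcal{U}_\mathbf{z}^{*\otimes k}=T^{N_B},\qquad T:=\frac{1}{2}\sum_{z\in\{0,1\}}U_z^{\otimes k}\otimes U_z^{*\otimes k}.
\end{equation}
The limit $N_B\rightarrow\infty$ therefore becomes a question about the spectrum of the fixed operator $T$, which acts on a $2^{2kt}$-dimensional space. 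Here $U_z=\sqrt{2}V_z$ are exactly unitary, as established in Fig.~\ref{fig:du-kim-structure}. In the same way, $M_\mathrm{Haar}^{(k)}$ is the Hilbert--Schmidt orthogonal projector onto the commutant space
\begin{equation}
\mathcal{C}_k:=\{X:\ (U^{\otimes k}\otimes U^{*\otimes k})X=X\ \ \forall U\in\mathrm{U}(2^t)\},
\end{equation}
i.e. the span of the vectorized permutation operators. It thus suffices to show that $T^{N_B}$ converges to the orthogonal projector onto $\mathcal{C}_k$.

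\textbf{Step 1 (contraction and peripheral spectrum).} $T$ is an average of unitaries, so $\|T\|_\infty\le 1$. Suppose $Tv=e^{i\phi}v$ with $\|v\|=1$. Strict convexity of the Euclidean norm forces $U_0^{\otimes k}\otimes U_0^{*\otimes k}v=U_1^{\otimes k}\otimes U_1^{*\otimes k}v=e^{i\phi}v$. Hence $v$ spans a one-dimensional invariant subspace for the closed group $G=\overline{\langle U_0,U_1\rangle}$, on which $G$ acts by a continuous character $\chi$ with $\chi(U_z)=e^{i\phi}$. Because $T$ is a contraction, its unimodular eigenvalues have no nontrivial Jordan blocks. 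Moreover, the spectral projector onto the peripheral spectrum is orthogonal, because for a contraction the unimodular eigenvectors of $T$ and $T^\dagger$ coincide. Consequently $T^{N}$ converges if and only if the only peripheral eigenvalue is $1$, and in that case the limit is the orthogonal projector onto $\ker(T-\mathbbm{1})=\{v: U_z\text{-invariant for } z=0,1\}$.

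\textbf{Step 2 (universality).} Next I would invoke the fact, from~\cite{ho_exact_2022}, that for $g\notin\mathbb{Z}\pi/8$ the set $\{U_0,U_1\}$ is universal up to phases, i.e. $G\supseteq \mathrm{SU}(2^t)$ (possibly modulo global phases). To verify this, I would show that the Lie algebra generated by $\log U_0$, $\log U_1$ and their commutators is all of $\mathfrak{su}(2^t)$. Concretely, the sideways transfer matrices consist of a CZ ladder, Hadamards and $R_z(2g)$ rotations, and the irrational-angle condition $g\notin\mathbb{Z}\pi/8$ guarantees that some product generates a dense one-parameter subgroup. The fixed space of $T$ is then precisely $\mathcal{C}_k$: the map $U\mapsto U^{\otimes k}\otimes U^{*\otimes k}$ is blind to global phases, so invariance under $\mathrm{SU}(2^t)$ is equivalent to invariance under $\mathrm{U}(2^t)$.

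\textbf{Step 3 (excluding $e^{i\phi}\neq 1$).} The character $\chi$ from Step 1 factors through the representation, which is trivial on scalars, so it descends to a character of $G/(G\cap\mathrm{U}(1))\supseteq \mathrm{PSU}(2^t)=\mathrm{PU}(2^t)$. This group is perfect (a compact simple Lie group), so $\chi\equiv 1$ and $\phi=0$. Combining Steps 1--3, $T^{N_B}\rightarrow \mathbb{P}_{\mathcal{C}_k}=M_\mathrm{Haar}^{(k)}$, and the convergence is in fact exponentially fast, at a rate set by the subleading eigenvalue of $T$.

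I expect the main obstacle to be Step 2, i.e. proving universality of $\{U_0,U_1\}$ for every $t$ and every admissible $g$. The rest of the argument is standard spectral theory of random walks on compact groups. For Step 2 I would first try an induction on $t$, showing that the generated algebra contains all single-site Pauli rotations together with a CZ entangler, which is a known universal set. The condition $g\notin\mathbb{Z}\pi/8$ would enter exactly in guaranteeing that the relevant rotation angle is an irrational multiple of $\pi$ or generates a non-finite subgroup.
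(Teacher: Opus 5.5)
Your proposal is correct and takes the same route the paper indicates when it quotes Ho--Choi: universality of $\{U_0,U_1\}$ implies that long random products reproduce Haar moments, and your factorization into $T^{N_B}$ together with Steps 1 and 3 (semisimple, reducing peripheral spectrum; perfectness of $\mathrm{PU}(2^t)$ ruling out unimodular eigenvalues other than $1$) supplies the spectral details the paper leaves implicit. For Step 2, rely on the cited universality result rather than your sketched check: $g\notin\mathbb{Z}\pi/8$ is not an irrationality condition, and $\log U_0,\log U_1$ generating $\mathfrak{su}(2^t)$ does not by itself give density, since finite-order elements (as in the excluded Clifford case $g\in\mathbb{Z}\pi/4$) have nonzero logarithms yet generate a finite group.
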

This guarantees that the state $|\widetilde\Omega_\mathbf{z}\rangle$ is Haar-randomly distributed in the TDL up to a constant normalization, and also suggests that the Furstenberg measure of the transfer matrices of the DU KIM is the Haar measure. Next, we zoom in on the edge operator $W$ in Fig.~\ref{fig:du-kim-edge} and show that it is proportional to an isometry from $t$-qubit states to $(2N_A)$-qubit states when $t\ge 2N_A$. Here, the edge operator can be seen as an operator proportional to a unitary $\mathcal{V}$ on $t$ qubits, followed by projecting the middle $(t-2N_A)$ qubits to the $|+\rangle$ state. The prefactor can be obtained by counting within the orange trapezoid, where there are $N_A(t-N_A)$ red gates and $(N_A-1)(t-N_A)$ blue gates, and the overall edge operator can be seen as
\begin{equation}
\begin{aligned}
    W &= \left(\mathbbm{1}^{\otimes N_A}\otimes \langle +|^{\otimes (t - 2N_A)}\otimes \mathbbm{1}^{\otimes N_A}\right)\mathcal{V}\times \left( \sqrt{2}^{t-2N_A}\times \left(\frac{1}{\sqrt{2}}\right)^{N_A(t-N_A)} \times \sqrt{2}^{(N_A-1)(t - N_A)}\right)
    \\
    &= \frac{1}{\sqrt{2}^{N_A}} \left(\mathbbm{1}^{\otimes N_A}\otimes \langle +|^{\otimes (t - 2N_A)}\otimes \mathbbm{1}^{\otimes N_A}\right)\mathcal{V},
\end{aligned}
\end{equation}
where $\mathcal{V}$ is a unitary operator on $t$ qubits. Equipped with these properties, we are ready to prove Lemma~\ref{lemma:du-kim}.

\begin{figure}[!h]
    \centering
    \includegraphics[width=0.375\linewidth]{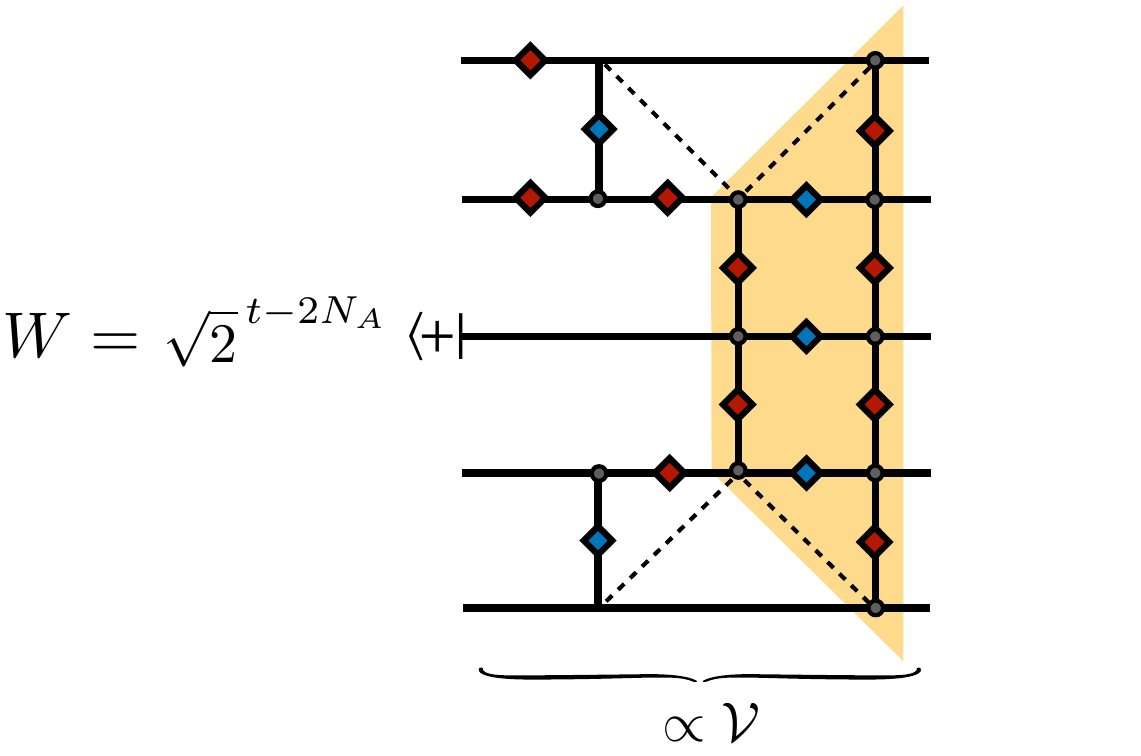}
    \caption{The edge operator $W$ with $t\ge 2N_A$ shown as an operator proportional to an isometry, here shown for $N_A=2$ and $t=5$. Interpreted from right to left, the gates within the orange trapezoid are proportional to a unitary, while the gates outside of the trapezoid are exactly unitary by themselves. Likewise, we have attached $|+\rangle$ states to the middle $(t-2N_A)$ qubits as well as their prefactors according to Eq.~\eqref{eq:attach_pm}.}
    \label{fig:du-kim-edge}
\end{figure}

\textit{Proof of Lemma~\ref{lemma:du-kim}.} Given the tensor network representation detailed in Fig.~\ref{fig:du-kim-structure}, and the structure of the edge operator $W$ when $t\ge 2N_A$ as illustrated in Fig.~\ref{fig:du-kim-edge}, a Kraus operator takes the form
\begin{equation}
\begin{aligned}
    |K_\mathbf{z}) = W|\widetilde\Omega_\mathbf{z}\rangle &= \frac{\sqrt{2}^t}{\sqrt{2}^{N_A + N_B}} \left(\mathbbm{1}^{\otimes N_A}\otimes \langle +|^{\otimes (t - 2N_A)}\otimes \mathbbm{1}^{\otimes N_A}\right)\underbrace{\mathcal{V}\mathcal{U}_\mathbf{z}|+\rangle^{\otimes t}}_{|\Psi_\mathbf{z}^{(t)}\rangle}
    \\
    &= \frac{\sqrt{2^t}}{\sqrt{d}}\langle+|^{\otimes (t-2N_A)}|\Psi_\mathbf{z}^{(t)}\rangle,
\end{aligned}
\end{equation}
where $|\Psi_\mathbf{z}^{(t)}\rangle$ is a normalized $t$-qubit state. The rescaled $k$-th moment of the Kraus ensemble is
\begin{equation}
\begin{aligned}
    d^k M_\mathrm{DU-KIM}^{(k)} &= \frac{d^k}{d_B}\sum_\mathbf{z}|K_\mathbf{z})(K_\mathbf{z}|^{\otimes k} = \frac{2^{kt}}{d_B}\sum_\mathbf{z}\left(\langle +|^{\otimes(t - 2N_A)}\left(|\Psi_\mathbf{z}^{(t)}\rangle\langle\Psi_\mathbf{z}^{(t)}|\right) |+\rangle^{\otimes(t-2N_A)}\right)^{\otimes k}
    \\
    &= 2^{kt}\left(\langle + |^{\otimes (t-2N_A)}\right)^{\otimes k}\left( \frac{1}{d_B}\sum_\mathbf{z}|\Psi_\mathbf{z}^{(t)}\rangle\langle\Psi_\mathbf{z}^{(t)}|^{\otimes k} \right)\left(|+\rangle^{\otimes (t-2N_A)}\right)^{\otimes k}.
\end{aligned}
\end{equation}
By Lemma~\ref{lemma:ho-exact}, the collection $\{U_\mathbf{z}\}$ forms unitary designs in the TDL, thus the state ensemble $\{|\Psi_\mathbf{z}^{(t)}\rangle\}$ also forms state designs, i.e., $\lim_{N_B\rightarrow\infty}\frac{1}{d_B}\sum_\mathbf{z}|\Psi_\mathbf{z}^{(t)}\rangle\langle\Psi_\mathbf{z}^{(t)}|^{\otimes k} = \rho_{H, 2^t}^{(k)}$. We thus have
\begin{equation}
\begin{aligned}
    \lim_{N_B\rightarrow\infty}d^k M_\mathrm{DU-KIM}^{(k)} &= 2^{kt}\left(\langle + |^{\otimes (t-2N_A)}\right)^{\otimes k}\rho_{H,2^t}^{(k)}\left(|+\rangle^{\otimes (t-2N_A)}\right)^{\otimes k}
    \\
    &=2^{kt}\left(\langle + |^{\otimes (t-2N_A)}\right)^{\otimes k}\frac{1}{(2^t)_k}\underbrace{\sum_{\sigma\in S_k} W_\sigma}_{\text{on }k\text{ copies of }t\text{ qubits}}\left(|+\rangle^{\otimes (t-2N_A)}\right)^{\otimes k}
    \\
    &= \frac{2^{kt}}{(2^t)_k}\underbrace{\sum_{\sigma\in S_k} W_\sigma}_{\text{on }k\text{ copies of }2N_A\text{ qubits}}=\frac{2^{kt}}{(2^t)_k}M_\mathrm{Gin}^{(k)},
\end{aligned}
\end{equation}
where $W_\sigma$ represents the permutation action on $k$-copied space. In the second line, we have used Eq.~\eqref{eq:rho_Haar_k} to rewrite $\rho^{(k)}_{H, 2^t}$ as a sum of permutations. In the third line, we observe that for any permutation $W_\sigma$ the inner product of $k$ copies of $|+\rangle^{\otimes (t-2N_A)}$ always equals $1$, and the leftover action on the $k$ copies of the remaining $2N_A$ qubits is simply $W_\sigma$ restricted to those copies of $2N_A$ qubits. We then use Eq.~\eqref{eq:M_Gin} to convert this to the $k$-th moment of the Ginibre ensemble. Hence, we arrive at the desired statement. \quad$\square$

Lemma~\ref{lemma:du-kim} is a mathematical statement of the distribution of the limiting Kraus ensemble in the TDL. Colloquially, the Kraus ensemble in the DU KIM can be seen as an unnormalized state ensemble of $t$-qubit Haar random states projected to $2N_A$-qubit Haar random states, which also inherits unitary invariance. Therefore, following the same idea as in~\cite{ho_exact_2022}, one can show that for any pure input state on $A$, the DU KIM produces exact state designs on the output of $A$ for $t\ge 2N_A$ in the TDL. This generalizes the previous result in~\cite{ho_exact_2022} that exact state designs emerge for $t\ge N_A$ for a fixed input $|+\rangle^{\otimes N_A}$ on $A$.

With Lemma~\ref{lemma:du-kim} proved, proving Theorem~\ref{thm:2-precise} is then straightforward.

\textit{Proof of Theorem~\ref{thm:2-precise}.} By definition, the distance of the Kraus ensemble of DU KIM to Ginibre is
\begin{equation}
    \Delta_\mathrm{Gin}^{(k)} = \frac{\left\| d^k M_\mathrm{DU-KIM}^{(k)} - M_\mathrm{Gin}^{(k)}\right\|_2}{\left\| M_\mathrm{Gin}^{(k)}\right\|_2},
\end{equation}
and by $\lim_{N_B\rightarrow\infty}d^k M_\mathrm{DU-KIM}^{(k)}=\frac{2^{kt}}{(2^t)_k}M_\mathrm{Gin}^{(k)}$, we now have
\begin{equation}
    \lim_{N_B\rightarrow\infty} \Delta_\mathrm{Gin}^{(k)} =  \left| \frac{2^{kt}}{(2^t)_k} - 1\right| = 1 - \frac{2^{kt}}{(2^t)_k}.
\end{equation}
The rising factorial has asymptotic expansion
\begin{equation}
    (x)_k = x^k\left( 1 + \frac{k(k-1)}{2}\frac{1}{x} + \cdots \right),
\end{equation}
which allows us to conclude that
\begin{equation}
    \lim_{N_B\rightarrow\infty}\Delta_\mathrm{Gin}^{(k)} = \mathcal{O}\left(\frac{k^2}{2^t}\right),
\end{equation}
for large $t$. This concludes the proof.\quad$\blacksquare$

\section{Functional dependence on $\sigma^2\propto1/\gamma^t$}
\label{app:shrink}

In this section, we lay out the rationale of the inverse exponential scaling of $t$ appearing in $\sigma^2 \propto 1/\gamma^t$ in the log-normal part of our Ansatz. This originates from the non-unitary dynamics of the circuit bulk when viewed sideways, unlike that for a DU circuit. %
To study the $t$-dependence of the log-normal structure, we consider the average growth/shrink of a random state vector passing through a single-step transfer matrix.

\begin{figure}[!h]
    \centering
    \includegraphics[width=0.3\linewidth]{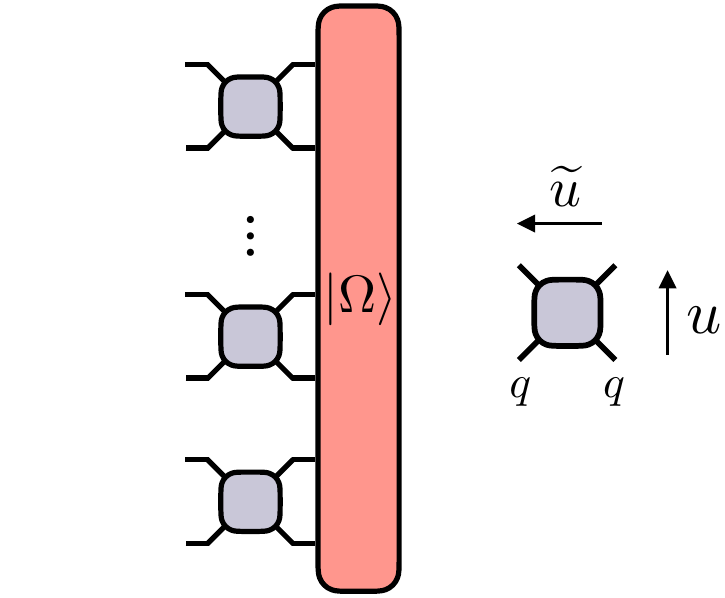}
    \caption{Toy case of a Haar random $t$-qudit state $|\Omega\rangle$ passing through a single-step transfer matrix made of disjoint dual gates $\widetilde u$. Here, a $2$-qudit gate is denoted $u$ in the time direction, and its spatial counterpart, the dual gate, is denoted $\widetilde u$ and depicted as above. After the action of dual gates, the state $|\widetilde\Omega\rangle$ is no longer normalized, and we show that its mean log norm follows a shrink $\mathrm{log}(\|\widetilde\Omega\|)\sim-{1}/{\gamma^t}$ inverse exponential in $t$. }
    \label{fig:single-step}
\end{figure}

We start with a toy example of qudits (local dimension $q$) where we assume: (i) the input state is a random $t$-qudit state drawn from the Haar measure, and (ii) the transfer matrix consists of an array of $2$-qudit Haar random gates applied sideways to pairs of qudits, illustrated in Fig.~\ref{fig:single-step}. For simplicity, let us assume $t$ is even, then there are $\frac{t}{2}$ gates in total. Let $|\Omega\rangle$ be the $t$-qudit Haar random state. The unnormalized state after the action is
\begin{equation}
    |\widetilde\Omega\rangle = \left(\bigotimes_{\tau\text{ odd}}\widetilde u_{\tau, \tau+1}\right)|\Omega\rangle,
\end{equation}
where $\widetilde{u}$ represents a dual gate, defined as in Fig.~\ref{fig:single-step}. Let us denote
\begin{equation}
    \langle\widetilde\Omega|\widetilde\Omega\rangle = v_0 (1+\nu),
\end{equation}
where we set $v_0 = \mathbb{E}[\langle\widetilde{\Omega}|\widetilde{\Omega}\rangle]$ as the mean of the squared norm, so $\mathbb{E}[\nu] = 0$ by definition. By Weingarten calculus,
\begin{equation}
    v_0 = \mathbb{E}\left[\langle\widetilde\Omega|\widetilde\Omega\rangle\right] = 1, \quad\mathbb{E}\left[\langle\widetilde\Omega|\widetilde\Omega\rangle^2\right] = \frac{q^t}{q^t+1} + \frac{\left(\frac{2q^2}{q^2+1}\right)^{t/2}}{q^t+1},
\end{equation}
which suggests that the average norm squared remains the same, but we will argue that when we take the average of the logarithm of the norm, there is a nonzero shrink per step. To see this, we note that
\begin{equation}
    \mathbb{E}[\nu^2] = \mathbb{E}\left[\langle\widetilde\Omega|\widetilde\Omega\rangle^2\right] - 1 = \frac{\left(\frac{2q^2}{q^2+1}\right)^{t/2}-1}{q^t + 1}.
\end{equation}
The averaged log norm follows from
\begin{equation}
\begin{aligned}
    \mathbb{E}\left[\log(\||\widetilde\Omega\rangle \|)\right] &= \frac{1}{2}\log(v_0) + \frac{1}{2}\mathbb{E}\left[\log(1+\nu)\right]
    \\
    &\approx -\frac{1}{4}\mathbb{E}[\nu^2],
\end{aligned}
\end{equation}
where we have used the expansion $\log(1+\nu)=\nu - \frac{\nu^2}{2}+\cdots$ around $\nu=0$ and neglected the higher order terms. For asymptotically large $t$, this suggests
\begin{equation}
    \mathbb{E}\left[\log(\||\widetilde\Omega\rangle \|)\right] \sim-\frac{1}{4}\frac{1}{\left(\sqrt{\frac{q^2+1}{2}}\right)^{t}},
\end{equation}
hence an average shrink inverse exponential in $t$.

For our setting of generic 1D circuits, the two assumptions in our toy example would fail: (i) generically the relevant stationary distribution of the $t$-qudit state is the Furstenberg measure, which is usually not the Haar measure, and (ii) the transfer matrix generically takes the form of a matrix product operator (MPO), which is not made up of disjoint non-unitary local gates. Nevertheless, we expect the qualitative features of our toy example to persist. Let $V$ be a single-step random transfer matrix, and $|\Omega_F\rangle$ be a normalized random vector sampled from the Furstenberg measure of $V$. Then we expect
\begin{equation}
\label{eq:shrink}
    \mathbb{E}\left[\log( \left\| V|\Omega_F\rangle\right\|)\right] = \frac{1}{2}\log(v_0) - \frac{\lambda_\mathrm{eff}}{\gamma_\mathrm{eff}^t},
\end{equation}
where $v_0 = \mathbb{E}\left[\langle \Omega_F|V^\dagger V|\Omega_F\rangle\right]$, and $\lambda_\mathrm{eff}$ and $\gamma_\mathrm{eff}$ are model-dependent constants.

\begin{figure}[!h]
    \centering
    \includegraphics[width=0.7\linewidth]{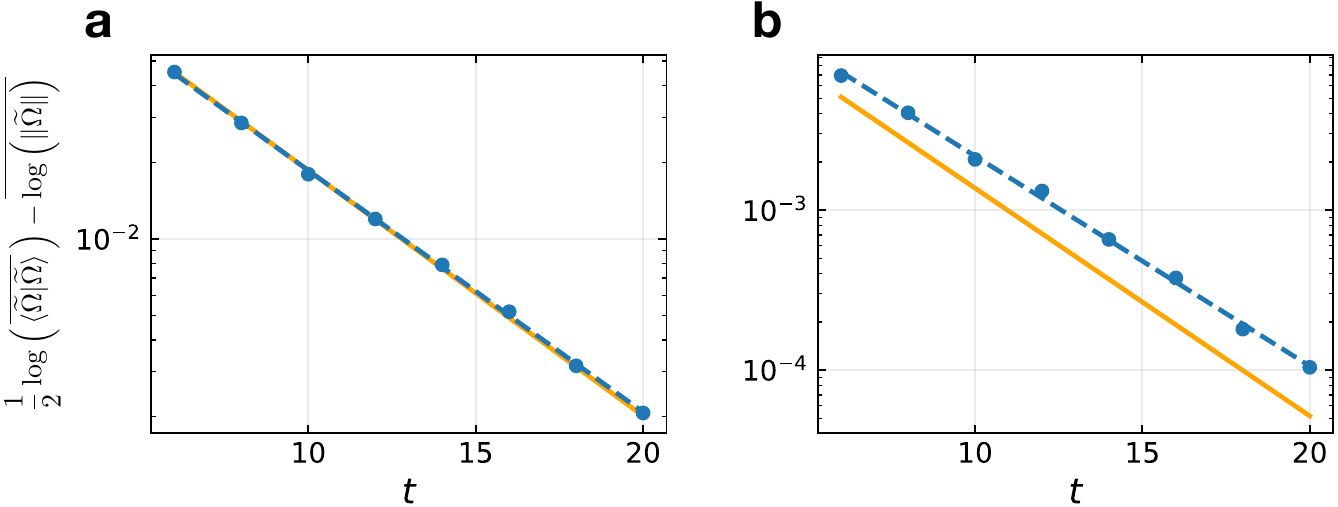}
    \caption{Numerics for exponentially decaying shrink rate per single-step transfer matrix. (a) RUC with transfer matrix defined as in Fig.~\ref{fig:ruc}, note that a single-step transfer matrix has width $\Delta N_B = 2$ qubits in the bulk. The blue dots are evaluated numerically from $1000$ samples and fitted with the blue dashed line $\lambda_\mathrm{eff}^{\mathrm{RUC}} / (\gamma_\mathrm{eff}^{\mathrm{RUC}})^t$, for which we obtain numerically $\lambda_\mathrm{eff}^\mathrm{RUC}\approx 0.166$ and $\gamma_\mathrm{eff}^\mathrm{RUC}\approx 1.245$. The orange line depicts $2\lambda^\mathrm{RUC} / (\gamma^{\mathrm{RUC}})^t$, with $\lambda^\mathrm{RUC}\approx0.0858$ and $\gamma^{\mathrm{RUC}}\approx 1.249$ extracted from the projected Kraus ensemble to fit the Ansatz (see Section~\ref{sec:numerics}). The factor of $2$ accounts for $\Delta N_B = 2$ in a single-step transfer matrix, and the orange line matches well with the blue dashed line. (b) Off-DU KIM with transfer matrix defined as in Fig.~\ref{fig:kim}. The blue dots are evaluated numerically from $1000$ samples and fitted with the blue dashed line $\lambda_\mathrm{eff}^{\mathrm{KIM}} / (\gamma_\mathrm{eff}^{\mathrm{KIM}})^t$, for which we obtain numerically $\lambda_\mathrm{eff}^{\mathrm{KIM}}\approx 0.0447$ and $\gamma_\mathrm{eff}^\mathrm{KIM}\approx 1.353$. The orange line depicts $\lambda^\mathrm{KIM} / (\gamma^\mathrm{KIM})^t$, with $\lambda^\mathrm{KIM}\approx 0.0364$ and $\gamma^\mathrm{KIM}\approx 1.388$ extracted from the projected Kraus ensemble to fit the Ansatz (see Appendix~\ref{app:extra-numerics}). The blue dashed line and the orange line match only qualitatively, and we attribute their difference to the action of the edge operator $W$ of the KIM. Hence, the constants $\lambda^\mathrm{KIM}$ and $\gamma^\mathrm{KIM}$ are dressed versions of the effective constants $\lambda^\mathrm{KIM}_\mathrm{eff}$ and $\gamma^\mathrm{KIM}_\mathrm{eff}$.}
    \label{fig:shrink-numerics}
\end{figure}

We demonstrate this scaling numerically for the brickwork RUC and off-DU KIM in Fig.~\ref{fig:shrink-numerics} panel (a) and (b), respectively. Specifically, we sample $\|\widetilde\Omega\| = \left\| V|\Omega_F\rangle\right\|$ with a Markov chain Monte Carlo (MCMC) procedure: for each $t$, we start with a Haar random $t$-qubit state and evolve under the random transfer matrices for a burn-in stage of $500$ steps. Next, we record the shrink of the vector in a single step per every $10$ steps of the random evolution, collecting a total of $1000$ such samples. We then collate the numerical average of:
\begin{equation}
    \frac{1}{2}\log\left(\,\overline{\langle\widetilde\Omega|\widetilde\Omega\rangle}\,\right) - \overline{\log\left( \|\widetilde\Omega\| \right)},
\end{equation}
over these $1000$ samples and plot against $t=6,8,10,12,14,16,20$. The plots in Fig.~\ref{fig:shrink-numerics} agree with the exponential decay in the average shrink ${\lambda_\mathrm{eff}}/{\gamma_\mathrm{eff}^t}$ as in Eq.~\eqref{eq:shrink}.

Indeed, this prediction is matched numerically with the physical models tested with our Ansatz. For the brickwork RUC and KIM off from DU point, the average shrink of a random Furstenberg state follows the decay pattern $-\frac{\lambda }{\gamma^t}$, with $\lambda$ and $\gamma$ derived numerically for the projected Kraus ensemble.

\section{More details on Ansatz}
\label{app:ansatz-more-details}
In this section, we discuss the details of the Ansatz regarding the fluctuations of the norm. First, we elaborate on the role of $\sigma^2$ when the Kraus ensemble is probed by the norm of the Kraus only. Next, we argue that the Ansatz is self-consistent even in the $\sigma^2\rightarrow\infty$ limit. We then address the limiting form of the projected ensemble implied by the Ansatz of Kraus operators.

\paragraph{Norm of Kraus} As discussed in the main text, for a random matrix following the Ansatz $\widetilde K\sim\mathbf{GinUE}\times\mathrm{LogNormal}_{\sigma^2}$, the logarithm of its norm $R\equiv \log(\| \widetilde K\|_2)$ follows
\begin{equation}
    R \overset{d}{=} \underbrace{\frac{1}{2}\log(X)}_{S} + \underbrace{\log(Y)}_{T},
\end{equation}
where $X\sim\Gamma(d_A^2,1)$ and $\log(Y)\sim\mathcal{N}(-\sigma^2,\sigma^2)$ are independent random scalars of an Erlang distribution and a normal distribution, respectively. Here, we have the statistics
\begin{equation}
    p_S(s) = \frac{2}{\Gamma(d_A^2)}\exp(2d_A^2 \, s - e^{2s}),
\end{equation}
and
\begin{equation}
    \mathbb{E}[S] = \frac{1}{2}\psi(d_A^2),\quad\mathrm{Var}[S] = \frac{1}{4}\psi_1(d_A^2),
\end{equation}
where $\psi$ and $\psi_1$ denote the digamma and trigamma functions, respectively. Likewise, 
\begin{equation}
    p_T(t) = \frac{1}{\sqrt{2\pi\sigma^2}}\exp(- \frac{(t+\sigma^2)^2}{2\sigma^2}),
\end{equation}
with
\begin{equation}
    \mathbb{E}[t] = -\sigma^2,\quad\mathrm{Var}[t] = \sigma^2.
\end{equation}
Therefore, by convolution, the PDF of $R$ satisfies
\begin{equation}
\label{eq:convolution}
\begin{aligned}
    p_R(r) &= (p_S * p_T)(r) \equiv \int_{-\infty}^{\infty}\mathrm{d}s\, p_S(s) p_T(r-s)
    \\
    &= \frac{2}{\Gamma(d_A^2)\sqrt{2\pi\sigma^2}}\int_{-\infty}^{\infty}\mathrm{d}s\,\exp(2d_A^2\,s - e^{2s} - \frac{(r-s+\sigma^2)^2}{2\sigma^2}),
\end{aligned}
\end{equation}
which is nothing but averaging a Gaussian kernel over the underlying $S$-distribution. And the statistics of $R$ follows
\begin{equation}
    \mathbb{E}[R] = \frac{1}{2}\psi(d_A^2) - \sigma^2,\quad \mathrm{Var}[R] = \frac{1}{4}\psi_1(d_A^2) + \sigma^2.
\end{equation}
Given the PDF of $R$ parametrized by $\sigma^2$, we find the best fit $\hat\sigma^2$ for a given empirical ensemble by maximum likelihood estimation (MLE), namely for an empirical dataset $\left\{ R_\mathbf{z}=\log(\sqrt{d}\| K_\mathbf{z} \|_2) \right\}$, we find $\hat\sigma^2_\mathrm{MLE}$ by
\begin{equation}
    \hat\sigma^2_\mathrm{MLE} = \arg\max_{\sigma^2\ge 0}\sum_\mathbf{z}\log(p_{\sigma^2}(R_\mathbf{z})),
\end{equation}
where $p_{\sigma^2}(r)$ is the same PDF $p_R(r)$ as in Eq.~\eqref{eq:convolution} with the $\sigma^2$ dependence explicit.

\paragraph{Consistency check in the $\sigma^2\rightarrow\infty$ limit} In the shallow circuit regime with circuit depth $t$ scaling slower than $t = \log_\gamma(N_B) + c$, where $c$ is an arbitrary real constant, the Ansatz predicts a diverging ensemble $\mathbf{GinUE}\times\mathrm{LogNormal}_{\sigma^2}$ with $\sigma^2\rightarrow\infty$. Here, we argue that the Ansatz is still reasonable while the moment-by-moment comparison fails for large $k$. 

By design of the Ansatz, the normalization condition is automatically satisfied by
\begin{equation}
    \sum_\mathbf{z}K_\mathbf{z}^\dagger K_\mathbf{z} \leftrightarrow d_B\mathbb{E}\left[K^\dagger K\right]=\mathbbm{1}_A,
\end{equation}
as discussed in Appendix~\ref{app:log-normal-introduction}. Here, $\sigma^2$ does not affect the expectation value of the normalization condition, but rather the fluctuations of such condition, namely
\begin{equation}
\begin{aligned}
    \left\|\mathbbm{1}_A - \sum_\mathbf{z}K_\mathbf{z}^\dagger K_\mathbf{z} \right\|_2^2 &= \mathrm{Tr}\left(\mathbbm{1}_A\right) - 2\,\mathrm{Tr}\left( \sum_\mathbf{z}K_\mathbf{z}^\dagger K_\mathbf{z} \right) + \mathrm{Tr}\left(\sum_{\mathbf{z},\mathbf{z}'} K_\mathbf{z}^\dagger K_\mathbf{z} K_{\mathbf{z}'}^\dagger K_{\mathbf{z}'} \right)
    \\
    &\leftrightarrow \mathrm{Tr}\left(\mathbbm{1}_A\right) - 2\,\mathrm{Tr}\left(d_B\mathbb{E}\left[K^\dagger K\right] \right) +  d_B(d_B-1)\mathrm{Tr}\left(\mathbb{E}\left[K^\dagger K\right]\mathbb{E}\left[K^\dagger K\right]\right) + d_B\mathrm{Tr}\left(\mathbb{E}\left[K^\dagger K K^\dagger K\right]\right),
    \\
    &= \frac{d_A}{d_B}\left[ 2\exp(4\sigma^2) - 1 \right],
\end{aligned}
\end{equation}
where we have used properties of Wick calculus and the log-normal distribution introduced in Appendix~\ref{app:review}. Specializing to qubits, if we substitute the parametric dependence $\sigma^2 = \frac{\lambda N_B}{\gamma^t}$ and $d_B = 2^{N_B}$, this fluctuation term can be written as
\begin{equation}
    \frac{d_A}{2^{N_B}}\left[2\exp(\frac{4\lambda N_B}{\gamma^t}) - 1\right] \sim d_A\exp(\left(\frac{4\lambda}{\gamma^t} - \log(2)\right)N_B),\quad\text{as }N_B\rightarrow\infty.
\end{equation}
To attain self-consistency, we require that this fluctuation goes to zero as $N_B\rightarrow\infty$, ensuring concentration to the normalization condition. This requires $4\lambda/\gamma^t < \log(2)$, i.e. a condition on the circuit depth
\begin{equation}
    t > \frac{\log(\frac{4\lambda}{\log(2)})}{\log(\gamma)}=: t_\mathrm{sc},
\end{equation}
which is a constant depth $t_\mathrm{sc}$. Hence, our Ansatz is self-consistent in terms of the normalization condition for diverging $\sigma^2 = \frac{\lambda N_B}{\gamma^t}\rightarrow\infty$ past a constant circuit depth. Numerically, for the brickwork RUC and off-DU KIM, both self-consistent depths are negative, suggesting self-consistency condition at any positive circuit depth.

\paragraph{Implications on the projected ensemble of states} As discussed in Section~\ref{sec:consequence-ansatz-deep-thermalization}, by the limiting assumption, the projected ensemble has limiting $k$-th moment
\begin{equation}
    \rho^{(k)} \approx d_B\underset{\widetilde\psi}{\mathbb{E}}\left[ \frac{|\widetilde\psi\rangle\langle\widetilde\psi|^{\otimes k}}{\langle\widetilde\psi|\widetilde\psi\rangle^{k-1}}\right],
\end{equation}
where $\sqrt{d}\widetilde\psi\sim\mathbf{Gauss}_\mathrm{d_A}\times\mathrm{LogNormal}_{\sigma^2}$. By left unitary invariance of the Gaussian distribution, we have
\begin{equation}
\begin{aligned}
    \underset{\widetilde\psi}{\mathbb{E}}\left[ \frac{|\widetilde\psi\rangle\langle\widetilde\psi|^{\otimes k}}{\langle\widetilde\psi|\widetilde\psi\rangle^{k-1}}\right] &= \underset{\widetilde\psi}{\mathbb{E}}\left[ \frac{U^{\otimes k}|\widetilde\psi\rangle\langle\widetilde\psi|^{\otimes k}U^{\dagger \,\otimes k}}{\langle\widetilde\psi|\widetilde\psi\rangle^{k-1}}\right]
    \\
    &= \underset{U\sim\mathrm{Haar}}{\mathbb{E}} \underset{\widetilde\psi}{\mathbb{E}}\left[ \frac{U^{\otimes k}|\widetilde\psi\rangle\langle\widetilde\psi|^{\otimes k}U^{\dagger \,\otimes k}}{\langle\widetilde\psi|\widetilde\psi\rangle^{k-1}}\right]
    \\
    &= \underset{\widetilde\psi}{\mathbb{E}}\left[ \frac{\underset{U\sim\mathrm{Haar}}{\mathbb{E}}\left[U^{\otimes k}|\widetilde\psi\rangle\langle\widetilde\psi|^{\otimes k}U^{\dagger \,\otimes k}\right]}{\langle\widetilde\psi|\widetilde\psi\rangle^{k-1}}\right] \propto \rho_H^{(k)},
\end{aligned}
\end{equation}
where in the first line, $U$ is an arbitrary unitary introduced at no cost because of the left unitary invariance of $\widetilde\psi$; in the second line, we perform an average of $U$ over the Haar measure, which does not affect the RHS; in the third line, we exchange the order of the expectations, and the integral over the Haar measure gives rise to a tensor proportional to $\rho_H^{(k)}$. Then we check the trace of the limiting $k$-th moment
\begin{equation}
    \mathrm{Tr}\left(d_B \underset{\widetilde\psi}{\mathbb{E}}\left[ \frac{|\widetilde\psi\rangle\langle\widetilde\psi|^{\otimes k}}{\langle\widetilde\psi|\widetilde\psi\rangle^{k-1}}\right] \right) = d_B \underset{\widetilde\psi}{\mathbb{E}}\left[ \langle\widetilde\psi|\widetilde\psi\rangle \right] = d_B \times \frac{d_A}{d} = 1,
\end{equation}
which does not depend on $\sigma^2$ in the Ansatz. Therefore,
\begin{equation}
    \rho^{(k)}\approx \rho_H^{(k)}.
\end{equation}

\section{Additional numerics}
\label{app:extra-numerics}

In this section, we present the numerical evidence for Conjecture~\ref{conj:1-precise} in the case of a kicked Ising model (KIM) off from the dual-unitary (DU) point. We also provide a probe of the Ginibre part of the Ansatz by studying the spectrum of fixed-trace Kraus ensemble, which corresponds to the fixed-trace Wishart--Laguerre ensemble.

\begin{figure}[!h]
    \centering
    \includegraphics[width=0.55\linewidth]{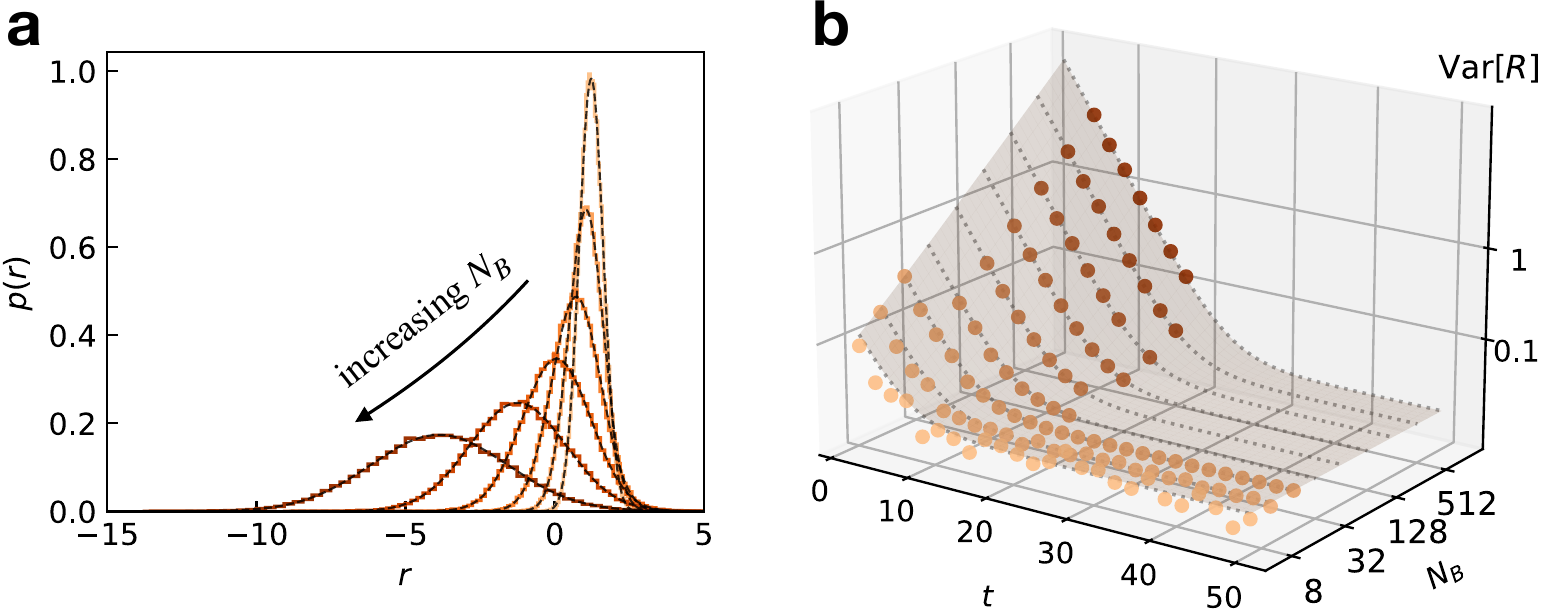}
    \caption{Fitting data shown for KIM. (a) Histogram of rescaled log norms $R_\mathbf{z}=\log(\sqrt{d}\|K_\mathbf{z}\|_2)$ for fixed $N_A=2$, $t = 6$, and $N_B = 32,64,128,256,512,1024$ (light-dark). %
    The dashed curves represent theoretical distributions $p_{\hat\sigma^2}(r)$ demonstrating excellent fits. (b) Variance of $\{R_\mathbf{z}\}$ in log scale, with $N_B = 6, 10, 18, 32, 64, 128, 256, 512, 1024$ (light-dark). The light brown surface outlines theoretical prediction $\mathrm{Var}[R]=\sigma^2_\mathrm{th}(N_B, t) + \frac{1}{4}\psi_1(2^{2N_A})$, where $\sigma^2_\mathrm{th} = \lambda N_B / \gamma^t$ is fitted from the best fits $\hat\sigma^2(N_B, t)$. Here for KIM, $\lambda\approx0.0364$ and $\gamma\approx 1.388$. The dashed curves represent slices of that surface with fixed $N_B$ in accordance with the plotted data.}
    \label{fig:sigma2_fit_KIM}
\end{figure}

First, we extract the best fits $\hat\sigma^2$ for the KIM by probing the norm of the Kraus operators, shown in Fig.~\ref{fig:sigma2_fit_KIM}. Here, we set $J = 1.000$, $h=0.4488$ away from DU point, and $g_i \sim \mathrm{Uniform}\left[
\frac{\pi}{7}-\frac{\pi}{12},
\frac{\pi}{7}+\frac{\pi}{12}
\right]$ independently for each site.

Next, we plot the running distance and the purity errors for the KIM, shown in Fig.~\ref{fig:dist-purity_KIM}. Numerics for KIM show the same qualitative behavior as that of RUC discussed in Section~\ref{sec:numerics}, providing additional evidence for Conjecture~\ref{conj:1-precise}.

\begin{figure}[!h]
    \centering
    \includegraphics[width=0.85\linewidth]{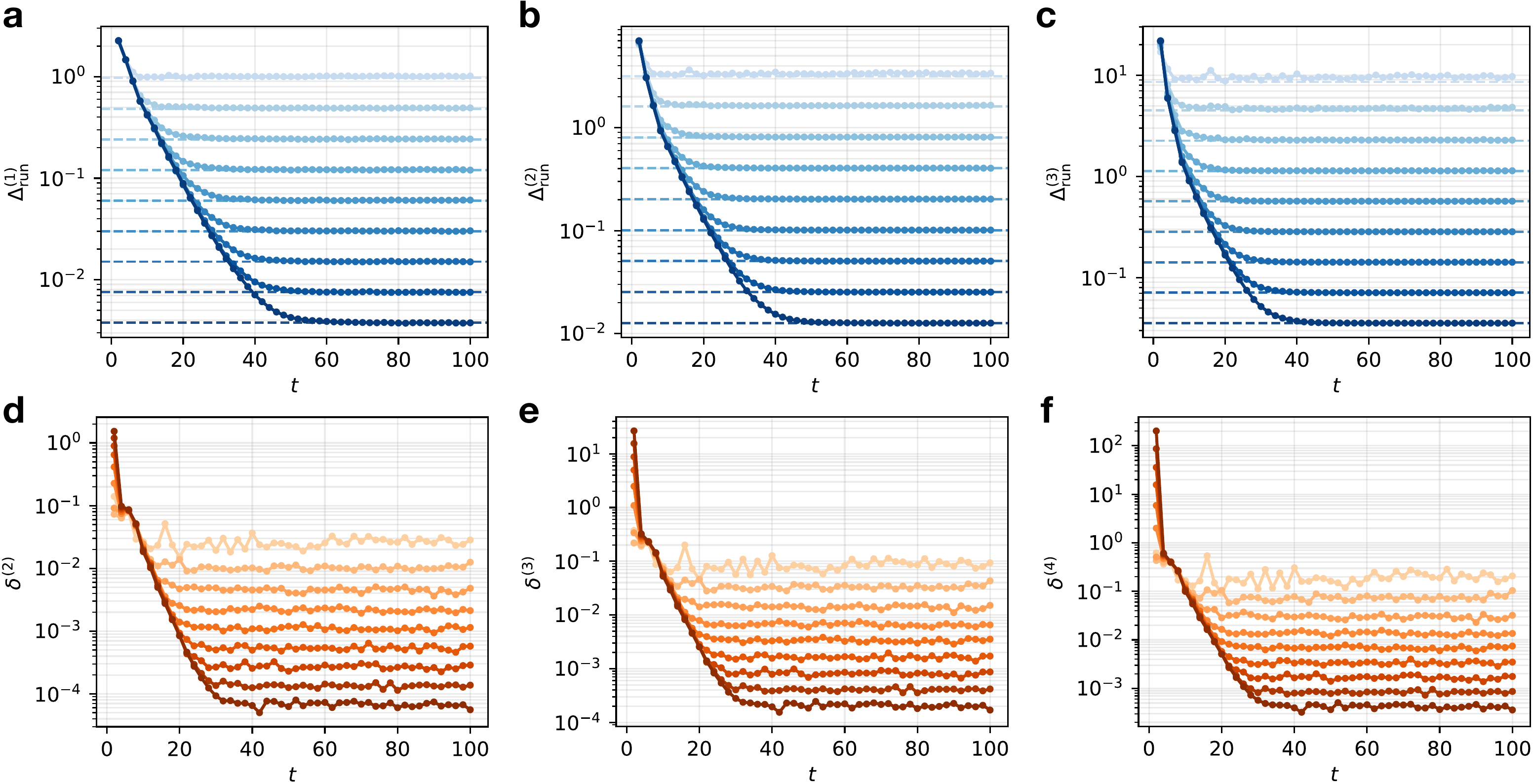}
    \caption{Numerical tests for KIM. Here we fix $N_A=2$ and show $N_B=4,6,8,10,12,14,16,18,20$ (light-dark). (a-c) Running distance $\Delta^{(k)}_\mathrm{run}$ shown for $k=1,2,3$. Each data point is averaged over 100 circuit realizations. These curves collapse in early times and follow an envelope showing a multi-rate exponential decay in $t$ before the curves plateau at a turning time $t_c$ scaling linearly in $N_B$. We attribute the multi-rate decay to early-time transient of local dynamics, and assume that the envelope crosses over to a simple exponential decay in $t$ at later times. The dashed lines represent the averaged distance $\Delta_\mathrm{Gin}^{(k)}$ for global Haar random unitary (also over $100$ realizations) with corresponding bath size $N_B$.
    (d-f) Purity relative error  $\delta^{(k)}$ compared to theoretical value, shown for $k=2,3,4$ (as $k=1$ is trivial). %
    Each data point is averaged over the same 100 circuit realizations. The envelope of the curves also follows exponential decay in $t$ at sufficiently large $t$.}
    \label{fig:dist-purity_KIM}
\end{figure}

Apart from the above numerics, we provide an additional test for the spectral properties of the matrix part of the Ansatz. To this end, we consider the norm-rescaled Kraus operators $K / \sqrt{\mathrm{Tr}(K^\dagger K)}$ and study the eigenvalue spectrum of $K^\dagger K / \mathrm{Tr}(K^\dagger K)$, the trace-normalized POVM element, %
which is agnostic of the norm of $K$. Correspondingly, we compare it with the fixed-trace Wishart--Laguerre ensemble, the relevant RMT baseline.

The generic Wishart--Laguerre ensemble consists of $N\times N$ matrices of the form~\cite{livan_moments_2011}
\begin{equation}
    W = G^\dagger G,
\end{equation}
where $G$ is a $M\times N$ Ginibre matrix (allowing real, complex, or quaternionic elements) with $M - N = \nu \ge 0$. Its eigenvalues satisfy the joint probability density function (JPDF)
\begin{equation}
    \mathcal{P}_\beta(x_1,\cdots,x_N)
    =
    C
    \prod_{j<k}|x_j-x_k|^\beta
    \prod_j
    x_j^{\frac{\beta}{2}(\nu+1)-1}
    e^{-\frac{1}{2}x_j},
    \qquad x_j\geq 0,
\end{equation}
where index $\beta = 1,2,4$ denotes the symmetry class of the ensemble, i.e., orthogonal, unitary, or symplectic. We are interested in square complex Ginibre matrices, so $\beta = 2$, $\nu = 0$ and the JPDF becomes
\begin{equation}
    \mathcal{P}(x_1,\cdots,x_N)
    =
    C
    \prod_{j<k}(x_j-x_k)^2
    \prod_j e^{-\frac{1}{2}x_j},
    \qquad x_j\geq 0 .
\end{equation}
Consider a change of variable with $x = \sum_j x_j$ and $\lambda_j = x_j/x$, the Vandermonde factor transforms as
\begin{equation}
\prod_{j<k}(x_j-x_k)^2
=
x^{N(N-1)}
\prod_{j<k}(\lambda_j-\lambda_k)^2 ,
\end{equation}
while the Jacobian contributes a factor $x^{N-1}$. Therefore,
\begin{equation}
\mathcal{P}(x,\lambda_1,\cdots,\lambda_N)
=
C
x^{N^2-1}
e^{-\frac{1}{2}x}
\prod_{j<k}(\lambda_j-\lambda_k)^2
\,\delta\!\left(1-\sum_{j=1}^N\lambda_j\right),
\end{equation}
with $x\geq 0$ and $\lambda_j\geq 0$. This allows us to study the spectrum of the fixed-trace Wishart--Laguerre ensemble, represented with $W / \mathrm{Tr}(W)$. The normalized eigenvalues $\{\lambda_j\}$ live on the simplex $\sum_j\lambda_j=1$, with the unordered JPDF~\cite{nechita_asymptotics_2007}
\begin{equation}
    \mathcal{P}(\vec\lambda) \propto \prod_{j<k} (\lambda_j - \lambda_k)^2,\quad \lambda_j\ge 0,~ \sum_j \lambda_j = 1.
\end{equation}
The one-point marginal is thus
\begin{equation}
\rho_N(\lambda)
:=
\int_{\lambda_2,\cdots,\lambda_N\geq 0,\, \sum_j\lambda_j=1}
\mathcal{P}(\lambda,\lambda_2,\cdots,\lambda_N)
\mathrm{d}\lambda_2\cdots\mathrm{d}\lambda_N,
\end{equation}
with the normalization $\int\rho_N(\lambda)\mathrm{d}\lambda=1$. Specializing to $N=4=d_A$ in our case ($N_A=2$ qubits), we have
\begin{equation}
\rho_4(\lambda)
=
15(1-\lambda)^8
\left(
73116\lambda^6
-94128\lambda^5
+44934\lambda^4
-9904\lambda^3
+1044\lambda^2
-48\lambda
+1
\right),
\quad 0\leq \lambda\leq 1 .
\end{equation}

Given our Ansatz for the Kraus operators $\sqrt{d}K\sim\mathbf{GinUE}\times\mathrm{LogNormal}_{\sigma^2}$, the spectrum of a trace-normalized POVM, i.e. $K^\dagger K / \mathrm{Tr}(K^\dagger K)$, should follow the fixed-trace Wishart--Laguerre prediction at sufficiently large $t$ and $N_B$. This is demonstrated numerically in Fig.~\ref{fig:wishart-laguerre}, providing additional evidence for the validity of the Ansatz.

\begin{figure}[!h]
    \centering
    \includegraphics[width=0.975\linewidth]{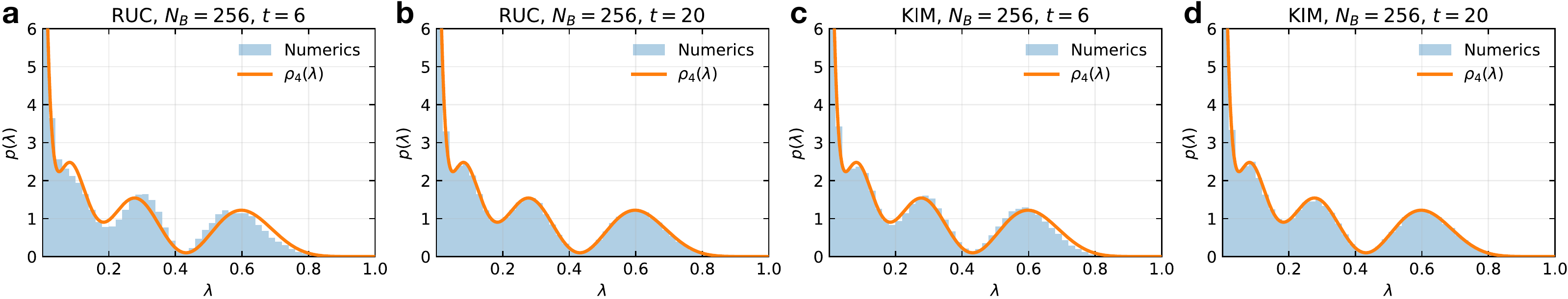}
    \caption{Histogram of eigenvalue density of $K_\mathbf{z}^\dagger K_\mathbf{z}/\mathrm{Tr}(K_\mathbf{z}^\dagger K_\mathbf{z})$ over $2^{16}$ samples from the Kraus ensemble $\{K_\mathbf{z}\}$, compared against the theoretical density for the fixed-trace Wishart--Laguerre $\rho_4(\lambda)$. Here, $N_A=2$ and $N_B=256$. (a, b) show numerics for RUC at $t=6$ and $t=20$, respectively. (c, d) show numerics for off-DU KIM at $t=6$ and $t=20$, respectively. For both models, the spectral density shows deviation from the theoretical curve at early time ($t=6$), while agreeing better with the theoretical curve at a later time ($t=20$).}
    \label{fig:wishart-laguerre}
\end{figure}

\section{Recoverability with side information}
\label{app:quantum-information-recovery}

Here, we provide the detailed analysis of local quantum information recoverability with: (a) classical side information, where we calculate the coherent information by utilizing the universal Ansatz for Kraus, and discuss the recovery protocol in detail; (b) quantum side information, corresponding to the Hayden--Preskill protocol~\cite{hayden_mirrors_2007}.

\paragraph{Classical side information} In Section~\ref{sec:consequence-ansatz-qi-recovery}, we have established that the coherent information with classical side information is
\begin{equation}
    I_\mathrm{coh}(R\rangle A C) = \sum_\mathbf{z} p(\mathbf{z})S(\rho_\mathbf{z}^R), %
\end{equation}
where we have defined the (unnormalized) conditional Choi state $|\widetilde{\Psi}_\mathbf{z}^{RA}\rangle = \frac{1}{\sqrt{d_R}}\sum_\alpha|\alpha\rangle_R\otimes K_\mathbf{z}|\alpha\rangle_A$ with probability $p(\mathbf{z})=\langle\widetilde{\Psi}_\mathbf{z}^{RA}|\widetilde{\Psi}_\mathbf{z}^{RA}\rangle$. This can be written explicitly as
\begin{equation}
\label{eq:recovery-z-prob}
    p(\mathbf{z}) = \frac{1}{d_R}\sum_{\alpha=1}^{d_R} \langle \alpha|K_\mathbf{z}^\dagger K_\mathbf{z}|\alpha\rangle_A = \mathrm{Tr}\left( \pi_R K_\mathbf{z}^\dagger K_\mathbf{z}\right),
\end{equation}
where $\pi_R \equiv \frac{1}{d_R}\sum_\alpha |\alpha\rangle\langle\alpha|_A$ is the maximally mixed state supported on the code space. For convenience, we denote $\mathbb{P}^{(R)}:= \sum_\alpha |\alpha\rangle\langle\alpha|_A$ as the projector onto the code space, thus $\pi_R = \mathbb{P}^{(R)} / d_R$. Then $\rho_\mathbf{z}^{R} = \mathrm{Tr}_A|\Psi_{\mathbf{z}}^{RA}\rangle\langle\Psi_{\mathbf{z}}^{RA}|$ is the reduced density matrix on $R$, where $|\Psi_\mathbf{z}^{RA}\rangle\equiv |\widetilde{\Psi}_\mathbf{z}^{RA}\rangle/\sqrt{p(\mathbf{z})}$ is the normalized conditional Choi state. The entropy $S(\rho_\mathbf{z}^R)$ is then the von Neumann entanglement entropy between $R$ and $A$ in $|\Psi_\mathbf{z}^{RA}\rangle$.

Substituting our Ansatz for $\{K_\mathbf{z}\}$, we have
\begin{equation}
    \sqrt{d}|\widetilde\Psi_\mathbf{z}^{RA}\rangle\sim \frac{1}{\sqrt{d_R}}\mathbf{Gauss}_{d_R d_A} \times\mathrm{LogNormal}_{\sigma^2},
\end{equation}
where $d=d_Ad_B$ and $\mathbf{Gauss}_{d_Rd_A}$ is a $(d_Rd_A)$-dimensional standard complex Gaussian vector. The angular part of this distribution is decoupled from the norm, and we obtain effectively
\begin{equation}
    I_\mathrm{coh}(R\rangle AC)=\sum_\mathbf{z}p(\mathbf{z}) S(\rho_\mathbf{z}^{R}) \approx \left\langle S_{d_R, d_A}\right\rangle,
\end{equation}
where $\left\langle S_{d_R, d_A}\right\rangle$ denotes the average entanglement entropy of a $d_R$-dimensional subsystem within a $(d_Rd_A)$-dimensional Haar random state, with $d_R\le d_A$. The closed form of this quantity was studied in~\cite{page_average_1993}, often referred to as the Page correction of entanglement entropy. Employing Page's result, we have
\begin{equation}
\begin{aligned}
    I_\mathrm{coh}(R\rangle AC) &= \left\langle S_{d_R, d_A}\right\rangle=\sum_{m=d_A+1}^{d_Rd_A}\frac{1}{m} - \frac{d_R - 1}{2d_A}
    \\
    &= H_{d_Rd_A} - H_{d_A} - \frac{d_R - 1}{2d_A},
\end{aligned}
\end{equation}
where $H_m \equiv \sum_{x=1}^m{1}/{x}$ denotes the harmonic number for $m\in\mathbb{N}$. It follows that the harmonic numbers satisfy
\begin{equation}
\label{eq:harmonic-number}
    H_m = \log(m) + \gamma + \frac{1}{2m} \underbrace{- \frac{1}{12m^2} + \frac{1}{120 m^4} - \cdots}_{\eta_m}\,,
\end{equation}
where $\gamma=0.57721\!\cdots$ is the Euler--Mascheroni constant, and the residue $\eta_m = -\frac{1}{12m^2}+\frac{1}{120 m^4} + \cdots < 0$ increases monotonically with $m$. We thus have
\begin{equation}
\begin{aligned}
    I_\mathrm{coh}(R\rangle AC) &= \log(d_R) + \frac{1}{2d_Rd_A} - \frac{d_R}{2d_A} + (\eta_{d_Rd_A} - \eta_{d_A})\ge\log(d_R) - \frac{d_R}{2d_A}.
\end{aligned}
\end{equation}
By assuming a constant qubit overhead $n = (N_A - N_R)$, the discrepancy between the coherent information and its maximal value follows
\begin{equation}
    \log(d_R) - I_\mathrm{coh}(R\rangle AC) \le \frac{1}{2}\frac{d_R}{d_A} = \frac{1}{2}\cdot\frac{1}{2^n}.
\end{equation}
Therefore, for arbitrary $\epsilon>0$, a qubit overhead of order $n=(N_A-N_R)\sim\log(\frac{1}{\epsilon})$ suffices to make the discrepancy between the coherent information and its maximal value smaller than $\epsilon$.

Next, let us consider the operational implication of this calculation. Schumacher and Westmoreland~\cite{Schumacher_approximate_2001} proved that if $\log(d_R) - I_\mathrm{coh}(R\rangle AC) \le \epsilon$, there exists a recovery action restricted to subsystems $AC$ of the Choi state $|\Psi^{RABC}\rangle$ that produces a state $\sigma^{RA}$ with fidelity $F(\sigma^{RA}, |\Phi^{RA}\rangle) \ge 1 - \sqrt{\epsilon}$, where $|\Phi^{RA}\rangle = \frac{1}{\sqrt{d_R}}\sum_\alpha |\alpha\rangle_R\otimes |\alpha\rangle_A$ is the maximally entangled state representing the encoding. In what follows, we reconstruct that recovery protocol and show that the Schumacher--Westmoreland recovery corresponds to a conditional unitary action $U_\mathbf{z}^\dagger$ on $A$ conditioned on the measurement outcome $\mathbf{z}$ on the classical register $C$.

Recall first that the actual Choi state is
\begin{equation}
    |\Psi^{RABC}\rangle = \frac{1}{\sqrt{d_R}}\sum_{\alpha,\mathbf{z}} |\alpha\rangle_R \otimes K_\mathbf{z}|\alpha\rangle_A \otimes |\mathbf{z}\rangle_C \otimes |\mathbf{z}\rangle_B,
\end{equation}
and the reduced density matrix on $RB$ is
\begin{equation}
\begin{aligned}
    \rho^{RB} &= \frac{1}{d_R}\sum_{\alpha,\beta,\mathbf{z}}\langle\beta|K_\mathbf{z}^\dagger K_\mathbf{z}|\alpha\rangle_A\, \left(|\alpha\rangle\langle\beta|_R\otimes |\mathbf{z}\rangle\langle\mathbf{z}|_B\right)
    \\
    &=\sum_\mathbf{z} \widetilde\rho_\mathbf{z}^R\otimes |\mathbf{z}\rangle\langle\mathbf{z}|_B,
\end{aligned}
\end{equation}
where $\widetilde \rho_\mathbf{z}^R = p(\mathbf{z})\rho_\mathbf{z}^R = \mathrm{Tr}_A\left(|\widetilde\Psi_\mathbf{z}^{RA}\rangle\langle\widetilde\Psi_\mathbf{z}^{RA}|\right)$ is the unnormalized reduced conditional density matrix. Its constituents $R$ and $B$ have reduced density matrices that satisfy:
\begin{equation}
    \rho^R = \mathrm{Tr}_B\left(\rho^{RB}\right) = \frac{1}{d_R}\sum_{\alpha,\beta} |\alpha\rangle\langle\beta|_R\left\langle \beta\left| \sum_\mathbf{z} K^\dagger_\mathbf{z}K_\mathbf{z}\right|\alpha\right\rangle_A = \frac{1}{d_R}\sum_\alpha|\alpha\rangle\langle\alpha|_R = \frac{\mathbbm{1}_R}{d_R},
\end{equation}
the maximally mixed state on $R$, where we have used the normalization condition $\sum_\mathbf{z}K_\mathbf{z}^\dagger K_\mathbf{z} = \mathbbm{1}_A$, and
\begin{equation}
    \rho^B = \mathrm{Tr}_R\left(\rho^{RB}\right) = \frac{1}{d_R}\sum_{\alpha,\mathbf{z}}\langle\alpha|K_\mathbf{z}^\dagger K_\mathbf{z}|\alpha\rangle_A |\mathbf{z}\rangle\langle \mathbf{z}|_B = \sum_\mathbf{z} p(\mathbf{z})|\mathbf{z}\rangle\langle\mathbf{z}|_B,
\end{equation}
where $p(\mathbf{z})$ is defined as Eq.~\eqref{eq:recovery-z-prob}. By assuming that the coherent information is near maximal, we have the following for the relative entropy
\begin{equation}
    S(\rho^{RB}\| \rho^R\otimes \rho^B) = S(R) + S(B) - S(RB) = S(R) - I_\mathrm{coh}(R\rangle AC)\le \epsilon,
\end{equation}
where we have used the second line of Eq.~\eqref{eq:coherent-info}, and assumed that $I_\mathrm{coh}(R\rangle AC) \ge \log(d_R) - \epsilon$, for some $\epsilon > 0$. This further implies that
\begin{equation}
    F(\rho^{RB}, \rho^R \otimes \rho^B) \ge 1 - \sqrt{\epsilon},
\end{equation}
where $F$ is the fidelity defined as $F(\rho,\sigma)\equiv\mathrm{Tr}\sqrt{\sqrt{\rho}\sigma\sqrt{\rho}} = \left\|\sqrt{
\rho}\sqrt{\sigma} \right\|_1$. Uhlmann's theorem~\cite{Uhlmann} claims that the fidelity $F(\rho,\sigma) = \max_{|\psi_\rho\rangle, |\psi_\sigma\rangle} \left| \langle\psi_\rho|\psi_\sigma\rangle\right|$, where the maximization is taken over all possible purifications $|\psi_\rho\rangle$ and $|\psi_\sigma\rangle$ of $\rho$ and $\sigma$, respectively. Furthermore, the maximum is attained for some purifications $|\psi_\rho\rangle$ and $|\psi_\sigma\rangle$ in the finite-dimensional Hilbert spaces of $\rho$ and $\sigma$. In fact, for our purpose, we will construct a purification of $\rho^R\otimes \rho^B$ that has an overlap with the actual Choi state $|\Psi^{RABC}\rangle$ that exactly equals this fidelity $F(\rho^{RB}, \rho^R \otimes \rho^B)$.

Consider the auxiliary Choi state
\begin{equation}
    |\hat\Psi^{RABC}\rangle = \frac{1}{\sqrt{d_R}}\sum_{\alpha,\mathbf{z}}|\alpha\rangle_R\otimes \widetilde U_\mathbf{z}|\alpha\rangle_A\otimes |\mathbf{z}\rangle_C\otimes|\mathbf{z}\rangle_B,
\end{equation}
where we simply replaced $K_\mathbf{z}$ in $|\Psi^{RABC}\rangle$ with $\widetilde U_\mathbf{z}$, defined through the polar decomposition
\begin{equation}
    K_\mathbf{z}\mathbb{P}^{(R)} = U_\mathbf{z}\sqrt{\left(K_\mathbf{z}\mathbb{P}^{(R)}\right)^\dagger\left(K_\mathbf{z}\mathbb{P}^{(R)}\right)} = U_\mathbf{z}\sqrt{\mathbb{P}^{(R)}K_\mathbf{z}^\dagger K_\mathbf{z}\mathbb{P}^{(R)}},\quad \text{and }\widetilde U_\mathbf{z}:= \sqrt{p(\mathbf{z})}\, U_\mathbf{z}.
\end{equation}
One can then check immediately that
\begin{equation}
    \hat\rho^R = \frac{1}{d_R}\sum_{\alpha,\beta,\mathbf{z}} \langle \beta|\widetilde U_\mathbf{z}^\dagger \widetilde U_\mathbf{z}|\alpha\rangle_A |\alpha\rangle\langle\beta|_R = \frac{1}{d_R}\sum_{\alpha,\mathbf{z}} p(\mathbf{z})|\alpha\rangle\langle\alpha|_R = \frac{\mathbbm{1}_R}{d_R} = \rho^R,
\end{equation}
and
\begin{equation}
    \hat\rho^{B} = \frac{1}{d_R}\sum_{\alpha,\mathbf{z}} \langle\alpha|\widetilde U_\mathbf{z}^\dagger \widetilde U_\mathbf{z}|\alpha\rangle_A |\mathbf{z}\rangle\langle\mathbf{z}|_B = \sum_\mathbf{z}p(\mathbf{z})|\mathbf{z}\rangle\langle\mathbf{z}|_B = \rho^B,
\end{equation}
and furthermore
\begin{equation}
\begin{aligned}
    \hat\rho^{RB} &= \frac{1}{d_R} \sum_{\alpha,\beta,\mathbf{z}} \langle\beta|\widetilde U_\mathbf{z}^\dagger \widetilde U_\mathbf{z}|\alpha\rangle_A \left(|\alpha\rangle\langle\beta|_R\otimes |\mathbf{z}\rangle\langle\mathbf{z}|_B\right)
    = \frac{1}{d_R}\sum_{\alpha,\beta}\sum_\mathbf{z}p(\mathbf{z})\langle\beta|\alpha\rangle_A\left(|\alpha\rangle\langle\beta|_R\otimes |\mathbf{z}\rangle\langle\mathbf{z}|_B\right)
    \\
    &= \left(\frac{1}{d_R}\sum_{\alpha}|\alpha\rangle\langle\alpha|_R\right)\otimes \left(\sum_\mathbf{z} p(\mathbf{z})|\mathbf{z}\rangle\langle\mathbf{z}|_B\right) = \hat\rho^R\otimes\hat\rho^B = \rho^R\otimes\rho^B,
\end{aligned}
\end{equation}
and one may thus claim that $|\hat\Psi^{RABC}\rangle$ is a purification of $\rho^R\otimes\rho^B$. Now note that
\begin{equation}
    F(\rho^{RB},\rho^R\otimes \rho^B) = \mathrm{Tr}\sqrt{\sqrt{\rho^{RB}}\rho^R\otimes \rho^B\sqrt{\rho^{RB}}},
\end{equation}
where we have
\begin{equation}
\begin{aligned}
    \sqrt{\sqrt{\rho^{RB}}\rho^R\otimes \rho^B\sqrt{\rho^{RB}}} &= \sqrt{ \sum_\mathbf{z} \frac{p(\mathbf{z})}{d_R}{\widetilde\rho_\mathbf{z}^R}\otimes |\mathbf{z}\rangle\langle\mathbf{z}|_B} = \sum_\mathbf{z} \sqrt{\frac{p(\mathbf{z})}{d_R}}\sqrt{{\widetilde\rho_\mathbf{z}^R}}\otimes |\mathbf{z}\rangle\langle\mathbf{z}|_B,
\end{aligned}
\end{equation}
and thus
\begin{equation}
\label{eq:fidelity-term}
\begin{aligned}
    F(\rho^{RB},\rho^R\otimes \rho^B) &= \mathrm{Tr}\sqrt{\sqrt{\rho^{RB}}\rho^R\otimes \rho^B\sqrt{\rho^{RB}}}= \sum_\mathbf{z}\sqrt{\frac{p(\mathbf{z})}{d_R}}\mathrm{Tr}\sqrt{\widetilde\rho_\mathbf{z}^{R}},
\end{aligned}
\end{equation}
while the overlap
\begin{equation}
\label{eq:overlap-term}
\begin{aligned}
    \langle\hat\Psi^{RABC}|\Psi^{RABC}\rangle &= \frac{1}{d_R}\sum_{\alpha,\mathbf{z}} \langle \alpha | \widetilde U_\mathbf{z}^\dagger K_\mathbf{z}|\alpha\rangle_A = \sum_\mathbf{z}\mathrm{Tr}\left( \pi_R \widetilde U_\mathbf{z}^\dagger \underbrace{K_\mathbf{z}\mathbb{P}^{(R)}}_{U_\mathbf{z}\sqrt{\mathbb{P}^{(R)} K_\mathbf{z}^\dagger K_\mathbf{z}\mathbb{P}^{(R)}}    
    }\right)
    \\
    &= \sum_{\mathbf{z}}\mathrm{Tr}\left( \pi_R \sqrt{p(\mathbf{z})}  \sqrt{\mathbb{P}^{(R)} K^\dagger_\mathbf{z} K_\mathbf{z}\mathbb{P}^{(R)}}\right)
    = \sum_\mathbf{z}\sqrt{\frac{p(\mathbf{z})}{d_R}}\mathrm{Tr}\left( \frac{1}{\sqrt{d_R}} \sqrt{\mathbb{P}^{(R)} K^\dagger_\mathbf{z} K_\mathbf{z}\mathbb{P}^{(R)}} \right),
\end{aligned}
\end{equation}
now note that
\begin{equation}
\begin{aligned}
\label{eq:elementwise-id-transpose}
    \frac{1}{d_R}\langle \alpha|\mathbb{P}^{(R)} K_\mathbf{z}^\dagger K_\mathbf{z}\mathbb{P}^{(R)}|\beta\rangle_A & = \frac{1}{d_R}\langle \alpha|K_\mathbf{z}^\dagger K_\mathbf{z}|\beta\rangle_A
    \\
     &= \langle\beta|\widetilde\rho_\mathbf{z}^R|\alpha\rangle_R,
\end{aligned}
\end{equation}
where in the second line we used $\langle\alpha|\widetilde\rho_\mathbf{z}^{R}|\beta\rangle_R = \left\langle\alpha\left| \frac{1}{d_R}\sum_{\alpha',\beta'}\langle\beta'|K_\mathbf{z}^\dagger K_\mathbf{z}|\alpha'\rangle_A|\alpha'\rangle\langle\beta'|_R \right|\beta\right\rangle_R = \frac{1}{d_R}\langle\beta|K_\mathbf{z}^\dagger K_\mathbf{z}|\alpha\rangle_A$. Eq.~\eqref{eq:elementwise-id-transpose} suggests that the matrix representation of $\widetilde\rho_\mathbf{z}^R$ with respect to the basis $\{|\alpha\rangle_R\}$ and the matrix representation of $\frac{1}{d_R}\mathbb{P}^{(R)} K_\mathbf{z}^\dagger K_\mathbf{z}\mathbb{P}^{(R)}$ restricted to the subspace spanned by basis $\{|\alpha\rangle_A\}$ is related by a transposition. Therefore, combined with Eqs.~\eqref{eq:fidelity-term} and~\eqref{eq:overlap-term}, we have
\begin{equation}
    F(\rho^{RB},\rho^R\otimes \rho^B) = \langle \hat\Psi^{RABC}|\Psi^{RABC}\rangle,
\end{equation}
namely that $|\Psi^{RABC}\rangle$ and $|\hat\Psi^{RABC}\rangle$ are purifications of $\rho^{RB}$ and $\rho^R\otimes \rho^B$, respectively, that saturate the Uhlmann's theorem exactly.

The next step follows from the Schumacher--Westmoreland argument~\cite{Schumacher_approximate_2001} that if we focus on the auxiliary
\begin{equation}
    |\hat\Psi^{RABC}\rangle = \frac{1}{\sqrt{d_R}}\sum_{\alpha,\mathbf{z}} \sqrt{p(\mathbf{z})}|\alpha\rangle_R\otimes U_\mathbf{z}|\alpha\rangle_A\otimes|\mathbf{z}\rangle_C\otimes |\mathbf{z}\rangle_B,
\end{equation}
we could restore the exact initial state $|\Phi^{RA}\rangle$ by first measuring the classical register $C$
\begin{equation}
    \Pi_\mathbf{z} = \mathbbm{1}_R\otimes \mathbbm{1}_A\otimes |\mathbf{z}\rangle\langle\mathbf{z}|_C,
\end{equation}
and upon obtaining an outcome $\mathbf{z}$, the (unnormalized) conditional state on $RA$ is
\begin{equation}
    |\widetilde\Xi_\mathbf{z}\rangle = \sqrt{\frac{p(\mathbf{z})}{d_R}} \sum_\alpha |\alpha\rangle_R\otimes U_\mathbf{z}|\alpha\rangle_A,
\end{equation}
and we follow up with the conditional unitary action $U_\mathbf{z}^\dagger$ on $A$. We therefore have
\begin{equation}
    (\mathbbm{1}_R\otimes U_\mathbf{z}^\dagger)|\widetilde\Xi_\mathbf{z}\rangle = \sqrt{\frac{p(\mathbf{z})}{d_R}} \sum_\alpha |\alpha\rangle_R \otimes U_\mathbf{z}^\dagger U_\mathbf{z}|\alpha\rangle_A = \sqrt{p(\mathbf{z})} |\Phi^{RA}\rangle,
\end{equation}
which guarantees perfect fidelity with $|\Phi^{RA}\rangle$ for every possible outcome $\mathbf{z}$. We denote this action of first measuring $\mathbf{z}$ on $C$ followed by a conditional unitary action $U_\mathbf{z}^\dagger$ on $A$ by $\mathcal{R}^{AC\rightarrow A}$. Now we apply exactly the same action $\mathcal{R}^{AC\rightarrow A}$ to the actual Choi state $|\Psi^{RABC}\rangle$, where $B$ is traced out, thus obtaining a state $\sigma^{RA}$. By the quantum information processing inequality, fidelity is non-decreasing upon any quantum operations, therefore
\begin{equation}
\begin{aligned}
    F(\sigma^{RA}, |\Phi^{RA}\rangle) &\ge F(|\Psi^{RABC}\rangle, |\hat\Psi^{RABC}\rangle)
    \\
    &= \langle \hat\Psi^{RABC}|\Psi^{RABC}\rangle = F(\rho^{RB},\rho^R\otimes \rho^B)
    \\
    &\ge 1 - \sqrt{\epsilon}.
\end{aligned}
\end{equation}
Thus, we may conclude that there exists a recovery protocol by applying a conditional unitary $U_\mathbf{z}^\dagger$ upon measuring $\mathbf{z}$ on the bath that produces a recovery $\sigma^{RA}$ with at least $(1-\sqrt{\epsilon})$ fidelity with $|\Phi^{RA}\rangle$.

\paragraph{Quantum side information} Similarly, for the Hayden--Preskill setting shown in Fig.~\ref{fig:recovery}(b), one could also compute the coherent information
\begin{equation}
    I_\mathrm{coh}(R\rangle AQ) = S(AQ)- S(RAQ) = S(RB) - S(B),
\end{equation}
here, we operate under a slight abuse of notation by assigning $R$ and $Q$ to be associated with the input systems and letting $A$ and $B$ be the newly emitted radiation and the remaining black hole, respectively. Thus, $d = d_Rd_Q = d_Ad_B$. 
We make use of the inequalities: $S(RB)\ge -\log(\mathrm{Tr}\left(\rho_{RB}^{2}\right))$ (monotonicity of Rényi entropies), and $S(B)\le \log(d_B)$, and we arrive at the coherent information averaged over Haar random $U_{AB}$
\begin{equation}
\begin{aligned}
    \mathbb{E}\left[I_\mathrm{coh}(R\rangle AQ)\right] &\ge \mathbb{E}\left[-\log(\mathrm{Tr}\left(\rho_{RB}^2\right))\right] - \log(d_B)
    \\
    &\ge -\log(\mathbb{E}\left[ \mathrm{Tr}\left(\rho_{RB}^2\right) \right]) - \log(d_B)
    \\
    &= -\log(\frac{d^2 + d_R^2d_B^2 - d_R^2 - d_B^2}{d_Rd_B(d^2-1)}) - \log(d_B)
    \\
    &= \log(d_R) - \log(1 + \frac{(d_R^2-1)(d_B^2-1)}{d^2-1}).
\end{aligned}
\end{equation}
In the second line, we make use of the convexity of $-\log(\cdot)$. In the third line, we use Weingarten calculus to evaluate the purity. Furthermore, for finite $d$ with $d_A, d_B, d_R, d_Q > 1$, we have
\begin{equation}
    \frac{(d_R^2 - 1)(d_B^2-1)}{d_A^2 d_B^2-1} = \frac{(d_R^2 - 1)(d_B^2-1)}{d_A^2( d_B^2-1) + (d_A^2-1)} \le \frac{d_R^2-1}{d_A^2} \le \frac{d_R^2}{d_A^2},
\end{equation}
recall that $n = (N_A - N_R)$ is the qubit overhead and we may thus substitute and conclude
\begin{equation}
    \mathbb{E}\left[I_\mathrm{coh}(R\rangle AQ)\right] \ge \log(d_R) - \log(1+\frac{d_R^2}{d_A^2})\ge\log(d_R) - \frac{d_R^2}{d_A^2} = \log(d_R) - 2^{-2n}.
\end{equation}
Therefore, one could similarly conclude that for arbitrary $\epsilon > 0$, a constant qubit overhead $n = (N_A - N_R)\sim\log(\frac{1}{\epsilon})$ suffices to make the typical discrepancy between the coherent information and its maximal value $\epsilon$ small. %

\end{document}